\newcommand{\xRightarrow}[2][]{\ext@arrow 0359\Rightarrowfill@{#1}{#2}}
\begin{document}


\newcommand{\overto}[1]{\stackrel{#1}{%
\overrightarrow{\smash{\,{\phantom{#1}}\,}}}}

\newcommand{\noverto}[1]{\mathrel{\hspace{.5em}\not\hspace{-.5em}\overto{#1\;}}}
\newcommand{\Rule}[2]{                                  
\frac{\raisebox{.7ex}{\normalsize{$#1$}}}
{\raisebox{-1.0ex}{\normalsize{$#2$}}}}
\newcommand{\lm}{\,\setlength{\unitlength}{1ex}\begin{picture}(1,1.75)\put(0,0){\line(1,0){1}}\put(0,0){\line(0,1){1.75}}\put(0.45,0){\line(0,1){1.75}}\end{picture}\,}
\newcommand{\llm}{\mathrel{\hspace{1mm}\rule[-0.7mm]{2mm}{.1mm}\hspace{-2mm}\parallel}}
\newcommand{\lc}{\mathrel{\hspace{1mm}\rule[-0.9mm]{2mm}{.1mm}\hspace{-2mm}|}}
\newcommand{\llc}{\mathrel{\hspace{1mm}\rule[-0.7mm]{1mm}{.1mm}\hspace{-1mm}|\hspace{0.5mm}}}

\newcommand{\conn}{\!\rightsquigarrow\!}
\newcommand{\nconn}{\!\mathrel{\hspace{.1em}\not\hspace{-.1em}\rightsquigarrow}\!}
\newcommand{\pconn}{\dashrightarrow}
\newcommand{\npconn}{\mathrel{\hspace{.1em}\not\hspace{-.1em}\dashrightarrow}}
\newcommand{\oc}{[\![}
\newcommand{\cc}{]\!]}
\newcommand{\deff}{\stackrel{\mathrm{\it def}}{=}}
\newcommand{\less}{\preccurlyeq}

\def\powerset{{I\!\!P}}
\def\nat{{I\!\!N}}
\def\dat{{I\!\!D}}
\def\snd{{\it snd}}
\def\rcv{{\it rcv}}
\def\nsnd{{\it nsnd}}
\def\nrcv{{\it nrcv}}
\def\req{{\it req}}
\def\rep{{\it rep}}
\def\sense{{\it sense}}
\def\init{{\it init}}
\def\submit{{\it submit}}
\def\deliver{{\it deliver}}
\def\data{{\it data}}
\def\rec{{\it rec}}
\def\Msg{{\it Msg}}
\title{Reliable Restricted Process Theory}


\author{Fatemeh Ghassemi\\
University of Tehran,\\ Tehran, Iran,\\
fghassemi@ut.ac.ir
\and Wan Fokkink\\
Vrije Universiteit Amsterdam,\\ Amsterdam, The Netherlands\\w.j.fokkink@vu.nl
}

\maketitle

\runninghead{F. Ghassemi, W. Fokkink}{Reliable Restricted Process Theory}

\begin{abstract}
Malfunctions of a mobile ad hoc network (MANET) protocol caused by a conceptual mistake in the protocol design, rather than unreliable communication, can often be detected only by considering communication among the nodes in the network to be reliable. In Restricted Broadcast Process Theory, which was developed for the specification and verification of MANET protocols, the communication operator is lossy. Replacing unreliable with reliable communication invalidates existing results for this process theory. We examine the effects of this adaptation on the semantics of the framework with regard to the non-blocking property of communication in MANETs, the notion of behavioral equivalence relation and its axiomatization. We illustrate the applicability of our framework through a simple routing protocol. To prove its correctness, we introduce a novel proof process, based on a precongruence relation.
\end{abstract}

\begin{keywords}
Mobile ad hoc network, restricted broadcast, process algebra, behavioral congruence, refinement.
\end{keywords}

\section{Introduction}
The applicability of wireless communication is growing rapidly in areas like home networks and satellite transmissions,
due to their broadcasting nature. Mobile ad hoc networks (MANETs) consist of several portable hosts
with no pre-existing infrastructure, such as routers in wired networks or access points in managed (infrastructure) wireless
networks. The design of MANET protocols is complicated, because due to mobility of nodes the topology of communication links is dynamic.
Important MANET protocols such as the Ad hoc On Demand Distance Vector (AODV) routing protocol \cite{AODV} contained flaws in their original design and have been revised accordingly.
Formal methods can be applied in the early phases of the protocol development to analyze and capture conceptual errors before their implementation. 
For instance, some errors in the design of AODV were found in \cite{spin3,LoopNamj,fehnker2013process,wrebeca} using formal techniques.

There are numerous applications of existing formal frameworks such as SPIN \cite{spin1,spin2,spin3} and UPPAAL~\cite{spin2,uppaal1,uppaal2,uppaal3,Uppaal4,GlabbeekTool} for the analysis of MANET protocols. Lack of support for compositional modeling and arbitrary topology changes motivates developing a new approach, tailored to the domain of MANETs, with a primitive for local broadcast and supporting the verification of MANET protocols against changes of the underlying topology. The tailored formal modeling framework should provide some form of wireless communication which varies at the different layers of the Open Systems Interconnection (OSI) model: physical, data link, network, transport, session, presentation, and application. 
For instance, the data link layer is responsible for transferring data across the physical link and handling conflicts due to simultaneous accesses to the shared media. In contrast, communication at the network layer provides point-to-point communication between two nodes that are not directly connected through appropriate routing of messages by using the communication service of the data link layer.
Most frameworks for the formal analysis of MANET protocols, such as ~\cite{CBSNanz,CMAN,Godskesen08,CMN,bKlaim,w-cal,GodskesenN09,CDST,GlabbeekAWN,wrebeca}, focus on protocols above the data link layer; hence they support the core services of this layer, which means that local broadcast is the primitive means of communication. Wireless communication at this layer is \emph{non-blocking}, i.e., the sender broadcasts irrespective of the readiness of its receivers, and is \emph{asynchronous}, i.e., received packets are buffered at the receiver. The data link layer of a node processes the packet if it is an intended destination.  While a node is busy processing a message, it can still receive messages, buffer them and process them later. However, if two different nodes broadcast simultaneously with a common node in their range, the latter node cannot receive both messages and drops one of them, which is called the hidden node problem. We say that wireless communication is \emph{reliable} if the intended receivers successfully receive the packet. In other words, message delivery is guaranteed to all connected neighbors.

Although lossy communication is an integral part of MANETs, mimicking it faithfully in a formal framework can hamper the formal analysis of MANET protocols. To obtain a deeper understanding of a malfunctioning of such a protocol due to a conceptual mistakes in its design rather than unreliable communication, it may be helpful to consider communication reliable, meaning that the possibility of the hidden node problem is omitted from the framework. Therefore we introduced the process algebra {\it Reliable Restricted Broadcast Process Theory (RRBPT)} in \cite{FMSD16}, to perform model checking of MANET protocols in a setting where communication is reliable. It is a variant of Restricted Broadcast Process Theory {\it (RBPT)} that we introduced previously in \cite{Fatemeh08} for the modeling and analysis of protocols above the data link layer. The underlying semantic model of {\it RBPT}, a so-called constrained labeled transition system (CLTS), implicitly considers mobility of nodes with the novel notion of a network constraint, which abstractly defines a set of topologies: those satisfying the given connectivity constraints. The transitions of a CLTS are annotated with appropriate network constraints to restrict the behavior to MANETs with a topology of the specified ones. {\it RBPT} was extended with a set of auxiliary operators to reason about MANETs by equational reasoning, so-called Computed Network Process Theory ({\it CNT}) \cite{FatemehFI10}. We provided a sound and complete axiomatization for {\it CNT} terms with finite-state behaviors, modulo so-called rooted branching computed network bisimilarity. This axiomatization enables linearization of processes at the syntactic level to take advantage of symbolic verification~\cite{grootefoci,WanPangPolFoci}, especially when the network is composed of similar nodes \cite{GWParallel,FatemehTCS}.

Somewhat surprisingly, all these results do not carry over in a
straightforward fashion from {\it RBPT} to {\it RRBPT}. To put the
model checking approach presented in \cite{FMSD16} on a firm basis,
the current paper develops the formal foundations for {\it RRBPT}
and modifies the core of {\it CNT}. In a lossy setting, the
non-blocking property of local broadcast communication is an
immediate consequence of the rule ${\it Par}$ and its counterpart
for the parallel composition: $\Rule{t_1\overto{a}t_1'}{t_1\parallel
t_2\overto{a}t_1'\parallel t_2}$, which expresses that if a node is
not ready to participate in a communication, then we can assume that
either it was disconnected from the sender or it was connected but
has lost the message. However, in the reliable setting, to guarantee
the  non-blocking property, nodes should always be input-enabled.
{\it RRBPT} provides a sensing operator which allows to change the
control flow of a process depending on the status of node
connectivity with other nodes. The input-enabledness feature is
ensured through the {\it RRBPT} operational rules, where the main
difference between {\it RRBPT} and {\it RBPT} is: in {\it RRBPT},
nodes lose a communication only when they are disconnected and are
always input-enabled. We recap challenges of bringing
input-enabledness feature in the semantics of {\it RRBPT} in the
presence of the sensing operator. Furthermore, the behavioral
equivalence relation of {\it CNT} setting is not a congruence with
respect to parallel composition anymore. To support the desired
distinguishing power, we provide a new bisimulation relation which
guarantees the congruence property for MANETs. {\it RRBPT} can be
extended in the same way as {\it RBPT} with computed network terms
and the auxiliary operators \emph{left merge} ($\lm$) and
\emph{communication merge} ($\mid$) to provide a sound and complete
axiomatization for the parallel composition. However, the
input-enabledness feature and the new sensing operator require new
auxiliary operators to assist their axiomatization. To this aim, we
discuss the appropriate axioms of {\it RRBPT}. We utilize our axioms
to analyze the correctness of protocols at the syntactic level. To
this aim, we facilitate the specification of the protocol behaviors
preconditioned to multihop constraints and then introduce a new
notion of refinement among protocol implementations and their
specifications. We demonstrate the applicability of our framework by
analyzing and proving the correctness of a simple routing protocol
inspired by the AODV protocol.



This paper is organized as follows. Sections \ref{sec::CLTS} and \ref{sec::RRBPTsemantics} introduce our semantic model and explain how it is helpful in giving semantics to reliable communication. Section \ref{sec::syntaxsemantics} introduces the syntax of {\it RRBPT}. Sections \ref{sec::reliablesim} and \ref{sec::raxiom} provide the appropriate notion of behavioral equivalence and axioms in the reliable setting, respectively. We demonstrate the applicability of our new framework by analyzing a simple routing protocol in Section \ref{sec::casestudy}.
We review and compare the related process algebraic frameworks in depth in Section \ref{sec::concreteRelatedWork} before concluding the paper.

\section{\label{sec::CLTS}Constrained Labeled Transition Systems}
Let ${\it Loc}$ denote a set of network addresses, ranged over by $\ell$. Viewing a network topology as a directed graph, it can be defined as $\gamma:{\it Loc}\rightarrow \powerset({\it Loc})$, where $\gamma(A)$ expresses the set of nodes that are directly connected to $A$, and hence, can receive message from $A$.  A network constraint $\mathcal{C}$ is a set
of connectivity pairs $\conn\,: {\it Loc}\times {\it Loc}$ and
disconnectivity pairs $\nconn\,: {\it Loc}\times {\it Loc}$. In this
setting, non-existence of (dis)connectivity information between two
addresses implies lack of information about this link (which can
e.g.\ be helpful when the link has no effect on the evolution of the
network). For instance, $B\conn A$ denotes that $A$ is connected to
$B$ directly and consequently $A$ can receive data sent by $B$ as before,
while $B\nconn A$ denotes that $A$ is not connected to $B$ directly
and consequently cannot receive any message from $B$. The direction of an arrow shows the direction of information flow. We write
$\{B\conn A,C,~B\nconn D,E\}$ instead of $\{B\conn A,~B\conn
C,~B\nconn D,~B\nconn E\}$. The set ${\it Loc}$ is
extended with the unknown address $?$ to represent the
address of a node which is still not known or concealed from an
external observer. For instance, the leader
address of a node can be initialized to this value.
Furthermore, to define the semantics of communicating nodes in terms of
restrictions over the topology in a compositional way, the semantics of
receive actions can be defined through an unknown sender, which will be
replaced by a known address when the receive actions are composed with the corresponding send
action at a specific node (see Section \ref{sec::RRBPTsemantics}).

A network
constraint $\mathcal{C}$ is said to be \emph{well-formed} if $\forall \ell,\ell'\in{\it
Loc}~(\ell\conn\ell'\not\in\mathcal{C}\vee\ell\nconn\ell'\not\in\mathcal{C})
$. 
Let $\mathbb{C}^v({\it Loc})$ denote the
set of well-formed network constraints that can be defined over
the network addresses in ${\it Loc}$. We define an ordering on network constraints. We say that $\mathcal{C}_1 \preccurlyeq \mathcal{C}_2$ iff $\mathcal{C}_2\subseteq \mathcal{C}_1$ or $\exists\, \ell\in {\it Loc}\,(\mathcal{C}_2[\ell/?]\subseteq \mathcal{C}_1)$, where $d[d_1/d_2]$
denotes the substitution of $d_1$ for $d_2$ in $d$; this can
be extended to process terms. For instance, $\{B\conn A\}\preccurlyeq \{?\conn A\}$ and $\{B\conn A,\,B\conn C\}\preccurlyeq \{B\conn A\}$. Each well-formed network
constraint $\mathcal{C}$ represents the set of network topologies
that satisfy the (dis)connectivity pairs in $\mathcal{C}$, i.e., $\Gamma(\mathcal{C})=\{
\gamma \mid \mathcal{C}_{\Gamma}(\gamma) \preccurlyeq \mathcal{C} \}$
where $\mathcal{C}_{\Gamma}(\gamma)=\{\ell\conn\ell'\mid
\ell'\in\gamma(\ell)\}\cup\{\ell\nconn\ell'\mid
\ell'\not\in\gamma(\ell)\}$ extracts all one-hop (dis)connectivity
information from $\gamma$. So the empty network constraint $\{\}$ still
denotes all possible topologies over ${\it Loc}$. The negation
$\neg\,\mathcal{C}$ of network constraint $\mathcal{C}$ is obtained
by negating all its (dis)connectivity pairs. Clearly, if
$\mathcal{C}$ is well-formed then so is $\neg\,\mathcal{C}$.

\textit{Constrained labeled transition system}s (CLTSs) provide a semantic model for the operational
behavior of MANETs. Let ${\it Msg}$ denote a
set of messages communicated over a network and ranged over by
$\mathfrak{m}$. Let ${\it Act}$ be the network send and receive
actions with signatures $\nsnd:{\it Msg}\times {\it Loc}$
and $\nrcv:{\it Msg}$, respectively. The send action
$\nsnd(\mathfrak{m},\ell)$ denotes that the message $\mathfrak{m}$
is transmitted from a node with the address $\ell$, while the
receive action $\nrcv(\mathfrak{m})$ denotes that the message
$\mathfrak{m}$ is ready to be received. Let ${\it Act}_\tau={\it
Act}\cup\{\tau\}$, ranged over by $\eta$.

\begin{definition}
A \emph{CLTS} is a tuple $\langle S,\Lambda,\rightarrow,s_0\rangle$,
with $S$ a set of states,  $\Lambda\subseteq \mathbb{C}^v({\it Loc})\times {\it
Act}_\tau$, $\rightarrow\,\subseteq S\times \Lambda\times S$ a
transition relation, and $s_0\in S$ the initial state. A transition
$(s,(\mathcal{C},\eta),s')\in\,\rightarrow$ is denoted by
$s\overto{(\mathcal{C},\eta)}s'$.
\end{definition}

Generally speaking, the transition $s\overto{(\mathcal{C},\eta)}s'$ expresses that a MANET protocol in
state $s$ with an underlying topology $\gamma\in\Gamma(\mathcal{C})$ can
perform action $\eta$ to evolve to state $s'$.

The semantics of broadcast communication is defined to be reliable if and if only the nodes that are connected to the sender, as defined by its corresponding network constraint, receive the message. 
We remark that the status of the links from the receivers to the sender or between two arbitrary receivers are not of importance and hence, they are abstracted away. Therefore, by constructing such network constraints through the semantic rules, reliable communication is brought into our framework.

\section{Syntax of {\it RRBPT}}\label{sec::syntaxsemantics}
Let $\mathcal{A}$ denotes a countably infinite set of
process names which are used as recursion variables in recursive
specifications. Besides \textit{network send} and \textit{receive} actions, i.e., $\nsnd(\mathfrak{m},\ell)$ and $\nrcv(\mathfrak{m})$, we assume \textit{protocol send} and \textit{receive} actions, denoted by
$\snd,\rcv:{\it Msg}$, i.e., parametrized by messages.  Furthermore, let
${\it IAct}$ be a set of internal actions. The syntax of
{\it RRBPT} is given by the following grammar:
\[ \begin{array}{l}  t:: = ~  0 ~~ |~~ \alpha.t  ~~ |~~ t+t~~|~~ \oc  t\cc_\ell~~ |~~ t\parallel
t~~|~~\mathfrak{A},\,\mathfrak{A}\deff t ~~ | ~~ {\it sense}(\ell,t,t) ~~|~~(\nu\ell) t  ~~|~~\tau_m(t)~~|~~\partial_m(t)
\end{array} \]
The deadlock process is modeled by $0$. The process $\alpha.t$
performs action $\alpha$ and then behaves as process $t$, where
$\alpha$ is either an internal action or a protocol send/receive
action $\snd(\mathfrak{m})/\rcv(\mathfrak{m})$. 
Internal actions are
useful in modeling the interactions of a process with other
applications running on the same node. Protocol send/receive actions
specify the interaction of a process with its data-link layer
protocols: these protocols are responsible for transferring messages
reliably throughout the network. These actions are turned into their
corresponding network ones via the semantics (see Section
\ref{sec::RRBPTsemantics}). 
The process $t_1+t_1$ behaves non-deterministically as $t_1$ or
$t_2$. The simplest form of a MANET is a node, represented by the network
deployment operator $\oc t\cc_\ell$, denoting process $t$ deployed
on a node with the known network address $\ell\neq\,?$ (where $?$
denotes the unknown address). A MANET can be
composed by putting MANETs in parallel using $\parallel$; the nodes communicate with
each other by reliable restricted broadcast. A process name is
specified by a recursive equation $\mathfrak{A}\deff t$ where $\mathfrak{A}\in\mathcal{A}$ is a name.

MANET protocols may behave based on the (non-)existence of a link.
A neighbor discovery service can be implemented at the data link layer,
by periodically sending \textit{hello} messages and acknowledging
such messages received from a neighbor. The sensing operator ${\it sense}(\ell',t_1,t_2)$
examines the status of the link from the node, say with address $\ell$, that the sensing is executed
on to the node with the address $\ell'$;  in case of its existence it behaves as $t_1$, and otherwise as
$t_2$. For instance, the term $\oc {\it
sense}(\ell',t_1,t_2)\cc_\ell$ examines the existence of the link
$\ell\conn \ell'$, and then behaves accordingly. As a running example, $P \deff \sense(B,\snd(\data_B).P,0)$ denotes a process that recursively broadcasts a data
message $\data_B$ as long as it is connected to $B$; 
and $Q\deff 
\rcv(\data_B).\deliver.Q$
a process that recursively receives a data message $\data$ and then
the internal action $\deliver$ upon successful receipt of data. 
The network process $\oc P\cc_A\parallel
\oc Q\cc_B$ specifies an ad hoc network composed of two nodes with the
network addresses $A$ and $B$ deploying processes $P$ and $Q$,
respectively.

The hide operator
$(\nu\ell)t$ conceals the address $\ell$ in the process $t$, by
renaming this address to $?$ in network send/receive actions. For
each message $m\in {\it Msg}$, the abstraction operator $\tau_m(t)$
renames network send/receive actions over messages of type $m$ to
$\tau$, and the encapsulation operator $\partial_m(t)$ forbids
receiving messages of type $m$. Let $\tau_{\{m_1,\ldots,m_n\}}(t)$
and $\partial_{\{m_1,\ldots,m_n\}}(t)$ denote
$\tau_{m_1}(\ldots(\tau_{m_n}(t))\ldots)$ and
$\partial_{m_1}(\ldots(\partial_{m_n}(t))\ldots)$.

For example, {\small $\tau_{\it Msg}(\partial_{\it Msg}(\oc
P\cc_A\parallel  \oc Q\cc_B))$} specifies an isolated MANET that
cannot receive any message from the environment, while its
communications (i.e., send actions) are abstracted away.

Terms should be grammatically well-defined, meaning that processes
deployed at a network address are only defined by action prefix,
choice, sense and process names. Furthermore, the application of action prefix,
choice, sense and process names is restricted to the deployment operator.

\section{\label{sec::RRBPTsemantics}Semantics of {\it RRBPT}}
The operational rules in
Table 
\ref{Tab::RRBPTsemanticsII}
induce a CLTS with transitions of the form $t\overto{\beta} t'$,
where $\beta\in\mathbb{C}^v({\it Loc})\times{\it Act}_\tau$ where ${\it Act}=\{{\it NAct}\cup{\it
IAct}\}$, ${\it NAct}$ denotes the set of network
send and receive actions, and ${\it IAct}$ the set of internal
actions ranged over by $i$. Assume that $\alpha$ denotes actions of the form $\{\rcv(\mathfrak{m}),\snd(\mathfrak{m})\,\mid\, \mathfrak{m}\in{\it Msg}\}$. In these rules,
$t\noverto{(\mathcal{C},~\nrcv(\mathfrak{m}))}$ denotes that there
exists no $t'$ such that
$t\overto{(\mathcal{C}',~\nrcv(\mathfrak{m}))}t'$ and
$\mathcal{C}'\preccurlyeq \mathcal{C}$. The symmetric counterparts of the rules $\it Choice$, $\it Bro$, and $\it Par$ hold, but have been omitted for the brevity. 

\begin{table*}[htbp]\caption{Semantics of {\it RRBPT} operators.}\label{Tab::RRBPTsemanticsII}
\centering\begin{tabular}{c}
$\Rule{ t_1\overto{(\mathcal{C},\alpha)}
t_1'}{ {\it sense}(\ell,t_1,t_2)\overto{(\{\ell\,\conn\,
~?\}\cup\mathcal{C},\alpha)}
t_1'}:{\it Sen}_1$~~~~$\Rule{~}{\alpha.t\overto{(\{\},\alpha)}t}:{\it
Prefix}$\vspace{4mm}\\
$\Rule{
t_2\overto{(\mathcal{C},\alpha)} t_2' }{  {\it
sense}(\ell,t_1,t_2)\overto{(\{\ell\,\nconn\,
~?\}\cup\mathcal{C},\alpha)} t_2'}:{\it Sen}_2$~~~~$\Rule{t\overto{\beta}
t'}{(\nu\ell)t\overto{\beta[?/\ell]} (\nu\ell)t'}:{\it Hid}$\vspace{4mm}\\
$\Rule{t\overto{(\mathcal{C},~\snd(\mathfrak{m}))}t'}{\oc
t\cc_\ell\overto{(\mathcal{C}[\ell/?],~\nsnd(\mathfrak{m},\ell))} \oc
t'\cc_\ell}:{\it
Snd}$~~~
$\Rule{t\overto{(\mathcal{C},~\rcv(\mathfrak{m}))}t'}{\oc t\cc_\ell
\overto{(\mathcal{C}[\ell/?]\cup\{?\,\conn\,
\ell\},~\nrcv(\mathfrak{m}))} \oc t'\cc_\ell}:{\it
Rcv}_1$\vspace{4mm}\cr
$\Rule{
\oc t\cc_\ell \noverto{(\mathcal{C},~\nrcv(\mathfrak{m}))}
~~~\nexists \mathcal{C}'(\oc t\cc_\ell \noverto{(\mathcal{C}',~\nrcv(\mathfrak{m})}\,\wedge \, \mathcal{C}\preccurlyeq \mathcal{C}')}{\oc
t\cc_\ell \overto{(\mathcal{C},~\nrcv(\mathfrak{m}))} \oc t\cc_\ell}:{\it
Rcv}_2$\vspace{4mm}\cr
$\Rule{t_1\overto{\beta}t_1'}{t_1+t_2\overto{\beta}t_1'}:{\it Choice}$~~~            $\Rule{t_1\overto{(\mathcal{C}_1,~\nsnd(\mathfrak{m},\ell))}t_1' ~~
t_{2i}
\overto{(\mathcal{C}_2,~\nrcv(\mathfrak{m}))}t_{2}'}{t_1
\parallel t_2\overto{(\mathcal{C}_1 \cup\,\mathcal{C}_2[\ell/?],~\nsnd(\mathfrak{m},\ell))}
t_1'\parallel t_2'}:{\it Bro}$
\vspace{4mm}\cr
$\Rule{t\overto{\beta} t'}{\mathfrak{A}\overto{\beta} t'}:~{\it
Inv},~\mathfrak{A}\deff t$~~~$\Rule{t_1\overto{(\mathcal{C}_1,\nrcv(\mathfrak{m}))}t_1' ~~ t_2 \overto{(\mathcal{C}_2,\nrcv(\mathfrak{m}))}t_2'}{t_1 \parallel
t_2\overto{(\mathcal{C}_1\cup \mathcal{C}_2,\nrcv(\mathfrak{m}))}t_1'\parallel t_2'}:{\it Recv}$\vspace{4mm}\cr
$\Rule{t\overto{(\mathcal{C},\eta)}t'}{t
\overto{(\mathcal{C}',\eta)} t'}:~{\it Exe},~ \mathcal{C}'  \preccurlyeq \mathcal{C}$~~~$\Rule{t\overto{(\mathcal{C},~i)}t'}{\oc
t\cc_\ell\overto{(\mathcal{C},~i)} \oc
t'\cc_\ell}:{\it
Int}$~~~~~~$\Rule{t_1\overto{(\mathcal{C},\eta)}
t_1'~~~\eta\in{\it IAct}\cup\{\tau\}}{t_1
\parallel t_2\overto{(\mathcal{C},\eta)} t_1'\parallel
t_{2}}:{\it
Par}$\vspace{4mm}\cr
$\Rule{t\overto{(\mathcal{C},\eta)}t'~~~\eta\neq\nrcv(m) }{\partial_{m}(t)\overto{(\mathcal{C},\eta)}\partial_{m}(t')}:{\it
Encap}$~~~~~
$\Rule{t\overto{(\mathcal{C},\eta)} t'}{\tau_{m}(t)\overto{(\mathcal{C},\tau_m(\eta))}\tau_{m}(t')}:{\it
Abs}$
\end{tabular}\end{table*}

Rule $\it Prefix$ assigns an empty network constraint to each prefixed action, which may be accumulated by further constraints through application of rules ${\it Rcv}_1$ or ${\it Sen}_{1,2}$. 
The rule ${\it Int}$ indicates that a node progresses when the deployed process on the node performs an internal action.
Interaction between the process $t$ and its data-link layer is
specified by the rules ${\it Snd}$ and ${\it Rcv}_{1,2}$: when $t$ broadcasts a message, it
is delivered to the nodes in its transmission range, disregarding
their readiness. ${\it Rcv}_1$ specifies that a process $t$ with an
enabled receive action can perform it successfully if it has a link
to a sender (not currently known).  
If a node does not have any
enabled receive action $\nrcv(\mathfrak{m})$ for the network constraint $\mathcal{C}$, then receiving the
message has no effect on the node behavior, as explained by ${\it
Rcv}_2$. This rule also implicitly implies that an enabled receive action cannot be performed when the node is disconnected from the
sender (not currently known). Consequently, this rule makes nodes
\textit{input-enabled}, meaning that a node not ready to receive a
message will drop it. 
Rule ${\it Rcv}_2$ adds a network receive action $(\mathcal{C},\nrcv(\mathfrak{m}))$ to the behavior of a network node, specified by $\oc t\cc_\ell$,  if it has no transition $(\mathcal{C}',\nrcv(\mathfrak{m}))$ such that $\mathcal{C}'\less \mathcal{C}$. Furthermore, this rule ensures that a most general $\mathcal{C}$ is selected, and hence, the receive action $\nrcv(\mathfrak{m})$ is defined for all possible network constraints (when combined with rule $\it Exe$). Therefore, $\oc P\cc_A$ has a $(\{\},\nrcv(\data_B))$-transition by application of this rule.

Rules ${\it Sen}_{1,2}$ explain the behavior of the sense operator. In case there is
a link to the node with the address $\ell$ from the node that is running
the sense operator, and currently its address is unknown, then it behaves
like $t_1$; in case this link is not present, it behaves like $t_2$. Therefore, the link status is combined
with the network constraint $\mathcal{C}$ generated by its first or
second term argument, as given by ${\it Sen}_{1,2}$ respectively.
For instance, by $\it Prefix$ and ${\it Sen}_1$, $P$ only
generates a $(\{?\,\conn\,B\},~\snd(\data_B))$-transition.

In rules $\it Snd$ and ${\it Rcv}_{1}$, the network constraint
$\mathcal{C}$ may have the unknown address due to sensing
operators, which is replaced by the address of the deployment
operator, i.e., $\mathcal{C}[\ell/?]$.
Therefore, by applying $\it
Snd$ to the only transition of $P$,  $\oc P\cc_A$ generates a $(\{A\conn B\},\nsnd(\data_B))$-transition.

Rule ${\it Recv}$ synchronizes the receive actions of processes
$t_1$ and $t_2$ on message $\mathfrak{m}$, while combining together
their (dis)connectivity information in network constraints
$\mathcal{C}_1$ and $\mathcal{C}_2$. Rule ${\it Bro}$ specifies
how a communication occurs between a receiving and a sending
process. This rule combines the network constraints, while the
unknown location (in the network constraint of the receiving
process) is replaced by the concrete address of the sender. In
${\it Bro}$ and ${\it Recv}$ it is required that the union of network
constraints on the transition in the conclusion be well-formed.

The rule $\it Par$ prevents evolution of sub-networks
on network actions, in contrast  to lossy settings, and enforces all nodes to specify their
localities with respect to the sender before evolving the whole
network via ${\it Recv}$ or ${\it Bro}$ rules. It only allows a process to evolve by performing an internal or silent action. $\it Exe$ explains that a behavior
that is possible for a network constraint, is also possible for a
more restrictive network constraint. 

For instance, the MANET $\oc P\cc_A
\parallel \oc Q\cc_B$ can generate the $(\{B\conn A\},~\nsnd({\it
data}_B,A))$ transition induced by the deduction tree below, where
$y\equiv {\it deliver}.Q$: {\small\begin{prooftree}
\AxiomC{}\RightLabel{:$\it Prefix$}\UnaryInfC{$P\overto{(\{\},~\snd({\it data}_B)) } P$}\RightLabel{:${\it
Sen}_1$} \UnaryInfC{$P\overto{(\{?\,\conn\,B\},~\snd({\it
data}_B))}P$}\RightLabel{:${\it Snd}$}\UnaryInfC{$\oc
P\cc_A\overto{(\{A\,\conn\,B\},~\nsnd({\it data}_B,A))}\oc P\cc_A$}
\AxiomC{}\RightLabel{:$\it
Prefix$}\UnaryInfC{$Q\overto{(\{\},~\rcv({\it data}_B))}
y$}\RightLabel{:${\it Rcv}_1$} \UnaryInfC{$\oc  Q\cc_B
\overto{(\{?\,\conn\,B\},~\nrcv({\it data}_B))}\oc y\cc_B$}\RightLabel{:${\it
Bro}$} \BinaryInfC{$\oc P\cc_A\parallel \oc  Q\cc_B
\overto{(\{B\,\conn\,A\},~\nsnd({\it data}_B,A))}\oc P\cc_A\parallel \oc
y\cc_B$}\end{prooftree}}

Rule $\it Hid$ replaces every occurrence of $\ell$ in the network
constraint and action of $\beta$ by $?$, and hence hides activities
of a node with address $\ell$ from external observers. According to
$\it Abs$, the abstraction operator $\tau_m$ converts all network
send and receive actions with a message of type $m$ to $\tau$ and
leaves other actions unaffected, as defined by the function
$\tau_m(\eta)$. The encapsulation operator $\partial_m$ disallows
all network receive actions on messages of type $m$, as specified by
$\it Encap$.

The semantics of {\it RRBPT} was first introduced in \cite{FMSD16} with the aim of defining CLTSs with negative connectivity pairs to illustrate their benefit for model checking MANET protocols. In this research, we modify its semantics to properly define the behavior of MANETs in the reliable setting. To this end, two groups of rules have been modified substantially: those of receive actions and the sensing operator. More specifically, the operational semantics of receive action  in \cite{FMSD16} 
explicitly specifies the
locality of the receiver node with respect to the sender (that could be  connected,
disconnected, or unknown) through three semantic rules. Furthermore, the semantics of the sensing operator in \cite{FMSD16} makes $\oc P\cc_A$ move by $(\{B\nconn A,?\conn A\},\nrcv(\data_B))$ and $(\{B\nconn A,?\nconn A\},\nrcv(\data_B))$ to $\oc 0\cc_A$ while here it has a self-loop with the label of $(\{B\nconn A\},\nrcv(\data_B))$. In other words, the chance of sending $\data_B$ is lost after dropping a received message of $\data_B$. Such a drawback is resolved by the newly introduced rule ${\it Rcv}_2$ and removing two previous rules of the sensing operator. 

%

\section{Rooted Branching Reliable Computed Network Bisimilarity}\label{sec::reliablesim}
Terms of the lossy framework {\it RBPT} are considered modulo rooted branching computed
network bisimilarity~\cite{FatemehFI10}.
This equivalence relation is defined using the following notations:

\begin{itemize}
\item $\Rightarrow$ denotes the reflexive and transitive closure of unobservable actions: \begin{itemize}
\item $t\Rightarrow t$;
\item if $t\overto{(\mathcal{C},\tau)}t'$ for some arbitrary network constraint $\mathcal{C}$ and
$t'\Rightarrow t''$, then $t\Rightarrow t''$.
\end{itemize}
\item $t\overto{\langle (\mathcal{C},\eta)\rangle}t'$ iff $t\overto{(\mathcal{C},\eta)}t'$ or $t\overto{(\mathcal{C}[\ell/?],\eta[\ell/?])}t'$ and
$\eta$ is of the form $\nsnd(\mathfrak{m},{?})$ for some $\mathfrak{m}$.
\end{itemize}
Intuitively $t\Rightarrow t'$ expresses that after a number of communications, $t$ can behave like $t'$. Furthermore, an action like
$(\{{?}\conn B\},\nsnd({\it req}({?}),{?}))$ can be matched to an action like
$(\{A\conn B\},\nsnd({\it req}(A),A))$, which is its $\langle -
\rangle$ counterpart.

\begin{definition} {\label{Def::bbism}} A binary relation
$\mathcal{R}$ on {\it RBPT} terms is a branching computed network
simulation if $t_1\mathcal{R} t_2$ and $t_1\overto{(\mathcal{C},\eta)} t_1'$ implies that either:
\begin{itemize}
\item $\eta$ is of the form $\nrcv(\mathfrak{m})$ or $\tau$, and $t_1'\mathcal{R} t_2$; or
\item there are $t_2'$ and $t_2''$ such that $t_2\Rightarrow t_2''\overto{\langle(\mathcal{C},\eta)\rangle} t_2'$, where  $t_1\mathcal{R} t_2''$ and $t_1'\mathcal{R}
t_2'$.
\end{itemize} $\mathcal{R}$ is a branching computed network bisimulation if $\mathcal{R}$ and ${\mathcal{R}}^{-1}$
are branching computed network simulations. Two terms $t_1$ and
$t_2$ are branching computed network bisimilar, denoted by $t_1
\simeq_{b} t_2$, if $t_1\mathcal{R} t_2$ for some branching computed
network bisimulation relation $\mathcal{R}$.
\end{definition}

This definition distinguishes process terms according to their abilities to broadcast messages, and therefore, MANET protocols that can only receive are treated as deadlock as they cannot send any observable message.

\begin{definition} {\label{Def::rbbism}} Two terms
$t_1$ and $t$ are \emph{rooted branching computed network
bisimilar}, written $t_1 \simeq_{rb} t_2$, if:
\begin{itemize}
\item $t_1 \overto{(\mathcal{C},\eta)}t_1'$ implies there is a $t_2'$ such that
$t_2\overto{\langle(\mathcal{C},\eta)\rangle}t_2'$ and $t_1'\simeq_{b} t_2'$;
\item $t_2 \overto{(\mathcal{C},\eta)}t_2'$ implies there is a $t_1'$ such that
$t_1\overto{\langle(\mathcal{C},\eta)\rangle} t_1'$ and $t_1'\simeq_{b} t_2'$.
\end{itemize}
\end{definition}

Rooted branching computed network bisimilarity does not constitute a congruence with respect to the {\it RRBPT} operators. We still  want that a
receiving MANET (after its first action) be equivalent to
deadlock. In this setting, still $\oc 0\cc_A\simeq_b \oc \rcv(\mathfrak{m}).0\cc_A$, but $\oc 0\cc_A\parallel \oc \snd(\mathfrak{m}).0\cc_B \not\simeq_b \oc \rcv(\mathfrak{m}).0\cc_A\parallel \oc \snd(\mathfrak{m}).0\cc_B$, since by application of ${\it Rcv}_{1,2}$, $\it Snd$, and ${\it Bro}$:\[\begin{array}{l}
\oc \rcv(\mathfrak{m}).0\cc_A\parallel \oc \snd(\mathfrak{m}).0\cc_B\overto{(\{B\nconn A\},\nsnd(\mathfrak{m},B))}\oc \rcv(\mathfrak{m}).0\cc_A\parallel \oc 0\cc_B\\
\oc \rcv(\mathfrak{m}).0\cc_A\parallel \oc \snd(\mathfrak{m}).0\cc_B\overto{(\{B\conn A\},\nsnd(\mathfrak{m},B))}\oc 0\cc_A\parallel \oc 0\cc_B
\end{array}\]while by application of ${\it Rcv}_2$, $\it Snd$, ${\it Bro}$: \[\oc 0\cc_A\parallel \oc \snd(\mathfrak{m}).0\cc_B\overto{(\{\},\nsnd(\mathfrak{m},B))}\oc 0\cc_A\parallel \oc 0\cc_B\]which cannot be matched to any transition of $\oc \rcv(\mathfrak{m}).0\cc_A\parallel \oc \snd(\mathfrak{m}).0\cc_B$ according to the second condition of Definition \ref{Def::bbism}. However, we observe that the $(\{\},\nsnd(\mathfrak{m},B))$-transition can be matched to the transition sets of actions $(\{B\nconn A\},\nsnd(\mathfrak{m},B))$ and $(\{B\conn A\},\nsnd(\mathfrak{m},B))$, as the network constraints $\{B\nconn A\}$ and $\{B\conn A\}$ provide a partitioning of $\{\}$ while the resulting states of their corresponding transitions are equivalent. Thus, we revise our Definition \ref{Def::bbism} by generalizing its second condition. 

Intuitively, two MANETs are equivalent if they have the same observable behaviors for all possible underlying topologies. In the lossy setting, the observable behaviors exclude receive actions, as the node $\oc \rcv(a).\snd(a).0\cc_A$ can be distinguished from $\oc \rcv(a).0\cc_A$ due to its capability to send $a$ after its receipt. However, the capability of receiving messages implicitly defines a restriction on the underlying topology. For instance, the sending action $\snd(a)$ in $\oc \rcv(a).\snd(a).0\cc_A$ is only possible if the node in question was previously connected to a sender and successfully received $a$. Thus to distinguish $\oc \rcv(a).\snd(a).0\cc_A$ from $\oc\snd(a).0\cc_A$, receive actions are included in the observables in the reliable setting. Furthermore, as dropping a message may have the same effect as its processing (as explained above), a transition cannot be matched in the same way as in Definition \ref{Def::bbism} and it may be matched to multiple transitions. A partitioning of a network constraint $\mathcal{C}$ consists of network constraints $\mathcal{C}_1$, $\ldots$ , $\mathcal{C}_n$ such that 
$\forall i,j\le n\,(i\neq j\Rightarrow \Gamma(\mathcal{C}_i)\cap\Gamma(\mathcal{C}_j)=\emptyset)\,\wedge \, \bigcup_{k=1}^n \Gamma(\mathcal{C}_k) = \Gamma(\mathcal{C})$.

\begin{definition} {\label{Def::brbism}} A binary relation
$\mathcal{R}$ on {\it RRBPT} terms is a branching reliable computed network
simulation if $t_1~\mathcal{R}~ t_2$ and $t_1\overto{(\mathcal{C},\eta)} t_1'$ imply that either:
\begin{itemize}
\item $\eta$ is a $\tau$ action, and $t_1'~\mathcal{R}~ t_2$; or
\item there are $s_1'',\ldots,s_k''$ and $s_1',\ldots,s_k'$ for some $k>0$ such that $\forall i\le k(t_2\Rightarrow s_i''\overto{\langle(\mathcal{C}_i,\eta)\rangle} s_i'$, with  $t_1~\mathcal{R}~ s_i''$ and  $t_1'~\mathcal{R}~
s_i')$, and $\langle\mathcal{C}_1\rangle,\ldots,\langle\mathcal{C}_k\rangle$ constitute a partitioning of $\langle\mathcal{C}\rangle$.
\end{itemize} $\mathcal{R}$ is a branching reliable computed network bisimulation if $\mathcal{R}$ and ${\mathcal{R}}^{-1}$
are branching reliable computed network simulations. Two terms $t_1$ and
$t_2$ are branching reliable computed network bisimilar, denoted by $t_1
\simeq_{br} t_2$, if $t_1~\mathcal{R~} t_2$ for some branching reliable computed    network bisimulation relation $\mathcal{R}$.
\end{definition}

Trivially $(t_1\simeq_{b} t_2)\,\Rightarrow\, (t_1\simeq_{br}t_2)$.

\begin{theorem}\label{The::eqiv}
Branching reliable computed network bisimilarity is an equivalence.
\end{theorem}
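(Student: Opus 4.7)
The plan is to establish the three defining properties of an equivalence in turn. \emph{Reflexivity} follows by checking that the identity relation $\mathcal{I} = \{(t,t) \mid t \mbox{ an {\it RRBPT} term}\}$ is a branching reliable computed network bisimulation: for any transition $t\overto{(\mathcal{C},\eta)} t'$, the second clause is satisfied with $k=1$, $s_1'' = t$, $s_1' = t'$, the reflexive weak step $t \Rightarrow t$, the transition $t \overto{\langle(\mathcal{C},\eta)\rangle} t'$ obtained from the first disjunct of the definition of $\overto{\langle\cdot\rangle}$, and the trivial singleton partition $\{\langle\mathcal{C}\rangle\}$ of $\langle\mathcal{C}\rangle$. \emph{Symmetry} is immediate from Definition~\ref{Def::brbism}: if $\mathcal{R}$ witnesses $t_1 \simeq_{br} t_2$, then by definition both $\mathcal{R}$ and $\mathcal{R}^{-1}$ are simulations, so $\mathcal{R}^{-1}$ is itself a branching reliable computed network bisimulation witnessing $t_2 \simeq_{br} t_1$.

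The main work is \emph{transitivity}, which I would prove by showing that the relational composition
\[
  \mathcal{R} \;=\; \{(t_1,t_3) \mid \exists t_2.\ t_1 \simeq_{br} t_2 \simeq_{br} t_3\}
\]
is itself a branching reliable computed network bisimulation. Fix $t_1 \mathcal{R} t_3$ witnessed by some $t_2$ and a transition $t_1 \overto{(\mathcal{C},\eta)} t_1'$. Match it first against $t_1 \simeq_{br} t_2$: if $\eta = \tau$ and the first clause applies, then $t_1' \simeq_{br} t_2 \simeq_{br} t_3$, yielding $t_1'\mathcal{R} t_3$. Otherwise the second clause supplies a partition $\langle\mathcal{C}_1\rangle,\ldots,\langle\mathcal{C}_k\rangle$ of $\langle\mathcal{C}\rangle$ and states $s_i'', s_i'$ with $t_2 \Rightarrow s_i'' \overto{\langle(\mathcal{C}_i,\eta)\rangle} s_i'$, $t_1 \simeq_{br} s_i''$ and $t_1' \simeq_{br} s_i'$. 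The essential technical ingredient is a \emph{stuttering lemma}: whenever $u \simeq_{br} v$ and $u \Rightarrow u'$, there exists $v'$ with $v \Rightarrow v'$ and $u' \simeq_{br} v'$. This is proved by induction on the length of the $\tau$-path, applying at each step whichever of the two bisimulation clauses matches the current $\tau$-step (both clauses keep the opposite side reachable by a further weak $\tau$-closure, since $\tau$ is invariant under the $\langle\cdot\rangle$-lifting). Using the lemma, for each $i$ pick $u_i$ with $t_3 \Rightarrow u_i$ and $s_i'' \simeq_{br} u_i$. Applying the bisimulation property of $\simeq_{br}$ to the transition $s_i'' \overto{\langle(\mathcal{C}_i,\eta)\rangle} s_i'$, starting from $s_i'' \simeq_{br} u_i$, yields a sub-partition $\langle\mathcal{D}_{i,1}\rangle,\ldots,\langle\mathcal{D}_{i,m_i}\rangle$ of $\langle\mathcal{C}_i\rangle$ together with matches $t_3 \Rightarrow v_{i,j}'' \overto{\langle(\mathcal{D}_{i,j},\eta)\rangle} v_{i,j}'$ such that $s_i'' \simeq_{br} v_{i,j}''$ and $s_i' \simeq_{br} v_{i,j}'$. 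Composing $\simeq_{br}$-chains gives $t_1 \mathcal{R} v_{i,j}''$ and $t_1' \mathcal{R} v_{i,j}'$, and since sub-partitions of each part of a partition reassemble into a partition of the whole, the doubly indexed family $\{\langle\mathcal{D}_{i,j}\rangle\}_{i,j}$ partitions $\langle\mathcal{C}\rangle$, verifying the second clause for $\mathcal{R}$. The symmetric argument handling transitions originating from $t_3$ is analogous.

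The main obstacle I expect is the interplay between the stuttering lemma and the partition-refinement step. One must argue that iteratively applying the two bisimulation clauses along a weak $\tau$-path consolidates into a single weak $\tau$-move on the opposite side landing in a $\simeq_{br}$-equivalent state, and that the disjointness property $\Gamma(\mathcal{C}_i)\cap\Gamma(\mathcal{C}_{i'})=\emptyset$ for $i\ne i'$ propagates through the $\langle\cdot\rangle$-lifting so that concatenation of sub-partitions is again a partition. Once these two pieces are in place, the verification of the bisimulation clauses for $\mathcal{R}$ in both directions reduces to routine bookkeeping.
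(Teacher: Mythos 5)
Your reflexivity and symmetry arguments are fine, and the stuttering lemma you state does hold for $\simeq_{br}$. The gap is in the transitivity step, in a sub-case your argument silently skips: when $\eta=\tau$ and the middle term's transition $s_i''\overto{\langle(\mathcal{C}_i,\tau)\rangle}s_i'$ is matched against $u_i$ by the \emph{first} clause of Definition~\ref{Def::brbism}, i.e.\ $s_i'\simeq_{br}u_i$ with no transition performed on the $t_3$ side at all. Your text assumes the second clause always fires (``yields a sub-partition \ldots together with matches $t_3\Rightarrow v_{i,j}''\overto{\langle(\mathcal{D}_{i,j},\eta)\rangle}v_{i,j}'$''), but in this sub-case all you obtain is $t_3\Rightarrow u_i$ with $t_1\,\mathcal{R}\,u_i$ and $t_1'\,\mathcal{R}\,u_i$, which satisfies neither clause of the definition for the pair $(t_1,t_3)$: the first clause demands $t_1'\,\mathcal{R}\,t_3$ (relation with $t_3$ itself, not with a $\tau$-descendant of it), and the second demands an actual $\overto{\langle(\mathcal{C}_i,\eta)\rangle}$ step after the $\Rightarrow$. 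Deriving $t_1'\,\mathcal{R}\,t_3$ from $t_1'\,\mathcal{R}\,u_i$, $t_1\,\mathcal{R}\,u_i$ and $t_3\Rightarrow u_i$ is a nontrivial ``stuttering closure'' property that is not available for free; this is exactly the classical failure of the naive composition proof for branching-style bisimulations identified by Basten~\cite{Basten}, and it is why the paper does not compose branching bisimulations directly.

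The paper instead introduces \emph{semi-branching} reliable computed network bisimulation (Definition~\ref{Def::sbbism}), whose $\tau$-clause is relaxed to ``there is a $t_2'$ with $t_2\Rightarrow t_2'$, $t_1\,\mathcal{R}\,t_2'$ and $t_1'\,\mathcal{R}\,t_2'$'' --- precisely the configuration your problematic sub-case produces, so relation composition does go through for that notion (Proposition~\ref{Pro::compose}). A separate argument (Proposition~\ref{Prop::semi-branching}) then shows that the largest semi-branching bisimulation is itself a branching bisimulation, and the two notions relate the same pairs of terms, which yields the theorem. To repair your proof you would either have to take this detour or establish the stuttering-closure property of $\simeq_{br}$ independently, which amounts to essentially the same work.
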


See Section \ref{sec::app} for the proof of this theorem.

\begin{definition} {\label{Def::rbrbism}} Two terms
$t_1$ and $t$ are \emph{rooted branching reliable computed network
bisimilar}, written $t_1 \simeq_{rbr} t_2$, if:
\begin{itemize}
\item $t_1 \overto{(\mathcal{C},\eta)}t_1'$ implies there is a $t_2'$ such that
$t_2\overto{\langle(\mathcal{C},\eta)\rangle}t_2'$ and $t_1'\simeq_{br} t_2'$;
\item $t_2 \overto{(\mathcal{C},\eta)}t_2'$ implies there is a $t_1'$ such that
$t_1\overto{\langle(\mathcal{C},\eta)\rangle} t_1'$ and $t_1'\simeq_{br} t_2'$.
\end{itemize}
\end{definition}

\begin{corollary}\label{Cor::eqiv}
Rooted branching reliable computed network bisimilarity is an equivalence.
\end{corollary}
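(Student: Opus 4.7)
The plan is to derive the corollary directly from Theorem \ref{The::eqiv} by establishing reflexivity, symmetry, and transitivity of $\simeq_{rbr}$, reusing the corresponding properties of $\simeq_{br}$ wherever possible.

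Reflexivity follows by observing that the identity matching $t \overto{(\mathcal{C},\eta)} t'$ against itself satisfies both clauses of Definition \ref{Def::rbrbism}: the action $(\mathcal{C},\eta)$ is trivially among the transitions described by $\langle (\mathcal{C},\eta)\rangle$, and $t' \simeq_{br} t'$ by reflexivity of $\simeq_{br}$ (from Theorem \ref{The::eqiv}). Symmetry is immediate from the definition, whose two bullets are already stated symmetrically in $t_1$ and $t_2$.

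For transitivity, suppose $t_1 \simeq_{rbr} t_2$ and $t_2 \simeq_{rbr} t_3$, and consider any transition $t_1 \overto{(\mathcal{C},\eta)} t_1'$. The rootedness of $t_1 \simeq_{rbr} t_2$ yields $t_2'$ with $t_2 \overto{\langle(\mathcal{C},\eta)\rangle} t_2'$ and $t_1' \simeq_{br} t_2'$. Let $(\mathcal{C}^*,\eta^*)$ denote the actual label of this transition of $t_2$; by definition of $\langle - \rangle$, either $(\mathcal{C}^*,\eta^*) = (\mathcal{C},\eta)$, or $\eta = \nsnd(\mathfrak{m},?)$ and $(\mathcal{C}^*,\eta^*) = (\mathcal{C}[\ell/?], \nsnd(\mathfrak{m},\ell))$ for some $\ell$. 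Applying rootedness of $t_2 \simeq_{rbr} t_3$ to $t_2 \overto{(\mathcal{C}^*,\eta^*)} t_2'$ produces $t_3'$ with $t_3 \overto{\langle(\mathcal{C}^*,\eta^*)\rangle} t_3'$ and $t_2' \simeq_{br} t_3'$. By transitivity of $\simeq_{br}$ (Theorem \ref{The::eqiv}), $t_1' \simeq_{br} t_3'$. The symmetric direction from $t_3$ to $t_1$ is completely analogous.

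The only delicate point is that the matching transition of $t_3$ qualifies as a $\langle(\mathcal{C},\eta)\rangle$-transition of $t_3$ from the viewpoint of $t_1$. This reduces to the observation that $\langle - \rangle$ composes with itself: when $\eta^*$ is not a send with unknown sender, $\langle(\mathcal{C}^*,\eta^*)\rangle = (\mathcal{C}^*,\eta^*)$, which is itself a member of $\langle(\mathcal{C},\eta)\rangle$ by construction; when $\eta^* = \eta$ is a send with unknown sender, a further instantiation is still an instantiation of the original. I expect this bookkeeping for send actions with unknown sender to be the main (though modest) obstacle, and the rest of the argument to proceed routinely by relation composition, exactly as for rooted branching bisimilarity in the standard setting.
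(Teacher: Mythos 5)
Your proposal is correct and follows essentially the same route as the paper's own proof (Corollary \ref{Co::rootedbranchingeq} in the appendix): reflexivity and symmetry are immediate, and transitivity is obtained by chaining the root conditions of $t_1\simeq_{rbr}t_2$ and $t_2\simeq_{rbr}t_3$ and invoking transitivity of $\simeq_{br}$ from Theorem \ref{The::eqiv}. Your explicit check that $\langle-\rangle$ composes with itself for send actions with unknown sender is a point the paper's proof silently glosses over, so your write-up is if anything slightly more careful.
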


Corollary \ref{Cor::eqiv} is an immediate consequence of Theorem~\ref{The::eqiv} and
Definition~\ref{Def::rbrbism}.

\begin{theorem}\label{The::cong}
Rooted branching reliable computed network bisimilarity is a congruence for
{\it RRBPT} operators.
\end{theorem}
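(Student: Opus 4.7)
The plan is to proceed by structural induction on the RRBPT operators, verifying that each operator preserves $\simeq_{rbr}$. For each operator $f$, I would exhibit a witness bisimulation relation of the canonical form
\[\mathcal{R}_f = \{(f(t_1,\ldots,t_n),\, f(t_1',\ldots,t_n')) \mid t_i \simeq_{br} t_i'\} \cup \simeq_{br},\]
verify the transfer clauses of Definition~\ref{Def::brbism} for the residual states, and separately verify the root condition of Definition~\ref{Def::rbrbism} for the first transitions. As is standard for rooted branching bisimilarities, I would factor the argument into two layers: first show that $\simeq_{br}$ is preserved by every operator \emph{except} choice, by direct inspection of the operational rules in Table~\ref{Tab::RRBPTsemanticsII}; then upgrade to $\simeq_{rbr}$ by noting that $\simeq_{rbr}\subseteq \simeq_{br}$ and that the first transition of $f(\vec{t})$ is structurally dictated by a first transition of some $t_i$ (modulo a decoration of the network constraint by the applicable rule), so the rooted condition transfers from arguments to compound.

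For the unary operators $\oc\cdot\cc_\ell$, $(\nu\ell)$, $\tau_m$, $\partial_m$ and for action prefix $\alpha.t$, the verification is routine: the rules decorate the source transition's constraint by a syntactic operation (substitution $\cdot[\ell/?]$ for $\it Snd$, $\it Rcv_1$, $\it Hid$; relabeling for $\it Abs$; filtering for $\it Encap$; prepending $\ell\conn\,?$ or $\ell\nconn\,?$ for $\it Sen_{1,2}$), and each such operation commutes with the partitioning clause of $\simeq_{br}$. For choice $+$, the rooted condition on the arguments is indispensable: a first transition of $t_1+t_2$ arises from some $t_i$ via $\it Choice$ and is therefore matched, by $t_i\simeq_{rbr} t_i'$, by a single $\langle\cdot\rangle$-transition of $t_i'$, hence of $t_1'+t_2'$, with the residuals in $\simeq_{br}$. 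The case of recursive specifications $\mathfrak{A}\deff t$ reduces, by one application of $\it Inv$, to the congruence at~$t$.

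The most delicate cases are parallel composition and the self-loops introduced by $\it Rcv_2$. For $\parallel$, the initial transition can arise from $\it Bro$, $\it Recv$, $\it Par$, or their symmetric counterparts, and the crux is to match transitions componentwise while respecting the partitioning clause of Definition~\ref{Def::brbism}. The key technical lemma I would prove is a closure property: if $\langle\mathcal{C}_1^1\rangle,\ldots,\langle\mathcal{C}_1^k\rangle$ partition $\langle\mathcal{C}_1\rangle$ and $\langle\mathcal{C}_2^1\rangle,\ldots,\langle\mathcal{C}_2^m\rangle$ partition $\langle\mathcal{C}_2\rangle$, then the well-formed members of $\{\langle\mathcal{C}_1^i\cup\mathcal{C}_2^j[\ell/?]\rangle\}_{i,j}$ partition $\langle\mathcal{C}_1\cup\mathcal{C}_2[\ell/?]\rangle$. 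With this at hand, matching a first $\it Bro$- or $\it Recv$-transition reduces to matching componentwise via the rooted condition on each component, and the cartesian combination supplies the required partitioning for the compound transition. For the self-loops manufactured by $\it Rcv_2$, I would observe that branching reliable bisimilarity preserves the set of enabled network receive actions together with their constraints, which is precisely why receive actions were included among the observables in $\simeq_{br}$; hence if $t\simeq_{br} t'$ then the $\it Rcv_2$-self-loops of $\oc t\cc_\ell$ and $\oc t'\cc_\ell$ agree up to the partitioning clause.

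I expect the main obstacle to be precisely this closure lemma for partitionings under the constraint-combining operations of $\it Bro$, $\it Recv$, $\it Sen_{1,2}$, $\it Hid$, and $\it Snd$/$\it Rcv_1$: one must verify that unions $\mathcal{C}_1\cup\mathcal{C}_2[\ell/?]$, substitutions $\mathcal{C}[\ell/?]$, and prepended (dis)connectivity pairs $\{\ell\conn\,?\}\cup\mathcal{C}$ or $\{\ell\nconn\,?\}\cup\mathcal{C}$ all preserve both the disjointness $\Gamma(\mathcal{C}_i)\cap\Gamma(\mathcal{C}_j)=\emptyset$ and the covering $\bigcup_k\Gamma(\mathcal{C}_k)=\Gamma(\mathcal{C})$ stipulated by Definition~\ref{Def::brbism}, after accounting for the well-formedness side-condition and for the $\langle\cdot\rangle$-relaxation that allows the unknown sender $?$ to be instantiated to any concrete address. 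Once that technical lemma is discharged, the remainder of the proof is a long but entirely routine case analysis over the rules of Table~\ref{Tab::RRBPTsemanticsII}, with the rooted clauses of Definition~\ref{Def::rbrbism} verified uniformly by structural determination of the first transition.
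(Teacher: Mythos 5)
Your proposal follows essentially the same route as the paper's proof in Section~\ref{sec::cong}: exhibit the canonical witness relation for each operator, run a case analysis over the rules of Table~\ref{Tab::RRBPTsemanticsII}, track how the partitioning clause of Definition~\ref{Def::brbism} is preserved under the constraint-combining operations (union, substitution $[\ell/?]$, the $\langle\cdot\rangle$-relaxation), and handle the ${\it Rcv}_2$ self-loops by observing that the bisimilarity forces agreement of the enabled network receive actions and their constraints. The only organizational differences are that the paper verifies the partitioning-closure facts inline per case rather than isolating them as a named lemma, treats parallel composition one argument at a time (combining by transitivity/equivalence) rather than via a cartesian-product partitioning, and---owing to the grammatical well-formedness restriction---states the prefix, choice and sense cases under a deployment context $\oc\cdot\cc_\ell$; none of these affects correctness.
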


See Section \ref{sec::cong} for the proof.

\section{Axiomatization for {\it RRBPT}}\label{sec::raxiom}
To provide a sound and complete axiomatization for closed {\it RRBPT} terms with respect to rooted branching reliable computed network bisimilarity, the framework should be extended with the computed network terms, i.e., $(\mathcal{C},\eta).t$ which expresses that action $\eta$ is possible for topologies belonging to $\mathcal{C}$, in the same way as \cite{FatemehFI10}. This prefix operator is helpful to transform protocol send/receive actions into their corresponding network ones. 
Furthermore, it borrows the operators \emph{left merge}
($\lm$) and \emph{communication merge} from the process algebra
\emph{ACP} \cite{BergstraKlop84}  to axiomatize
parallel composition. Note that the interleaving semantics for parallel composition is only valid for internal and unobservable actions (see rule $\it Par$). 
To axiomatize the behavior of nodes while being input-enabled, we also exploit two novel auxiliary operators.

{\it RRBPT} is extended with new operators and called \emph{Reliable Computed Network Process Theory} ({\it RCNT}). Its syntax contains:\[
\begin{array}{l}  t:: = ~  0 ~~ |~~ \beta.t  ~~ |~~ t+t~~|~~\mathfrak{A}~ ,\mathfrak{A}\deff t~~| ~~~
t\mid t  ~~|~~t\lm t ~~|~~t \parallel t~~|~~\rec\mathfrak{A}\cdot t\\
\hspace*{1cm}{\it sense}(\ell,t,t)~~|~~(\nu\ell)t~~|~~\tau_m(t)~~|~~ \partial_m(t)~~|~~\ell:t:t~~|~~\mathcal{C}\rhd t~~|~~ \oc  t\cc_\ell~~    \end{array}  \]The prefix operator in
$\beta.t$ again denotes a process which performs $\beta$ and then behaves as $t$. The action $\beta$ can now be of two types:
either an internal action or a send/receive action $\snd(\mathfrak{m})$/$\rcv(\mathfrak{m})$, denoted by $\alpha$, or actions of the form
$(\mathcal{C},\nrcv(\mathfrak{m}))$,    $(\mathcal{C},\nsnd(\mathfrak{m},\ell))$ and $(\mathcal{C},\tau)$, denoted by $(\mathcal{C},\eta)$, where the first two
actions are called the network receive and send actions, respectively. The new operator $\ell:t_1:t_2$, so-called \textit{local deployment}, defines the behavior of process $t_2$ deployed at the network address $\ell$ while it only considers the input-enabledness feature with regard to the behavior of $t_2$. In cases that it should drop a message (i.e., processing the message has not been defined by $t_2$), it behaves as $t_1$. This operator is helpful to axiomatize the behavior of the deployment operator in the reliable setting.  To axiomatize the behavior of the $\it sense$ operator, the framework is extended with the \emph{topology restriction} operator $\mathcal{C}\rhd t$ which restricts the behavior of $t$ by taking restrictions of $\mathcal{C}$ into account.

Due to the input-enabledness feature of nodes, their behavior is recursive: upon receiving a message for which no receive action has been defined, a node drops the message. To this aim, we exploit the recursion operator $\rec \mathfrak{A}\cdot t$, which specifies the \textit{solution} of the process name $\mathfrak{A}$, defined by the equation $\mathfrak{A} \deff t$. The process term $t_\mathfrak{A}  $ is a solution of the equation $\mathfrak{A} \deff t$ if the replacement of $\mathfrak{A}$ by $t_\mathfrak{A}$ on both sides of the equation results in equal terms, i.e. $t_\mathfrak{A}\simeq_{rb} t [t_\mathfrak{A }/\mathfrak{A}]$. As we are interested in equations with exactly one solution, we define a guardedness criterion for network names, in the same way as \cite{FatemehFI10}. A free
occurrence of a network name $A$ in $t$ is called
\textit{guarded} if this occurrence is in the scope of an action
prefix operator (not $(\mathcal{C},\tau)$ prefix) and not in the scope of an
abstraction operator~\cite{BaetenBravetti}; in other words, there is
a subterm $(\mathcal{C},\eta).t'$ in $t$ such that $\eta\neq\tau$, and
$\mathfrak{A}$ occurs in $t'$. $\mathfrak{A}$ is
\textit{(un)guarded} in $t$ if (not) every free occurrence of
$\mathfrak{A}$ in $t$ is guarded. A {\it RCNT} term $t$ is
\textit{guarded} if for every subterm $\rec \mathfrak{A}\cdot t'$,
$\mathfrak{A}$ is guarded in $t'$. This guardedness criterion
ensures that any guarded recursive term has a unique solution. 


A term is grammatically well-defined if its processes
deployed at a network address through either a network or local deployment operator, are only defined by action prefix,
choice, sense, and process names.

The operational semantic rules of the new operators are given in Table \ref{Tab::SOSRCNT} while the counterpart of ${\it Sync}_2$ holds. In these rules,
$t\noverto{\rcv(\mathfrak{m})}$ denotes that there
exists no $t'$ such that
$t\overto{(\mathcal{C}',~\rcv(\mathfrak{m}))}t'$ for some network constraint $\mathcal{C}'$.  
The behavior of the local deployment operator is almost similar to the deployment operator. Its rules ${\it Inter}_1'$ and ${\it Inter}_2'$ are the same as $\it Snd$ and ${\it Rcv}_1$, respectively. However, it substitutes ${\it Inter}_3'$ for ${\it Rcv}_2$ by which it only adds transitions containing the disconnectivity pair $?\nconn \ell$ for those possible receive actions of $t_2$ (generated by ${\it Rcv}_1$).
Rules ${\it Sen}_{3,4}$ make the behavior of ${\it sense}(\ell',t_1,t_2)$ input-enabled toward receive actions that are possible by $t_1$ but not $t_2$ and vice versa. The constraints of the topology restriction operator $\mathcal{C}\rhd t$ is added to the behaviors of $t$ as explained by the rule $\it TR$.

\begin{table*}[htbp]\caption{Semantics of the new operators of {\it RCNT} }\label{Tab::SOSRCNT}
\centering\begin{tabular}{c}
$\Rule{t_2\overto{(\mathcal{C},~\snd(\mathfrak{m}))}t_2'}{\ell:t_1:t_2  \overto{(\mathcal{C}[\ell/?],~\nsnd(\mathfrak{m},\ell))} \oc    t_2'\cc_\ell}:{\it Inter}_1'$~~~~$\Rule{t[\rec\mathfrak{A}\cdot t/\mathfrak{A}]\overto{(\mathcal{C},\eta)}t'}{\rec\mathfrak{A}\cdot t \overto{(\mathcal{C},\eta)}t'}:{\it Rec}$\vspace{4mm}\cr$\Rule{t_2\overto{(\mathcal{C},~\rcv(\mathfrak{m}))}t_2'}{\ell:t_1:t_2            \overto{(\mathcal{C}[\ell/?]\cup\{?\,\conn\,\ell\},~\nrcv(\mathfrak{m}))} \oc t_2'\cc_\ell}:{\it    Inter}_2'$\vspace{4mm}\cr $\Rule{t_2\overto{(\mathcal{C},~\rcv(\mathfrak{m}))}t_2'}{\ell:t_1:t_2\overto{(\mathcal{C}[\ell/?]\cup\{?\,\nconn\,\ell\},~\nrcv(\mathfrak{m}))}t_1}:{\it Inter}_3'$~~~$\Rule{t\overto{(\mathcal{C}',\eta)}
t'}{\mathcal{C}\rhd t\overto{(\mathcal{C}'\cup\mathcal{C} ,\eta)} t'}:{\it TR}$\vspace{4mm}
\cr
$\Rule{
t_1\noverto{\rcv(\mathfrak{m})}~~~t_2\overto{(\mathcal{C},~\rcv(\mathfrak{m}))}t_2' }{\ell:t_3: {\it
sense}(\ell',t_1,t_2)\overto{(\{?\,\conn\,\ell'\}\cup\mathcal{C},~\nrcv(\mathfrak{m}))}
t_3}:{\it Sen}_3$\vspace{4mm}\cr
$\Rule{
t_1\overto{(\mathcal{C},~\rcv(\mathfrak{m}))}t_1'~~~t_2\noverto{\rcv(\mathfrak{m})} }{\ell:t_3: {\it
sense}(\ell',t_1,t_2)\overto{(\{?\,\nconn\,\ell'\}\cup\mathcal{C},~\nrcv(\mathfrak{m}))}
t_3}:{\it Sen}_4$~~~$\Rule{t_1\overto{\beta}
    t_1'}{t_1\lm t_2\overto{\beta}
    t_1'\parallel t_2}:~{\it LExe}$\vspace{4mm}\cr
$\Rule{t_1\overto{(\mathcal{C}_1,\nrcv(\mathfrak{m}))}
t_1'~~~t_2\overto{(\mathcal{C}_2,\nrcv(\mathfrak{m}))}
t_2'}{t_1\mid t_2
\overto{(\mathcal{C}_1\cup \mathcal{C}_2,\nrcv(\mathfrak{m}))} t_1'\parallel
t_2'}:{\it Sync}_1$ 
\vspace{4mm}\cr
$\Rule{t_1\overto{(\mathcal{C}_1,\nsnd(\mathfrak{m},\ell))}
t_1' ~~ t_{2} \overto{(\mathcal{C}_2,\nrcv(\mathfrak{m}))}
t_{2}'}{t_1 \mid
t_2\overto{(\mathcal{C}_1\cup \mathcal{C}_2[\ell/?],\nsnd(\mathfrak{m},\ell))}
t_1'\parallel t_2'}:{\it
Sync}_2$
\end{tabular}
\end{table*}

The main differences of extended {\it RCNT} with {\it CNT} are that its deployed nodes are input-enabled and its communication primitive is reliable. We use the notation $\sum_{m\in M} t$ to define $t[m_1/\mathfrak{m}]+\ldots+t[m_k/\mathfrak{m}]$, where $M=\{m_1,\ldots,m_k\}$. Furthermore, ${\it if}(b,t_1,t_2)$ behaves as $t_1$ if the condition $b$ holds and otherwise as $t_2$.

The axioms regarding the choice, deployment, left and communication merge, and parallel operators are given in Table \ref{Tab::reliableparallelaxiom}. The axioms ${\it Ch}_{1-4}$, $\it Br$, ${\it LM}_{2,3}$ and $S_{1-4}$ are
standard (cf.\ \cite{thesis-Luttik}). The axiom ${\it Ch}_5$
denotes that a network send action
whose sender address is unknown can be
removed if its counterpart action exists. The axiom ${\it Ch}_6$ explains that a more liberal network constraint allows more behavior. Axioms ${\it Dep}_{0-7}$, ${\it LM}_{1,2}'$, and ${\it TRes}_{1-5}$ are new in comparison with the lossy setting of \cite{FatemehFI10}. The axiom $(\mathcal{C},\eta).t_1\lm t_2=(\mathcal{C},\eta).(t_1\parallel t_2)$ has been replaced by ${\it LM}_{1,2}'$ which only allow internal or unobservable actions of the left operand to be performed.

To axiomatize the behavior of a node considering the input-enabledness feature, we need to find the messages that it cannot currently respond to and then add a summand which receives those message without processing them. To this aim, axiom ${\it Dep}_0$ expresses the behavior of $
\oc t\cc_\ell$ as a recursive specification which drops messages that it does not handle with the help of the auxiliary function ${\it Message}(t,\mathcal{S})$, and the behavior of $t$ with the help of the local deployment operator $\ell: \mathfrak{Q}: t$. The function ${\it Message}(t,\mathcal{S})$ returns the set of messages that can be currently processed by $t$ and is defined using structural induction:\[\begin{array}{l}
{\it Message}(0,\mathcal{S})=\emptyset\\
{\it Message}(i.t,\mathcal{S})=\emptyset,~i\in{\it IAct}\\
{\it Message}(\snd(\mathfrak{m}).t,\mathcal{S}) = \emptyset\\
{\it Message}(\rcv(\mathfrak{m}).t,\mathcal{S}) = \{\mathfrak{m}\}\\
{\it Message}(t_1+t_2,\mathcal{S})={\it Message}(t_1,\mathcal{S})\cup{\it Message}(t_2,\mathcal{S})\\
{\it Message}({\it sense}(\ell,t_1,t_2),\mathcal{S})={\it Message}(t_1,\mathcal{S})\cup{\it Message}(t_2,\mathcal{S})\\
{\it Message}(\mathfrak{A},\mathcal{S})={\it Message}(t,\mathcal{S}\cup\{\mathfrak{A}\}),~ \mathfrak{A}\not\in\mathcal{S},\mathfrak{A}\deff t\\
{\it Message}(\mathfrak{A},\mathcal{S})=\emptyset,~ \mathfrak{A}\in\mathcal{S}
\end{array}
\]where $\mathcal{S}$ keeps track of process names whose right-hand definitions have been examined. We remark that ${\it Dep}_0$ extends the deployment behavior of the lossy setting with the input enabledness feature with the help of operator $\ell:\mathfrak{Q}:t$. The axioms ${\it Dep}_{1-7}$ specify the behavior of the operator $\ell:t_1:t_2$. 
Axiom ${\it Dep}_1$ defines the interaction between the network and data link layers. The protocol send action (at the network layer) is transformed into its network version (at the data link layer). Axiom ${\it Dep}_2$ indicates that when $\ell$ is connected to a sender (which is unknown yet), the receive action is successful and its behavior proceeds as $\oc t\cc_\ell$. Otherwise, the receive action is unsuccessful and its behavior is defined by $t'$. Axioms ${\it Dep}_{3,4,5}$ express the effect of the local deployment on choice, deadlock, and process names, respectively while axioms ${\it Dep}_{6,7}$ define its effect on the prefixed internal actions and ${\it sense}$ operator, respectively.

\begin{table}\small
\centering
\caption{Axioms for the choice, deployment, left and communication merge, and parallel operators. The sets $M_1$ and $M_2$ denote ${\it Message}(t_2,\emptyset)\setminus {\it Message}(t_1,\emptyset)$ and ${\it Message}(t_1,\emptyset)\setminus {\it Message}(t_2,\emptyset)$  respectively.}\label{Tab::reliableparallelaxiom}
\begin{tabular}{|l|}
\hline
\begin{tabular}{llll}
${\it Ch}_1$ & $0+t=t$ & ${\it Ch}_2$ & $t_1+t_2=t_2+t_1$ \cr ${\it Ch}_3$ &
$t_1+(t_2+t_3)=(t_1+t_2)+t_3$ & ${\it Ch}_4$ & $t+t=t$\cr ${\it Ch}_{5}$ & \multicolumn{3}{l}{$(\mathcal{C},\nsnd(\mathfrak{m},?)).t+\langle(\mathcal{C},\nsnd(\mathfrak{m},?))\rangle.t=\langle(\mathcal{C},\nsnd(\mathfrak{m},?))\rangle.t$}\cr ${\it Ch}_{6}$ &
\multicolumn{3}{l}{$(\mathcal{C}_1,\eta).t+(\mathcal{C}_2,\eta).t=(\mathcal{C}_1,\eta).t,~\mathcal{C}_2\less
\mathcal{C}_1$} \vspace*{2mm}\cr
${\it Dep}_0$ & \multicolumn{3}{l}{$\oc t\cc_\ell = \rec\mathfrak{Q}\cdot\sum_{\mathfrak{m}'\not\in{\it Message}(t,\emptyset)}(\{\},\nrcv(\mathfrak{m}')).\mathfrak{Q}+\ell:\mathfrak{Q}:t$} \cr
${\it Dep}_2$ & \multicolumn{3}{l}{$\ell:t': \rcv(\mathfrak{m}).t =(\{?\nconn\ell\},\nrcv(\mathfrak{m})).t'+(\{?\conn\ell\},\nrcv(\mathfrak{m})).\oc t\cc_\ell$}\\
${\it Dep}_7$ & \multicolumn{3}{l}{$\ell:t_3:{\it sense}(\ell',t_1,t_2) = \sum_{\mathfrak{m}'\in M_1}(\{\ell\conn \ell'\},\nrcv(\mathfrak{m}')).t_3$}\\
& \multicolumn{3}{l}{$ + \sum_{\mathfrak{m}'\in M_2}(\{\ell\nconn \ell'\},\nrcv(\mathfrak{m}')).t_3+\{\ell\conn \ell'\}\rhd \ell:t_3: t_1 + \{\ell\nconn \ell'\}\rhd \ell:t_3: t_2$}\vspace*{2mm}\cr
${\it Dep}_1$ & $\ell:t':\snd(\mathfrak{m}).t= (\{\},\nsnd(\mathfrak{m},\ell)).\oc t\cc_\ell$& ${\it Dep}_6$ & $\ell:t': i.t = (\{\},i).\oc t\cc_\ell$  \\
${\it Dep}_3$ & $\ell:t_3: t_1+t_2=\ell: t_3:t_1+\ell:
t_3:t_2$ & ${\it Dep}_4$ &$\ell: t: 0 = 0$\cr
${\it Dep}_5$ & $\ell:t':\mathfrak{A}= \ell:t': t,~\mathfrak{A}\deff t$ & & 
\vspace*{2mm}\cr
${\it TRes}_1$ & \multicolumn{3}{l}{$\mathcal{C}_1\rhd (\mathcal{C}_2,\eta).t = (\mathcal{C}_1\cup \mathcal{C}_2,\eta).t$, ~if $\mathcal{C}_1\cup \mathcal{C}_2 \in \mathbb{C}^v({\it Loc})$}    \cr
${\it TRes}_2$ & $\mathcal{C}\rhd (t_1+t_2) = (\mathcal{C}\rhd t_1) + (\mathcal{C}\rhd t_2)$ & ${\it TRes}_3$ & $\mathcal{C}\rhd\rec\mathfrak{A}\cdot t = \rec\mathfrak{A}\cdot(\mathcal{C}\rhd t)$ \cr ${\it TRes}_4$ & $\mathcal{C}\rhd \mathfrak{A} = \mathfrak{A}$&  ${\it TRes}_5$ & $\mathcal{C}\rhd 0 = 0$\vspace*{2mm} \cr

\end{tabular}\cr
\begin{tabular}{llll}
${\it Br}$ & $t_1\parallel t_2=t_1\lm t_2+t_2\lm t_1+t_1\mid t_2$ &
$S_1$ & $t_1\mid t_2=t_2\mid t_1$\cr
${\it LM}_1'$ & $(\mathcal{C},\eta).t_1\lm t_2 = 0, ~\eta\not\in {\it IAct}\cup \{\tau\}$ & $S_2$ & $(t_1+t_2)\mid t_3=t_1\mid t_3+t_2\mid t_3$\cr
${\it LM}_2$ & $(t_1+t_2)\lm t_3=t_1\lm t_3+t_2\lm t_3$ & $S_3$ & $0\mid t = 0$\cr
${\it LM}_3$ & $0\lm t = 0$ & $S_4$ & $(\mathcal{C},\eta).t_1\mid
t_2=0,~\eta\in {\it IAct}\cup \{\tau\}$\cr 
${\it LM}_2'$ & \multicolumn{3}{l}{$(\mathcal{C},\eta).t_1\lm t_2 = (\mathcal{C},\eta).(t_1\parallel t_2), ~\eta\in {\it IAct}\cup \{\tau\}$}  \vspace*{2mm}\cr
\end{tabular}\cr
${\it Sync}_1$~~$(\mathcal{C}_1,\nsnd(\mathfrak{m}_1,\ell)).t_1\mid
(\mathcal{C}_2,\nrcv(\mathfrak{m}_2)).t_2=$\cr
\hspace*{2cm}${\it if}(
(\mathfrak{m}_1=\mathfrak{m}_2),(\mathcal{C}_1\cup\mathcal{C}_2[\ell/?],\nsnd(\mathfrak{m}_1,\ell)).t_1\parallel
t_2, 0)$\cr
${\it Sync}_2$~~$(\mathcal{C}_1,\nrcv(\mathfrak{m}_1)).t_1\mid
(\mathcal{C}_2,\nrcv(\mathfrak{m}_2)).t_2={\it if}(
        (\mathfrak{m}_1=\mathfrak{m}_2),(\mathcal{C}_1\cup\mathcal{C}_2,\nrcv(\mathfrak{m}_1)).t_1\parallel
t_2, 0)$\cr
${\it Sync}_3$~~$(\mathcal{C}_1,\nsnd(\mathfrak{m}_1,\ell_1)).t_1\mid
(\mathcal{C}_2,\nsnd(\mathfrak{m}_2,\ell_2)).t_2=0$\cr
\hline
\end{tabular}
\end{table}

The behavior of the topology restriction operator is defined by the axioms ${\it TRes}_{1-5}$ in Table \ref{Tab::reliableparallelaxiom}. Axiom ${\it TRes}_1$ considers the restrictions of $\mathcal{C}_1$ by integrating its restrictions with $\mathcal{C}_2$ in the computed network term $(\mathcal{C}_2,\eta).t$ if $\mathcal{C}_1\cup \mathcal{C}_2$ is well-formed. Axiom ${\it TRes}_{2}$ defines that topology restriction can be distributed over the choice operator. Axiom ${\it TRes}_{3}$ expresses that the topology restriction operator can be moved inside and outside of a recursion operator. Axioms ${\it TRes}_{4,5}$ explain that the topology restriction operator has no effect on a process name and deadlock, respectively.

For instance, the behavior of the MANET $\oc P\cc_A$, where $P\deff {\it sense}(B, \snd(\data_B).P,0)$, ${\it Msg}=\{\data_B\}$, is simplified as: {{\allowdisplaybreaks
\begin{flalign*}
& \oc P\cc_A =^{{\it Dep}_{0,5}}\\ &
\hspace*{0.cm}\rec\mathfrak{Q}\cdot (\{\},\nrcv(\data_B)).\mathfrak{Q} + A:\mathfrak{Q}:{\it sense}(B, \snd(\data_B).P,
0) =^{{\it Dep}_7}  \\ &
\hspace*{0.cm} \rec\mathfrak{Q}\cdot (\{\},\nrcv(\data_B)).\mathfrak{Q}+ 
\{A\conn B\}\rhd A:\mathfrak{Q}:\snd(\data_B).P  +\{A\nconn B\}\rhd A:\mathfrak{Q}: 0 =^{{\it Dep}_{1,4}}\\
&\hspace*{0.cm}\rec\mathfrak{Q}\cdot (\{\},\nrcv(\data_B)).\mathfrak{Q}+ \{A\conn B\}\rhd (\{\},\nsnd(\data_B,A)).\mathfrak{Q}+ 
\{B\nconn A\}\rhd 0 =^{{\it TRes}_{1,5}}\\ & \hspace*{0.cm}\rec\mathfrak{Q}\cdot (\{\},\nrcv(\data_B)).\mathfrak{Q}+ (\{A\conn B\},\nsnd(\data_B,A)).\mathfrak{Q}
\end{flalign*}}%
}
The behavior of $\oc Q\cc_B$, where $Q\deff \rcv(\data_B).\deliver.Q$, is equated to:

{{\allowdisplaybreaks
\begin{flalign*}
\qquad\oc & Q\cc_B =^{{\it Dep}_{0,5}}\\ &
\hspace*{0.5cm}\rec\mathfrak{Q}\cdot  A:\mathfrak{Q}:\rcv(\data_B).\deliver.Q =^{{\it Dep}_2}  \\ &
\hspace*{0.5cm} \rec\mathfrak{Q}\cdot (\{?\conn B\},\nrcv(\data_B)).\oc \deliver.Q\cc_A + (\{?\nconn B\},\nrcv(\data_B)).\mathfrak{Q}
\end{flalign*}}%
}

\begin{table}\small
\centering
\caption{Axiomatization of hiding, abstraction and encapsulation operators.}\label{Tab::abstaxiom}
\begin{tabular}{|llll|}
\hline  ${\it Res}_1$ & $(\nu\ell)(t_1+t_2)=(\nu\ell) t_1+(\nu\ell)
t_2$ & ${\it Res}_3$ & $(\nu\ell)0=0$ \cr ${\it Res}_2$ &
$(\nu\ell)(\mathcal{C},\eta).t=
(\mathcal{C}[?/\ell],\eta[?/\ell]).(\nu\ell)t$ & & 
\cr & & & \cr
${\it Ecp}_1$ &
\multicolumn{3}{l|}{$\partial_m((\mathcal{C},\nsnd(\mathfrak{m},\ell)).t)=(\mathcal{C},\nsnd(\mathfrak{m},\ell)).\partial_m(t)$}
\cr ${\it Ecp}_2$ &
\multicolumn{3}{l|}{$\partial_m((\mathcal{C},\nrcv(\mathfrak{m})).t)={\it if}((m\neq \mathfrak{m}),(\mathcal{C},\nrcv(\mathfrak{m})).\partial_m(t),
0)$} \cr & & & \cr
${\it Abs}_1$ &
\multicolumn{3}{l|}{$\tau_m((\mathcal{C},\nrcv(\mathfrak{m})).t)={\it if}((m= \mathfrak{m}),(\mathcal{C},\tau).\tau_m(t), (\mathcal{C},\nrcv({\mathfrak{m}})).\tau_m(t))$} \cr
${\it Abs}_2$ &
\multicolumn{3}{l|}{$\tau_m((\mathcal{C},\nsnd(\mathfrak{m},\ell)).t)={\it if}((m= \mathfrak{m}),(\mathcal{C},\tau).\tau_m(t), (\mathcal{C},\nsnd(\mathfrak{m},\ell)).\tau_m(t))$} \cr
${\it Abs}_3$ & $\tau_m(t_1+t_2)=\tau_m(t_1)+\tau_m(t_2)$
& ${\it Ecp}_3$ &
$\partial_m(t_1+t_2)=\partial_m(t_1)+\partial_m(t_2)$ \cr ${\it
Abs}_4$ & $\tau_m(0)=0$ & ${\it Ecp}_4$ & $\partial_m(0)=0$  \cr
& & & \cr
${\it T}_1$ &
\multicolumn{3}{l|}{$(\mathcal{C}',\eta).((\mathcal{C}_1,\eta).t+(\mathcal{C}_2,\eta).t+t')=(\mathcal{C}',\eta).((\mathcal{C},\eta).t+t')$}\cr
& \multicolumn{3}{l|}{$\mbox{iff } \exists \ell,\ell'\in{\it Loc},\exists \mathcal{C}\in\mathbb{C}^v({\it Loc})\cdot ( \mathcal{C}_1=\mathcal{C}\cup\{\ell\conn\ell' \}\,\wedge\,\mathcal{C}_2=\mathcal{C}\cup\{\ell\nconn \ell'\}$}
\cr ${\it T}_2$ &
\multicolumn{3}{l|}{$(\mathcal{C},\eta).((\mathcal{C}',\tau).(t_1+t_2)+t_2)=(\mathcal{C},\eta).(t_1+t_2)$} \cr
\hline
\end{tabular}

\end{table}

The axioms of hiding and encapsulation are given in Table
\ref{Tab::abstaxiom}. Axiom $T_1$ accumulates the network constraints that constitute a partitioning while $T_2$ removes a $\tau$ action which preserves the behavior of a network after some
topology changes. The remaining axioms in this table are  similar to the lossy setting.

Axioms for process names are given in Table~\ref{Tab::recaxiom}. $\it Unfold$ and $\it Fold$ express existence and uniqueness of a
solution for the equation $\mathfrak{A} \deff t$, which correspond to
Milner's standard axioms, and the \textit{Recursive Definition
Principle} ({\it RDP}) and \textit{Recursive Specification
Principle} ({\it RSP}) in {\it ACP}. ${\it Unfold}$ states that each
recursive operator has a solution (whether it is guarded or not),
while ${\it Fold}$ states that each guarded recursive operator has
at most one solution.

The behavior of $\tau_{\it Msg}(\partial_{\it Msg}(\oc P\cc_A\parallel \oc Q\cc_B))$ by using the axioms of Table ~\ref{Tab::recaxiom} is expressed by:
{{\allowdisplaybreaks
\begin{flalign*}
\qquad\tau_{\it Msg}(\partial_{\it Msg}(\oc & P\cc_B \parallel\oc Q\cc_B)) =\\ &
\hspace*{0.5cm}\rec\mathfrak{Q}\cdot  (\{A\conn B\},\tau).(\{\},\deliver).\mathfrak{Q}+ (\{A\nconn B\},\tau).0
\end{flalign*}}%
}

\noindent
which explains that in case $A$ is connected to $B$, each sending of $\data_B$ is followed by the internal action $\it deliver$

\begin{table}\small
\centering
\caption{Axioms for process names.}\label{Tab::recaxiom}    \begin{tabular}{|ll|}
\hline
$\rec \mathfrak{A}\cdot t = t\{\rec \mathfrak{A} \cdot t/ \mathfrak{A}\}$ &
${\it Unfold}$\cr
$t_1=t_2\{t_1/\mathfrak{A}\}\Rightarrow t_1=\rec \mathfrak{A}\cdot t_2, \mbox{\small ~if $A$ is guarded in $t_2$}$ & ${\it Fold}$\cr 
$\rec \mathfrak{A}\cdot (\mathfrak{A}+t)=\rec \mathfrak{A}\cdot t$ & ${\it
Ung}$\cr $\rec \mathfrak{A}\cdot  ((\mathcal{C},\tau). ((\mathcal{C}',\tau). t'+t)+s) =$ & ${\it
WUng}_1$\cr $~~~~~~~~\rec
\mathfrak{A}\cdot ((\mathcal{C},\tau). (t'+t)+s),\mbox{\small~ if $\mathfrak{A}$ is unguarded in $t'$}
$ &  \cr $\rec \mathfrak{A}\cdot
((\mathcal{C},\tau).(\mathfrak{A}+t)+s)=\rec \mathfrak{A}\cdot
((\mathcal{C},\tau). (t+s)+s)$ & ${\it
WUng}_2$\cr
$\tau_m(\rec
\mathfrak{A}\cdot t)=\rec \mathfrak{A}\cdot\tau_m(t),\mbox{\small
~ if $\mathfrak{A}$ is serial in $t$}$ & ${\it Hid}$\cr
\hline
\end{tabular}
\end{table}


It is not hard to see that the axioms of Table \ref{Tab::reliableparallelaxiom}, Table \ref{Tab::abstaxiom} and Table \ref{Tab::recaxiom} provide a sound axiomatization of {\it RCNT}. This can be checked by verifying soundness for each axiom individually. 

\begin{theorem}\label{The::soundnessCNT}
The axiomatization is sound,
i.e.\ for all closed {\it RCNT} terms
$t_1$ and $t_2$, if
$t_1=t_2$ then $t_1\simeq_{rb}
t_2$.
\end{theorem}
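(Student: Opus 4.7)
The plan is the standard one for soundness: verify each axiom $t_1=t_2$ in Tables \ref{Tab::reliableparallelaxiom}, \ref{Tab::abstaxiom}, \ref{Tab::recaxiom} individually, i.e.\ exhibit for each axiom a witness relation establishing $t_1\simeq_{rbr}t_2$ (and thus $t_1\simeq_{rb}$ in the notation of the statement). Soundness then extends to arbitrary derivations by the congruence result Theorem~\ref{The::cong}, which lets us rewrite inside contexts, together with the fact that $\simeq_{rbr}$ is an equivalence (Corollary~\ref{Cor::eqiv}), which closes the argument under transitivity and symmetry of the axiom system. For each axiom the witness is the smallest relation that contains the pair $(t_1,t_2)$ (or its instances) together with the identity on subterms reachable after the first step, and we then match transitions using the operational rules of Tables \ref{Tab::RRBPTsemanticsII} and \ref{Tab::SOSRCNT}.

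The routine cases are the choice, left/communication-merge and synchronisation axioms ${\it Ch}_{1-4}$, ${\it Br}$, ${\it LM}_{1,2,3}$, ${\it LM}_{1,2}'$, $S_{1-4}$, ${\it Sync}_{1,2,3}$, and the hiding/abstraction/encapsulation axioms ${\it Res}_{1-3}$, ${\it Ecp}_{1-4}$, ${\it Abs}_{1-4}$: here the transitions of both sides are in exact one-to-one correspondence, so identity on continuations suffices. For the recursion axioms ${\it Unfold}$, ${\it Fold}$, ${\it Ung}$, ${\it WUng}_{1,2}$, ${\it Hid}$, I would follow the pattern of \cite{FatemehFI10}, using the guardedness criterion for ${\it Fold}$ to ensure uniqueness of the solution. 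The axioms ${\it TRes}_{1-5}$ for the new topology-restriction operator are verified by a straightforward induction along the applicable rule $\it TR$, noting that $\mathcal{C}\rhd t$ just augments every generated constraint by $\mathcal{C}$.

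The genuinely new work concentrates on three clusters. First, ${\it Dep}_0$ and ${\it Dep}_{1-7}$ together reduce $\oc t\cc_\ell$ to a recursive specification that adds, via the summation over $\mathfrak{m}'\notin{\it Message}(t,\emptyset)$, exactly the self-loops produced by rule ${\it Rcv}_2$; here one needs a careful induction on the structure of $t$ (using $\mathcal{S}$ to handle process names) to show that the set of enabled receive actions of $\ell:\mathfrak{Q}:t$ is precisely the complement, so the two sides generate the same $\nrcv$-transitions. Second, ${\it Dep}_2$ and ${\it Dep}_7$ must mimic the way rules ${\it Rcv}_{1,2}$ and ${\it Sen}_{1,2,3,4}$ split a single receive action into a connected and a disconnected branch, which is precisely where the partitioning clause of Definition~\ref{Def::brbism} becomes essential: a single $(\mathcal{C},\nrcv(\mathfrak{m}))$ transition on the right is matched by the pair of transitions under $\{\ell\conn \ell'\}$ and $\{\ell\nconn \ell'\}$ on the left, and $\langle-\rangle$ witnesses the partition. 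Third, ${\it T}_1$ explicitly exercises the partitioning clause, and ${\it T}_2$ is the branching-$\tau$ law adapted to network constraints; both follow by inspecting the shape of transitions through a root prefix.

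The main obstacle, I expect, will be axiom ${\it Dep}_0$ together with ${\it Dep}_7$. The difficulty is twofold: one has to reconcile the recursive unfolding $\rec\mathfrak{Q}\cdot (\ldots)$ on the right-hand side, whose transitions are generated by rule $\it Rec$ using unfolding, with the transitions of $\oc t\cc_\ell$ produced by rule ${\it Rcv}_2$, which universally quantifies over \emph{all} constraints $\mathcal{C}$ for which no more specific receive transition exists; and one must simultaneously confirm, via the partitioning clause of Definition~\ref{Def::brbism}, that the $(\{\},\nrcv(\mathfrak{m}'))$ self-loop on the right covers the family of more constrained receive transitions on the left while the residual cases are partitioned off by the ${\it Sen}_{3,4}$-style branches inside ${\it Dep}_7$. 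This bookkeeping of constraint partitions, especially in combination with the well-formedness side-condition $\mathcal{C}_1\cup\mathcal{C}_2\in\mathbb{C}^v({\it Loc})$ in ${\it TRes}_1$, is where the verification has to be carried out with the greatest care.
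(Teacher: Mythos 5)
Your proposal is correct and follows essentially the same route as the paper: axiom-by-axiom verification modulo $\simeq_{rbr}$, with the real work concentrated on ${\it Dep}_{0-7}$, ${\it TRes}_{1-5}$, ${\it LM}'_{1,2}$ and $T_1$, and with exactly the paper's key observations that ${\it Dep}_0$ must be reconciled with the negative-premise rule ${\it Rcv}_2$ and that $T_1$ (and the connected/disconnected split in ${\it Dep}_{2,7}$) is justified by the partitioning clause of Definition~\ref{Def::brbism}. The only difference is cosmetic: the paper dispatches the axioms shared with the lossy setting by citing their known soundness in \cite{FatemehFI10} together with the implication $\simeq_{b}\Rightarrow\simeq_{br}$, rather than re-verifying them directly as you propose.
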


Our axiomatization is also ground-complete for terms with a finite-state CLTS, but not for infinite-state CLTSs. For example,
$\rec \mathfrak{W}\cdot (\{\},\nsnd({\it req}(A),A)).  \mathfrak{W}\parallel \sum_{lx:{\it Loc}}(\{?\conn B\},\nrcv({\it req}(lx))). \mathfrak{W} $ produces
an infinite-state CLTS, since at each recursive call a
new parallel operator is generated. Its equality to $\rec \mathfrak{H}\cdot (\{\},\nsnd({\it req}(A),A)). \mathfrak{H}$ cannot be proved by our axiomatization.

\begin{theorem}\label{The::completenessCNT}
The axiomatization is ground-complete, i.e., for all closed finite-state reliable computed
network terms $t_1$ and $t_2$,
$t_1\simeq_{rb}t_2$ implies $t_1=
t_2$.
\end{theorem}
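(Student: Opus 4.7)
My plan is to follow the standard two-phase route used for ground-completeness in the lossy counterpart (cf.\ \cite{FatemehFI10} and Luttik's approach \cite{thesis-Luttik}): first eliminate all auxiliary operators and reduce every closed finite-state {\it RCNT} term to a canonical guarded recursive specification; then, given two bisimilar such terms, build a shared guarded specification that they both satisfy, and conclude by $\it Fold$. The precongruence Theorem \ref{The::cong} and soundness Theorem \ref{The::soundnessCNT} will be used throughout without comment.

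\paragraph{Phase 1: elimination to a basic form.}
First I would show, by induction on the term structure, that every closed {\it RCNT} term can be equated to a term built only from $0$, computed prefix $(\mathcal{C},\eta).t$, choice $+$, process names, and $\rec$. The axioms ${\it Dep}_{0{-}7}$ eliminate $\oc\cdot\cc_\ell$ and $\ell{:}t{:}t'$; ${\it TRes}_{1{-}5}$ push the topology restriction $\mathcal{C}\rhd\cdot$ through prefixes, choices, recursion, names and $0$; ${\it Br},{\it LM}_{1,2,3}',{\it LM}_2,S_{1{-}4},{\it Sync}_{1{-}3}$ eliminate $\parallel,\lm,\mid$; ${\it Res}_{1{-}3},{\it Ecp}_{1{-}4},{\it Abs}_{1{-}4}$ handle $(\nu\ell),\partial_m,\tau_m$; and ${\it Dep}_5,{\it Unfold}$ unfold recursion one step whenever needed. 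Because $t_1,t_2$ are finite-state, only finitely many reachable subterms arise during this elimination, so the resulting basic term can be rewritten via $\it Fold$ into a closed form $\rec\mathfrak{A}_1\cdot E_1$ over a finite guarded system $\mathfrak{A}_i\deff\sum_{j\in J_i}(\mathcal{C}_{ij},\eta_{ij}).\mathfrak{A}_{f(i,j)}$. Call this a \emph{normal form}.

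\paragraph{Phase 2: saturation reflecting the partitioning clause.}
The essential subtlety compared to \cite{FatemehFI10} is the partitioning condition in Definition \ref{Def::brbism}: a single $(\mathcal{C},\eta)$-transition of $t_1$ is matched by a collection of transitions from $t_2$ whose constraints $\langle\mathcal{C}_1\rangle,\ldots,\langle\mathcal{C}_k\rangle$ partition $\langle\mathcal{C}\rangle$. To make this visible syntactically, I would close each normal form under the following saturation rules, all sound by ${\it Ch}_{5,6}$, $T_1$ and $T_2$: (i) add any summand $(\mathcal{C}',\eta).t$ with $\mathcal{C}'\preccurlyeq\mathcal{C}$ using ${\it Ch}_6$; (ii) replace any pair of summands $(\mathcal{C}\cup\{\ell\conn\ell'\},\eta).t+(\mathcal{C}\cup\{\ell\nconn\ell'\},\eta).t$ by the single summand $(\mathcal{C},\eta).t$ via $T_1$ (and symmetrically expand a single summand to its two-piece partition); (iii) absorb $\tau$-stuttering steps via $T_2$; (iv) fold $(\mathcal{C},\nsnd(\mathfrak{m},?)).t$ into its $\langle\cdot\rangle$-instances via ${\it Ch}_5$. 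Using finiteness, this saturation terminates and the resulting \emph{saturated normal form} records, as explicit summands, exactly the transitions that occur in Definition \ref{Def::brbism}, including the partitioned matches after finitely many applications of $T_1$.

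\paragraph{Phase 3: extraction of a shared specification.}
Given saturated normal forms $\rec\mathfrak{A}_1\cdot E_1$ and $\rec\mathfrak{B}_1\cdot E_2$ for $t_1$ and $t_2$ with $t_1\simeq_{rb}t_2$, I would construct a guarded recursive specification $F$ whose variables $\mathfrak{X}_{(p,q)}$ are indexed by pairs $(p,q)$ of reachable states with $p\simeq_{br}q$ (including the rooted top pair, handled separately using Definition \ref{Def::rbrbism}), and whose right-hand side for $(p,q)$ contains every summand $(\mathcal{C},\eta).\mathfrak{X}_{(p',q')}$ for which $p\xrightarrow{(\mathcal{C},\eta)}p'$ and $q\xRightarrow{}q''\xrightarrow{\langle(\mathcal{C},\eta)\rangle}q'$ with $p\simeq_{br}q''$ and $p'\simeq_{br}q'$, plus symmetric summands from the $q$-side, plus the $\tau$-stuttering summands $(\mathcal{C},\tau).\mathfrak{X}_{(p,q')}$ required by branching bisimulation. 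Saturation guarantees that both $\rec\mathfrak{A}_1\cdot E_1$ and $\rec\mathfrak{B}_1\cdot E_2$, read as assignments $\mathfrak{X}_{(p,q)}\mapsto p$ and $\mathfrak{X}_{(p,q)}\mapsto q$, are provably (via the axioms of Tables \ref{Tab::reliableparallelaxiom}--\ref{Tab::recaxiom}) solutions of $F$; hence $\it Fold$ yields $t_1=t_2$.

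\paragraph{Main obstacle.}
The hard part, and where the present setting genuinely differs from the lossy case, is Phase 2: proving that the partitioning condition of Definition \ref{Def::brbism} can always be reconciled by \emph{finitely many} applications of $T_1$, ${\it Ch}_5$ and ${\it Ch}_6$ alone. Arbitrary partitions $\langle\mathcal{C}_1\rangle,\ldots,\langle\mathcal{C}_k\rangle$ of $\langle\mathcal{C}\rangle$ need not be binary splits on a single link, and $T_1$ only merges a two-piece split $\{\ell\conn\ell',\ell\nconn\ell'\}$ at a time. I would therefore prove a separate merging lemma: any finite partition of $\langle\mathcal{C}\rangle$ into well-formed network constraints over the finitely many addresses occurring in $t_1,t_2$ can be recursively decomposed into iterated binary link-splits, so that $k-1$ uses of $T_1$ (together with $\it Ch_6$ to weaken constraints where needed) collapse all partitioning summands into the single required one. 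Once this lemma is in place, Phase 3 is a routine instance of the standard RSP-based completeness argument.
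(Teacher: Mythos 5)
Your overall route---normalize every closed finite-state term to a guarded recursive specification over $0$, computed prefixes, $+$ and $\rec$, then show that two bisimilar normal forms are solutions of one common guarded specification and conclude by ${\it Fold}$---is exactly the structure of the paper's proof: the paper restricts to {\it RCNT}$_f$, proves the normalization step (your Phase 1) via ${\it Dep}_{0-7}$, ${\it TRes}_{1-5}$ and the other elimination axioms, and then declares the remaining two steps to be ``exactly the same as in the lossy setting'' of \cite{FatemehFI10}. You go further than the paper in isolating the partitioning clause of Definition \ref{Def::brbism} as the genuinely new difficulty of the reliable setting; the paper does not address it at all, so your Phase 2 is additional content rather than a restatement.

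That said, the merging lemma your Phase 2 rests on is false as stated. A finite partition of $\Gamma(\langle\mathcal{C}\rangle)$ into classes of well-formed network constraints is a partition of a Boolean cube (one coordinate per relevant link) into subcubes, and not every such partition arises from iterated binary splits on a single link. For three links $e_1,e_2,e_3$, the five cells $\{e_1{=}0,e_2{=}0\}$, $\{e_1{=}1,e_3{=}0\}$, $\{e_2{=}1,e_3{=}1\}$, $\{e_1{=}0,e_2{=}1,e_3{=}0\}$, $\{e_1{=}1,e_2{=}0,e_3{=}1\}$ partition the cube, yet no single link is decided in every cell, so no sequence of $k-1$ applications of $T_1$ (which only merges a sibling pair $\mathcal{C}\cup\{\ell\conn\ell'\}$ versus $\mathcal{C}\cup\{\ell\nconn\ell'\}$) can collapse them. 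The repair is to work in the opposite direction: inside the outer prefix that $T_1$ requires, apply $T_1$ right-to-left to refine every summand of both normal forms down to the common full refinement over all links occurring in $t_1$ and $t_2$ (which is a tree partition), identify the targets there, and only then merge back up. Note that this identification step needs the states $s_1',\ldots,s_k'$ of a partitioned match to be sent to provably equal variables of the shared specification, which is precisely the point your Phase 3 (and the paper) leaves implicit. So the obstacle you identified is real, but the lemma you propose does not close it; a common-refinement argument does.
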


See sections \ref{sec::soundness} and \ref{sec::completeness} for the proofs of theorems \ref{The::soundnessCNT} and \ref{The::completenessCNT}, respectively.
\section{Case Study}\label{sec::casestudy}
In MANETs, nodes communicate through others via a multi-hop communication. Hence, nodes act as routers to make the communication possible among not directly connected nodes. We illustrate the applicability of our axioms in the analysis of MANET protocols through a simple routing protocol inspired by the AODV protocol.

\subsection{Protocol Specification}
The protocol consists of three processes $P$, $M$, and $Q$, each specifying the behavior of a node as the source (that finds a route to a specific destination), middle node (that relays messages from the source to the destination), and destination. The description of these process are given in Figure \ref{Fig::protocol}.

\begin{figure} [hptb]
\begin{eqnarray*}
P &\deff& \sense(B,\snd(\data_B).P,\snd(\req).P_1)\\
P_1 &\deff& [\rcv(\rep_C).P_2+\rcv(\rep_B).P+\snd(\req).P_1]\\\
P_2 &\deff& \sense(C,\rcv(error).P + \snd(\data_C).P_2,\snd(\req).P_1)\\
M &\deff& \rcv(\req).\snd(\req).M_1\\
M_1 &\deff& \rcv(\rep_B).\snd(\rep_C).M_2+\snd(\req).M_1 \\
M_2 &\deff& \sense(B, \rcv(\data_C).\snd(\data_B).M_2,\snd(error).\snd(\req).M_1)\\
Q&\deff& \rcv(\req).\snd(\rep_B).Q+\rcv(\data_B).deliver.Q
\end{eqnarray*}
\caption{The specification of processes $P$, $M$, and $Q$ as a part of our simple routing protocol.}\label{Fig::protocol}
\end{figure}

Process $P$, deployed at the address $A$, uses the neighbor discovery service of the data link layer to examine if it has a direct link to the destination with the address $B$. If it is connected, then it sends its data directly by broadcasting the message $\data_B$; otherwise, it initiates the route discovery procedure by sending the message $\req$, then behaving as $P_1$. This process waits until it receives a reply from a middle name with the address $C$ or $B$. In the former case, it behaves as $P_2$ which indicates that $A$ sends it data through $C$ as long as $C$ is connected to $A$. In the latter case, it behaves as $P$ which indicates that $A$ sends it data as long as $B$ is directly connected to $A$.

Process $M$ relays $\req$ messages to find a route to $B$ and then behaves as $M_1$. This process waits until it receives a reply. To model waits with a timeout, it non-deterministically sends a request again. Upon receiving a reply from $C$ it behaves as $M_2$, indicating that it relays data messages of $A$ as long as it has a link to $C$. Finally, process $Q$ sends a reply upon receiving a request message and receives data messages.

To simplify the route maintenance procedure of AODV, the middle node takes advantage of the sensing operator when it behaves as $M_2$. Whenever it finds out that it has no link to $C$, it sends an error message to its upstream node, i.e., $A$, to inform it that its route to $B$ through $C$ is not valid. Afterwards, they both execute the route discovery procedure by sending a request message.

The network with the three nodes of a source, middle, and destination is specified by \[\mathcal{N} \equiv \tau_\Msg(\partial_\Msg(\oc P\cc_A\parallel \oc M\cc_C\parallel \oc Q\cc_B)).\] Analyzing $(\nu A)(\nu B)(\nu C)\mathcal{N}$, whose network addresses have been abstracted away, reveals that it is rooted branching bisimilar to $\rec X\cdot \tau.\deliver.X+\tau.0$. 
Thus, possibly a deadlock occurs where data is not delivered to $B$. Such behavior may be the result of a conceptual mistake in the protocol design or lossy communication between $A$ and $B$. However, the latter one does not exist in our reliable setting. 
We propose a technique in Section \ref{sec::precon} to discover only those faulty behaviors that are due to an incorrect protocol design. 

The network $\partial_\Msg(\oc P\cc_A\parallel \oc M\cc_C\parallel \oc Q\cc_B)$ can be simplified as:
{{\allowdisplaybreaks
\begin{flalign}\label{eq::base}
\qquad\partial_\Msg(\oc& P\cc_A\parallel \oc M\cc_C\parallel \oc Q\cc_B)=\\ &
(\{A\conn B\},\nsnd(\data_B,A)).\partial_\Msg(\oc P\cc_A\parallel \oc M\cc_C\parallel \oc \deliver.Q\cc_B)+\nonumber\\
&(\{A\nconn B,A\conn C\},\nsnd(\req,A)).\partial_\Msg(\oc P_1\cc_A\parallel \oc\snd(\req).M_1\cc_C\parallel \oc Q\cc_B)+\nonumber\\
&(\{A\nconn B,A\nconn C\},\nsnd(\req,A)).\partial_\Msg(\oc P_1\cc_A\parallel \oc M\cc_C\parallel \oc Q\cc_B).\nonumber
\end{flalign}}%
}
Next, we simplify $\partial_\Msg(\oc P_1\cc_A\parallel \oc\snd(\req).M_1\cc_C\parallel \oc Q\cc_B)$ as
{{\allowdisplaybreaks
\begin{flalign}\label{eq::m}
\qquad\partial_\Msg(\oc&  P_1\cc_A\parallel \oc\snd(\req).M_1\cc_C\parallel \oc Q\cc_B)=\\ &
(\{A\conn B\},\nsnd(\req,A)).\partial_\Msg(\oc P_1\cc_A\parallel \oc \snd(\req).M_1\cc_C\parallel \oc \snd(\rep_B).Q\cc_B)+\nonumber\\
&(\{A\nconn B\},\nsnd(\req,A)).\partial_\Msg(\oc P_1\cc_A\parallel \oc \snd(\req).M_1\cc_C\parallel \oc Q\cc_B)+\nonumber\\
&(\{C\conn B\},\nsnd(\req,C)).\partial_\Msg(\oc P_1\cc_A\parallel \oc M_1\cc_C\parallel \oc \snd(\rep_B).Q\cc_B)+\nonumber\\
&(\{C\nconn B\},\nsnd(\req,C)).\partial_\Msg(\oc P_1\cc_A\parallel \oc M_1\cc_C\parallel \oc Q\cc_B).\nonumber
\end{flalign}}%
}
Now, we continue by extending $\partial_\Msg(\oc P_1\cc_A\parallel \oc M_1\cc_C\parallel \oc \snd(\rep_B).Q\cc_B)$:
{{\allowdisplaybreaks
\begin{flalign*}
\qquad\partial_\Msg(\oc&  P_1\cc_A\parallel \oc M_1\cc_C\parallel \oc \snd(\rep_B).Q\cc_B)=\\ &
(\{~\},\nsnd(\req,A)).\partial_\Msg(\oc P_1\cc_A\parallel \oc M_1\cc_C\parallel \oc \snd(\rep_B).Q\cc_B)+\\
&(\{~\},\nsnd(\req,C)).\partial_\Msg(\oc P_1\cc_A\parallel \oc M_1\cc_C\parallel \oc \snd(\rep_B).Q\cc_B)+\\
&(\{B\conn A,C\},\nsnd(\rep_B,B)).\partial_\Msg(\oc P\cc_A\parallel \oc \snd(\rep_C).M_2\cc_C\parallel \oc Q\cc_B)+\\
&(\{B\conn A,B\nconn C\},\nsnd(\rep_B,B)).\partial_\Msg(\oc P\cc_A\parallel \oc M_1\cc_C\parallel \oc Q\cc_B)+\\
&(\{B\nconn A,B\conn C\},\nsnd(\rep_B,B)).\partial_\Msg(\oc P_1\cc_A\parallel \oc\snd(\rep_C).M_2\cc_C\parallel \oc Q\cc_B).
\end{flalign*}}%
}By simplifying the term $\partial_\Msg(\oc P\cc_A\parallel \oc \snd(\rep_C).M_2\cc_C\parallel \oc Q\cc_B)$, which indicates that $A$ and $C$ have found a direct route to $B$, we reach $\partial_\Msg(\oc P\cc_A\parallel \oc M_2\cc_C\parallel \oc Q\cc_B)$:
{{\allowdisplaybreaks
\begin{flalign*}
\qquad\partial_\Msg(\oc&  P\cc_A\parallel \oc \snd(\rep_C).M_2\cc_C\parallel \oc Q\cc_B)=\\ &
(\{~\},\nsnd(\rep_C,C)).\partial_\Msg(\oc P\cc_A\parallel \oc M_2\cc_C\parallel \oc Q\cc_B)+\\
&(\{A\conn B\},\nsnd(\data_B,A)).\partial_\Msg(\oc P\cc_A\parallel \oc \snd(\rep_C).M_2\cc_C\parallel \oc \deliver.Q\cc_B)+\\
&(\{A\nconn B\},\nsnd(\req,A)).\partial_\Msg(\oc P_1\cc_A\parallel \oc\snd(\rep_C).M_2\cc_C\parallel \oc Q\cc_B).\\
\end{flalign*}}%
}By extending $\partial_\Msg(\oc P\cc_A\parallel \oc M_2\cc_C\parallel \oc Q\cc_B)$, we have:
{{\allowdisplaybreaks
\begin{flalign*}
\qquad\partial_\Msg(\oc&   P\cc_A\parallel \oc M_2\cc_C\parallel \oc Q\cc_B)=\\ &
(\{A\conn B\},\nsnd(\data_B,A)).\partial_\Msg(\oc P\cc_A\parallel \oc M_2\cc_C\parallel \oc \deliver.Q\cc_B)+\\
&(\{A\nconn B\},\nsnd(\req,A)).\partial_\Msg(\oc P_1\cc_A\parallel \oc M_2\cc_C\parallel \oc Q\cc_B).
\end{flalign*}}%
}
Finally extending $\partial_\Msg(\oc P_1\cc_A\parallel \oc M_2\cc_C\parallel \oc Q\cc_B)$ results:
{{\allowdisplaybreaks
\begin{flalign*}
\qquad\partial_\Msg(\oc&   P_1\cc_A\parallel \oc M_2\cc_C\parallel \oc Q\cc_B)=\\ &
(\{A\conn B\},\nsnd(\req,A)).\partial_\Msg(\oc P_1\cc_A\parallel \oc M_2\cc_C\parallel \oc \snd(\rep_B).Q\cc_B)+\\
&(\{A\nconn B\},\nsnd(\req,A)).\partial_\Msg(\oc P_1\cc_A\parallel \oc M_2\cc_C\parallel \oc Q\cc_B)+\\
&(\{C\nconn B\},\nsnd({\it error},C)).\partial_\Msg(\oc P_1\cc_A\parallel \oc \snd(\req).M_1\cc_C\parallel \oc Q\cc_B).
\end{flalign*}}%
}
The following scenario, found by above equations, is valid for a topology in which $A$ has only a multi-hop link to $B$ via $C$, but $B$ has a direct link to $A$: \[
\begin{array}{l}
\partial_\Msg(\oc P\cc_A\parallel \oc M\cc_C\parallel \oc Q\cc_B)\\
\hspace*{2cm}\overto{(\{A\nconn B,A\conn C\},\nsnd(\req,A))}\partial_\Msg(\oc P_1\cc_A\parallel \oc\snd(\req).M_1\cc_C\parallel \oc Q\cc_B)\\
\hspace*{2cm}\overto{(\{C\conn B\},\nsnd(\req,C))}\partial_\Msg(\oc P_1\cc_A\parallel \oc M_1\cc_C\parallel \oc \snd(\rep_B).Q\cc_B)\\
\hspace*{2cm} \overto{(\{B\conn A,C\},\nsnd(\rep_B,B))}\partial_\Msg(\oc P\cc_A\parallel \oc \snd(\rep_C).M_2\cc_C\parallel \oc Q\cc_B)\\\hspace*{2cm}\overto{(\{~\},\nsnd(\rep_C,C))}\partial_\Msg(\oc P\cc_A\parallel \oc M_2\cc_C\parallel \oc Q\cc_B)\\
\hspace*{2cm}\overto{(\{A\nconn B\},\nsnd(\req,A))}\partial_\Msg(\oc P_1\cc_A\parallel \oc M_2\cc_C\parallel \oc Q\cc_B)\\\hspace*{2cm}\overto{(\{A\nconn B\},\nsnd(\req,A))}\partial_\Msg(\oc P_1\cc_A\parallel \oc M_2\cc_C\parallel \oc Q\cc_B) \\\hspace*{2cm}\ldots
\end{array}
\]The reason is found in the specification of $M_2$ which does not handle request messages, and hence, for such a topology no data will be received by $B$ although there is a path form $A$ to $B$ and from $B$ to $A$. Therefore, we revise $M_2$ as:
\begin{eqnarray*}
M_2 &\deff &\sense(B, \rcv(\data_C).\snd(\data_B).M_2+\rcv(\req).\snd(\rep_C).M_2,\\ &&\hspace*{1.4cm}\snd(error).\snd(\req).M_1)
\end{eqnarray*}

The path above also exists in the lossy setting, but with all disconnectivity pairs removed from the network constraints. However, an exhaustive and therefore expensive inspection of this path is needed to determine that it is due to a design error. The first transition carries the label $(\{A\nconn B,A\conn C\},\nsnd(\req,A))$ in the reliable setting, meaning that $B$ is not ready to receive, and the label $(\{A\conn C\},\nsnd(\req,A))$ in the lossy setting. The latter label indicates that either $B$ was not ready to receive or it was not connected to $A$. So in the lossy setting one has to examine the origin state to find out if $B$ had an enabled receive action or not. The concept of not being ready to receive is treated in the same way as a lossy communication. Since only the former may be due to a conceptual design in the protocol, finding design errors is not straightforward in the lossy setting. In general the lossy setting will produce a large number of possible error traces that all need to be examined exhaustively, while the reliable setting will produce no spurious error traces.

\subsection{Protocol Analysis}\label{sec::precon}
The properties of wireless protocols, specially MANETs, tends to be weaker in comparison with wired protocols. For instance, the simple property of packet delivery from node $A$ to $B$ is specified as ``if there is a path from $A$ to $B$ for a long enough period of time, any packet sent by $A$, will be received by $B$'' \cite{GlabbeekAWN}. The topology-dependent behavior of communication, and
consequently the need for multi-hop communication between nodes, make their properties preconditioned by the existence of some paths among nodes.

To investigate the topology-dependent properties of MANETs by equational reasoning, it is necessary to enrich our process theory {\it RCNT} to specify behaviors constrained by multi-hop constraints.  To this aim, we extend the action prefix operator of {\it RCNT} with actions that are paired with multi-hop constraints, first introduced in \cite{FMSD16} and here extended by negative  multi-hop connectivity pairs. Viewing a network topology as a directed graph, a multi-hop
constraint is represented as a set of multi-hop (dis)connectivity pairs
$\pconn: {\it Loc}\times {\it Loc}$ and $\npconn: {\it Loc}\times {\it Loc}$. For instance, $A\pconn C$
denotes there exists a multi-hop connection from $A$ to $C$, and
consequently $C$ can indirectly receive data from $A$. Let
$\mathbb{M}({\it Loc})$ denote the set of multi-hop constraints that
can be defined over network addresses in ${\it Loc}$, ranged over by
$\mathcal{M}$. Term $(\mathcal{M},\iota).t$, where $\iota\in {\it IAct}\cup\{\tau\}$, denotes that the action $\iota$ is possible if the underlying topology satisfies the multi-hop network constraint $\mathcal{M}$. Formally, a topology like $\gamma$ satisfies the multi-hop network constraint $\mathcal{M}$, denoted by $\gamma\models\mathcal{M} $  iff for each
$\ell\pconn\ell'$ in $\mathcal{M}$, there is a multi-hop connection
from $\ell$ to $\ell'$ in $\gamma$, and for each
$\ell\npconn\ell'$ in $\mathcal{M}$, there is no multi-hop connection
from $\ell$ to $\ell'$ in $\gamma$. To define a well-formed {\it RCNT} term, the rule which restricts the application of the new prefixed-actions to sequential processes, is added to the previous ones. Furthermore, a term cannot have two summands such that one is prefixed by an action of the form $(\mathcal{C},\eta)$ and the other by an action of the form $(\mathcal{M},\iota)$. So terms with an action of the form $(\mathcal{M},\iota)$ only contain action prefix (with multi-hop constraints), choice and recursion operators.

To reason about the correctness of a MANET protocol, its behavior can be abstractly specified by observable internal actions with the required conditions on the underlying topology, i.e., $\iota$-actions with multi-hop constraints. Intuitively, each communication of a protocol implementation triggers an internal action. Such communications are abstracted away by $\tau$-transitions. Therefore, we define a novel preorder relation to examine if a protocol refines its specification. To this aim, a sequence of $\tau$-transitions is allowed to precede an action that is matched to an action of the specification, as long as the accumulated network constraints of the $\tau$-transitions satisfy the multi-hop network constraint of the matched action. Hence our preorder relation is parametrized by a network constraint to reflect such accumulated network constraints.

To provide such a relation, we use the notation $\xRightarrow{\mathcal{C}}$ which is the reflexive and transitive closure of $\tau$-relations while their network constraints are accumulated:\begin{itemize}
\item $t\xRightarrow{\{~\}} t$;
\item if $t\overto{(\mathcal{C},\tau)}t'$ for some arbitrary network constraint $\mathcal{C}$ and
$t'\xRightarrow{\mathcal{C}'}t''$, then $t\xRightarrow{\mathcal{C}'\cup \mathcal{C}} t''$, where $\mathcal{C}'\cup \mathcal{C}$ is well-formed.
\end{itemize}
Furthermore, the network constraint $\mathcal{C}$ satisfies the multi-hop constraint $\mathcal{C}$, denoted by $\mathcal{C}\models \mathcal{M}$ iff
$\exists\, \gamma\in\Gamma(\mathcal{C})\,( \gamma\models\mathcal{M})$. We remark that a network constraint like $\{A\nconn B\}$ may satisfy both $\{A\pconn B\}$ and $\{A\npconn B\}$, but $\{A\conn B\}$ only satisfies $\{A\pconn B\}$.

\begin{definition} {\label{Def::refine}}
A binary relation $\mathcal{R}_\mathcal{C}$ on ${\it RCNT}$ terms is a refinement relation if $t~\mathcal{R}_\mathcal{C}~s$ implies:
\begin{itemize}
\item if $t\overto{(\mathcal{C}',\eta)}t'$, where $\mathcal{C}\cup\mathcal{C}'\in\mathbb{C}^v({\it Loc})$, then\begin{itemize}
\item $\eta=\tau$ and $t'~ \mathcal{R}_{\mathcal{C}\cup\mathcal{C}'}~ s $ with $\mathcal{C}\cup\mathcal{C}'\models \mathcal{M}$, or
\item there is an $s'$ such that $s\overto{(\mathcal{C},\eta)}s'$, and $t'~ \mathcal{R}_{\mathcal{C}\cup\mathcal{C}'}~ s'$, and $\mathcal{C}\cup\mathcal{C}'\models \mathcal{M}$, or
\item $\eta=\iota$ for some $\iota\in{\it IAct}\cup\{\tau\}$ and there is an $s'$ such that $s\overto{(\mathcal{M},\iota)}s'$ with $t'~ \mathcal{R}_{\mathcal{C}\cup\mathcal{C}'}~ s'$;
\end{itemize}
\item if $s\overto{(\mathcal{M},\iota)}s'$, then there are $t''$ and $t'$ such that $t\xRightarrow{\mathcal{C}'}t''\overto{(\mathcal{C}'',\iota)}t'$ with $t''~ \mathcal{R}_{\mathcal{C}\cup\mathcal{C}'}~ s$ and $t'~ \mathcal{R}_{\mathcal{C}\cup\mathcal{C}'\cup\mathcal{C}''}~ s'$;
\item if $s\overto{(\mathcal{C},\eta)}s'$, then there is a $t'$ such that $t\overto{(\mathcal{C}',\eta)}t'$ with  $t'~ \mathcal{R}_{\mathcal{C}\cup\mathcal{C}'}~ s'$.
\end{itemize}
The protocol $t$ refines the specification $s$, denoted by $t\sqsubseteq s$, if $t~\mathcal{R}_{\{~\}}~s$ for some refinement relation $\mathcal{R}_{\{~\}}$.
\end{definition}

\begin{theorem}\label{Th::preorder}
Refinement is a preorder relation and has the precongruence property.
\end{theorem}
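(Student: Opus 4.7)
The plan is to prove reflexivity, transitivity, and precongruence for each ${\it RCNT}$ operator separately, in each case by exhibiting a concrete refinement relation and verifying the three clauses of Definition \ref{Def::refine}.

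For reflexivity, I would take $\mathcal{R}_\mathcal{C}=\{(t,t)\mid t\text{ a well-formed closed }{\it RCNT}\text{ term}\}$ for every parameter $\mathcal{C}$. Every transition on the left is matched by the identical transition on the right via the second bullet (for concrete $(\mathcal{C}',\eta)$-actions) or via the third bullet with the trivial zero-length $\tau$-closure $t\xRightarrow{\{~\}}t$ (for multi-hop $(\mathcal{M},\iota)$-actions). No genuine use of the multi-hop-satisfaction clause is needed, so the verification is direct once one observes that terms produced by the two prefix families are kept disjoint by the well-formedness rule.

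For transitivity, suppose $t_1\sqsubseteq t_2$ via $\mathcal{R}^1$ and $t_2\sqsubseteq t_3$ via $\mathcal{R}^2$. Define $\mathcal{R}_\mathcal{C}=\mathcal{R}^1_\mathcal{C}\circ\mathcal{R}^2_\mathcal{C}$ and verify the clauses by diagram-chasing through the intermediate $t_2$. For every transition $t_1\overto{(\mathcal{C}',\eta)}t_1'$, do a case split on which clause fires for $(t_1,t_2)\in\mathcal{R}^1_\mathcal{C}$, then lift the resulting witness from $t_2$ to $t_3$ using the matching clause of $\mathcal{R}^2$. The third bullet, $s\overto{(\mathcal{M},\iota)}s'$, requires chasing backwards: the witness $t_2\xRightarrow{\mathcal{C}'}t_2''\overto{(\mathcal{C}'',\iota)}t_2'$ obtained from $\mathcal{R}^2$ must then be simulated by a longer $\tau$-sequence in $t_1$ gathered inductively on the length of this intermediate weak transition, accumulating the topology constraints via the definition of $\xRightarrow{\cdot}$ and exploiting that each $\tau$-step meets the multi-hop obligation.

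For precongruence, I would run a case analysis over the operators $+$, $\parallel$, $\lm$, $\mid$, $\rec\mathfrak{A}\cdot$, $(\nu\ell)$, $\tau_m$, $\partial_m$, $\oc\cdot\cc_\ell$, $\ell{:}\cdot{:}\cdot$, $\mathcal{C}\rhd\cdot$ and ${\it sense}(\ell,\cdot,\cdot)$. For each operator $f$ I would introduce the contextual relation $\mathcal{S}_\mathcal{C}=\{(f(\vec t),f(\vec s))\mid t_i\mathrel{\mathcal{R}^i_\mathcal{C}}s_i\}\cup\mathcal{R}^1_\mathcal{C}\cup\ldots\cup\mathcal{R}^n_\mathcal{C}$ (the inner relations being added so that $\mathcal{S}$ remains closed under subterms produced by reductions), and verify each of the three clauses by analysing which semantic rule in Tables \ref{Tab::RRBPTsemanticsII} and \ref{Tab::SOSRCNT} could have produced the observed transition. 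The main obstacle will be the parallel composition together with the reliable broadcast rule $\it Bro$ and its reception rule $\it Recv$: a single transition of $t_1\parallel t_2$ jointly depends on transitions of both operands, and combining two independent refinement chases into one joint chase of the composite—while keeping $\mathcal{C}_1\cup\mathcal{C}_2[\ell/?]$ well-formed and the accumulated parameter $\mathcal{C}$ coherent—is delicate, especially because the left-hand side may need to perform extra $\tau$-steps on each operand whose cumulative constraints must still meet the same multi-hop obligation $\mathcal{M}$ on the right. A closely related hurdle is the sense operator, where rules ${\it Sen}_{1,2,3,4}$ prepend hypotheses of the form $\ell\conn?$ or $\ell\nconn?$; here the refinement parameter must be explicitly updated with these link hypotheses before the component refinement is invoked, and one must argue the updated parameter is still well-formed.
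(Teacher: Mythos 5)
There is a genuine gap, and it lies precisely in the two places you flag as ``delicate'' without resolving them. The key idea your plan is missing is the one the paper's proof actually rests on: because of the well-formedness restriction (a term with an $(\mathcal{M},\iota)$-prefix may only contain multi-hop action prefix, choice and recursion, and may not mix the two prefix families in its summands), refinement between two terms that have \emph{no} multi-hop prefixed actions collapses to strong bisimulation, via the first and third transfer conditions of Definition \ref{Def::refine}. The paper proves transitivity (Lemma \ref{Lem::trans}) \emph{only} for the case where $t_1$ and $t_2$ have no multi-hop prefixes, and proves precongruence genuinely only for $+$; for all other operators it observes that multi-hop specifications cannot occur under them, so the claim reduces to congruence of strong bisimilarity. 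Your fully general diagram chase does not survive without this restriction. Concretely, in your transitivity argument the case split on ``which clause fires for $(t_1,t_2)$'' includes the sub-case where an internal action of $t_1$ is matched by a multi-hop transition $t_2\overto{(\mathcal{M},\iota)}t_2'$; at that point there is no clause of $\mathcal{R}^2$ to lift with, because Definition \ref{Def::refine} imposes no transfer condition when the \emph{lower} term of a refinement pair performs an $(\mathcal{M},\iota)$-transition. The chase gets stuck, and indeed the theorem as you set it up is not provable in that generality.

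The same diagnosis applies to your precongruence plan. The ``main obstacle'' you describe for $\parallel$ with the rules $\it Bro$ and $\it Recv$ --- combining two refinement chases while each operand's extra $\tau$-steps must meet a multi-hop obligation $\mathcal{M}$ --- is a problem that does not arise in the paper, because a specification containing $(\mathcal{M},\iota)$-prefixes is syntactically forbidden from appearing as an operand of $\parallel$, $\lm$, $\mid$, $(\nu\ell)$, $\tau_m$, $\partial_m$, $\oc\cdot\cc_\ell$ or $\mathcal{C}\rhd\cdot$. For those operators the operands are implementation terms, the refinement relations are strong bisimulations, and the standard congruence argument closes the case; only $+$ (over which multi-hop specifications may be built) needs the contextual-relation construction you describe, and there the paper's proof is essentially your $\mathcal{S}_\mathcal{C}$ construction. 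A secondary, smaller issue: your reflexivity argument matches a multi-hop transition $t\overto{(\mathcal{M},\iota)}t'$ of the right-hand copy by the third bullet with a zero-length $\tau$-closure, but that bullet demands a concrete $(\mathcal{C}'',\iota)$-transition from the left-hand copy, which a pure multi-hop specification does not have; again the restriction to implementation terms (where reflexivity is immediate) is what makes the claim go through.
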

See Section \ref{sec::preorder} for its proof. To analyze the correctness of our simple routing protocol, we investigate if it has the packet delivery property. To this end, we verify whether $\tau_\Msg(\partial_\Msg(\oc P\cc_A\parallel \oc M\cc_C\parallel \oc Q\cc_B))$ refines $\mathfrak{S}$, where $\mathfrak{S}\deff(\{A\pconn B,B\pconn A\},\deliver).\mathfrak{S}+(\{A\npconn B\},\tau).0+(\{A\pconn B,B\npconn A\},\tau).0$.
To this aim, we match all the resulting terms of  $\tau$-transitions to $\mathfrak{S}$ as long as their accumulated network constraints satisfy $\{A\pconn B,B\pconn A\}$. If a $\tau$-transition violates $\{A\pconn B,B\pconn A\}$ but satisfies $\{A\npconn B\}$, then it will be matched to the transition $(\{A\npconn B\},\tau)$. Otherwise, it will be matched to the transition $(\{A\pconn B,B\npconn A\},\tau)$. Therefore, we exploit the provided equations together with the precongruence property of our refinement for the choice operator and the rules of Proposition \ref{Pro::rules}.

\begin{proposition}\label{Pro::rules}
Suppose $\iota\in{\it IAct}$. The following rules holds\[\begin{array}{l}(\mathcal{C},\tau).t\sqsubseteq (\mathcal{M},\iota).s \Leftrightarrow \mathcal{C}\rhd t \sqsubseteq (\mathcal{M},\iota).s \,\wedge\,\mathcal{C}\models\mathcal{M}\\
(\mathcal{C},\iota).t\sqsubseteq (\mathcal{M},\iota).s \Leftrightarrow \mathcal{C}\rhd t \sqsubseteq s
\end{array}
\]
\end{proposition}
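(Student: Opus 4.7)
The key structural observation is that rule ${\it TR}$ puts the outgoing transitions of $\mathcal{C}\rhd t$ in one-to-one correspondence with those of $t$: each $t\overto{(\mathcal{C}',\eta)}t'$ gives rise to $\mathcal{C}\rhd t\overto{(\mathcal{C}\cup\mathcal{C}',\eta)}t'$, with the $\rhd$-context discarded after one step. My plan is first to prove a helper lemma bridging the topology-restriction operator and the accumulated-constraint parameter of Definition~\ref{Def::refine}: for any {\it RCNT} terms $t,u$ and well-formed $\mathcal{C}$, there is a refinement $\mathcal{R}$ with $\mathcal{C}\rhd t\,\mathcal{R}_{\{\}}\,u$ iff there is a refinement $\mathcal{R}'$ with $t\,\mathcal{R}'_{\mathcal{C}}\,u$. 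In each direction I would extend a given refinement by one pair (either $(\mathcal{C}\rhd t,u)\in\mathcal{R}_{\{\}}$ or $(t,u)\in\mathcal{R}'_{\mathcal{C}}$) and verify the three matching conditions using the transition correspondence above; since $\mathcal{C}\rhd t$ collapses to $t'$ after one step, no further occurrences of $\rhd$ need to be tracked inside the relation.

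With the helper lemma in hand, part~1 reduces to analysing the sole transition $(\mathcal{C},\tau).t\overto{(\mathcal{C},\tau)}t$. Because $\iota\in{\it IAct}$, the second option of Definition~\ref{Def::refine} fails (the specification $(\mathcal{M},\iota).s$ has no $\tau$-labelled outgoing transition, its only transition being labelled $\iota\neq\tau$) and the third option fails (it requires $\eta\in{\it IAct}\cup\{\tau\}$ matched against the spec's $\iota$, hence $\eta=\iota\neq\tau$, contradicting $\eta=\tau$). Only the first option can apply, which simultaneously yields $\mathcal{C}\models\mathcal{M}$ and $t\,\mathcal{R}_{\mathcal{C}}\,(\mathcal{M},\iota).s$; the helper lemma then delivers $\mathcal{C}\rhd t\sqsubseteq(\mathcal{M},\iota).s$. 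Conversely, given $\mathcal{C}\rhd t\sqsubseteq(\mathcal{M},\iota).s$ and $\mathcal{C}\models\mathcal{M}$, I would use the helper lemma backwards and add the single pair $((\mathcal{C},\tau).t,(\mathcal{M},\iota).s)$ to $\mathcal{R}_{\{\}}$: the new $(\mathcal{C},\tau)$-transition is matched by option~1 (using $\mathcal{C}\models\mathcal{M}$), and the specification's $(\mathcal{M},\iota)$-transition is matched by prepending the $(\mathcal{C},\tau)$-step to the weak run produced when applying the second condition of Definition~\ref{Def::refine} to $t\,\mathcal{R}_{\mathcal{C}}\,(\mathcal{M},\iota).s$.

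Part~2 is analogous but cleaner. Now $\iota\in{\it IAct}$ rules out options~1 and~2 when matching $(\mathcal{C},\iota).t\overto{(\mathcal{C},\iota)}t$ (option~2 would require a single-hop $(\mathcal{C}',\iota)$-transition of $(\mathcal{M},\iota).s$, but its only transition carries the multi-hop label $(\mathcal{M},\iota)$), so option~3 matches it against $(\mathcal{M},\iota).s\overto{(\mathcal{M},\iota)}s$. This yields $t\,\mathcal{R}_{\mathcal{C}}\,s$ with no accompanying $\mathcal{C}\models\mathcal{M}$ requirement, which explains the absence of that conjunct in the second rule, and the helper lemma delivers $\mathcal{C}\rhd t\sqsubseteq s$. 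The reverse direction mirrors part~1, adding $((\mathcal{C},\iota).t,(\mathcal{M},\iota).s)$ to $\mathcal{R}_{\{\}}$ and matching via option~3 together with a zero-length weak run.

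The main obstacle is the helper lemma itself: because the refinement relation is parametrised over accumulated constraints, verifying the matching conditions for the newly added pair requires carefully tracking how $\mathcal{C}\cup\mathcal{C}_1$ remains well-formed as constraints are shuttled between the topology-restriction position and the accumulated-constraint parameter, and how the three cases of Definition~\ref{Def::refine} correspond on the two sides. The zero-length versus positive-length weak runs must be handled separately because only the positive-length case arises via an initial ${\it TR}$-generated $\tau$-step from $\mathcal{C}\rhd t$.
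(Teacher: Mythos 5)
Your proposal is correct and follows essentially the same route as the paper: analyse the unique outgoing transition of the prefixed term, observe that since $\iota\in{\it IAct}$ only the first (resp.\ third) clause of Definition~\ref{Def::refine} can match it, and then transfer the resulting relation $\mathcal{R}_{\mathcal{C}}$ to a relation at index $\{\}$ for $\mathcal{C}\rhd t$ via the one-to-one transition correspondence induced by rule ${\it TR}$. Your "helper lemma" is exactly the paper's inline construction $\mathcal{R}'_{\{\}}=\mathcal{R}_{\mathcal{C}}$, and your more explicit treatment of the converse directions (which the paper dismisses as "similar") is a faithful elaboration rather than a different argument.
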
 These rules correspond to the transfer conditions of Definition \ref{Def::refine}, and their proofs are discussed in Section \ref{sec::preorder}.

Thus, we use Equation \ref{eq::base} to show that:
{{\allowdisplaybreaks
\begin{flalign}
&\tau_\Msg(\partial_\Msg(\oc P\cc_A\parallel \oc M\cc_C\parallel \oc Q\cc_B))~\sqsubseteq~\mathfrak{S} \Leftrightarrow\nonumber\\ &
\{A\conn B\}\rhd\tau_\Msg(\partial_\Msg(\oc P\cc_A\parallel \oc M\cc_C\parallel \oc \deliver.Q\cc_B))~\sqsubseteq~\mathfrak{S}\,\wedge\nonumber\\
&\{A\nconn B,A\conn C\}\rhd\tau_\Msg(\partial_\Msg(\oc P_1\cc_A\parallel \oc \snd(\req).M_1\cc_C\parallel \oc Q\cc_B))~\sqsubseteq~\mathfrak{S}\,\wedge\label{eq:main}\\
&\{A\nconn B,A\nconn C\}\rhd\tau_\Msg(\partial_\Msg(\oc P_1\cc_A\parallel \oc M\cc_C\parallel \oc Q\cc_B))~\sqsubseteq~0\nonumber
\end{flalign}}%
} 
To prove the refinement relation \ref{eq:main}, we use the Equation \ref{eq::m} to show that
{{\allowdisplaybreaks
\begin{flalign}
& \{A\nconn B,A\conn C\}\rhd \tau_\Msg(\partial_\Msg(P_1\cc_A\parallel \oc\snd(\req).M_1\cc_C\parallel \oc Q\cc_B))~\sqsubseteq~\mathfrak{S}\Leftrightarrow\nonumber\\ 
&\{A\nconn B,A\conn C,C\conn B\}\rhd\tau_\Msg(\partial_\Msg(\oc P_1\cc_A\parallel \oc M_1\cc_C\parallel \oc \snd(\rep_B).Q\cc_B))~\sqsubseteq~{\mathfrak{S}}\,\wedge\ \nonumber\\
&\{A\nconn B,A\conn C,C\nconn B\}\rhd\tau_\Msg(\partial_\Msg(\oc P_1\cc_A\parallel \oc M_1\cc_C\parallel \oc Q\cc_B))~\sqsubseteq~0 \label{Eq::stop}
\end{flalign}}%
}This proof process stops when we reach to the predicate $\mathcal{C}\rhd t~\sqsubseteq~(\mathcal{M},\iota).s$ to prove for which either we have previously examined $\mathcal{C}'\rhd t~\sqsubseteq~(\mathcal{M},\iota).s$ where $\mathcal{C}\preccurlyeq\mathcal{C}'$, or it holds trivially. For instance, the refinement relation (\ref{Eq::stop}) trivially holds as it can be proved with the help of our axiomatization, especially the rules ${\it Fold}$ and ${\it TRes}_{1,2}$, that $\{A\nconn B,A\conn C,C\nconn B\}\rhd\tau_\Msg(\partial_\Msg(\oc P_1\cc_A\parallel \oc M_1\cc_C\parallel \oc Q\cc_B))$ is the answer to the equation $\mathfrak{Q}\deff(\{A\nconn B,A\conn C,C\nconn B\},\tau).\mathfrak{Q}$, and trivially
\[\rec\mathfrak{Q}\cdot(\{A\nconn B,A\conn C,C\nconn B\},\tau).\mathfrak{Q}~\sqsubseteq~0.\]
So, it can be easily proved that $\tau_\Msg(\partial_\Msg(\oc P\cc_A\parallel \oc M\cc_C\parallel \oc Q\cc_B))~\sqsubseteq~\mathfrak{S}$
.

\section{Related Work}\label{sec::concreteRelatedWork}

Related calculi to ours are CBS\#~\cite{CBSNanz}, CWS~\cite{CWS},
CMAN~\cite{CMAN,Godskesen08}, CMN~\cite{CMN} and its timed version \cite{MerroBS11}, bKlaim~\cite{bKlaim}, $\omega$-calculus~\cite{w-cal}, SCWN ~\cite{GodskesenN09}, CSDT~\cite{CDST},
AWN~\cite{GlabbeekAWN} and its timed extension \cite{TimedAWN}, and the broadcast psi-calculi
\cite{psi}. These approaches have already been compared in \cite{FatemehFI10} with regard to modeling issues, such as topology and mobility, as well as behavioral congruence relations, in particular observables and distinguishing power. As all these approaches, except ~\cite{CWS}, focus on protocols above the data like layer, we investigate their capabilities to faithfully support the properties of wireless communication at this layer, i.e., being non-blocking and asynchronous. Furthermore, we compare our behavioral equivalence relation to those with a reliable setting.

All these approaches, except \cite{wrebeca}, provide an algebraic framework. Among them only \cite{bKlaim} is asynchronous, centered around the tuple space paradigm; broadcast messages are output into the tuple spaces of neighboring nodes to the sending node.

The non-blocking property is a consequence of either nodes being \textit{input-enabled} or the communication primitives being lossy. In the former case, the asynchronous property is achieved through abstract data specifications~\cite{ADT} in line with the approach from~\cite{GP91,GP95}, in which the sum operator plays a pivotal role. Each process is then parametrized by a variable of the queue type with a summand which receives all possible messages (if the queue is empty). Among these approaches, CMN, CMAN, $\omega$-calculus, SCWN, and the broadcast psi-calculi are lossy.

To make a process input-enabled while communication is synchronous, three approaches are followed. In the first approach, followed by AWN, the semantics is equipped with a rule similar to our ${\it Rcv}_2$ with a negative premise which expresses that if a node is not ready to receive, the message is simply ignored \cite{GlabbeekAWN}. Due to our implicit modeling of topology, the negative premise of our rule is more complicated to characterize the unreadiness of nodes regarding the underlying topology. In the second approach, followed by CDST, counterparts for the rules $\it Bro$ and $\it Recv$ are defined with negative premises to cover cases when a process cannot participate in the communication message \cite{CDST}. The third approach, provided by CSB\#, eliminates negative premises, to remain within the de Simone format of structural operational semantics~\cite{GlabbeekAWN}, in favor of actions which discard messages \cite{CBSNanz}. Therefore, the semantics is augmented by rules that trigger the ignore actions for any sending node, receiving nodes for disconnected locations, and deadlock. Furthermore, the rules $\it Bro$ and $\it Recv$ are modified to cover cases when a process ignores a message. 

Among the reliable settings, only CDST provides a behavioral equivalence relation, based on the notion of observational congruence: the receive and send actions are observable while transitions changing the underlying topology are treated as unobservable. However, due to implicit modeling of topology and mobility, our behavioral equivalence relation has been parametrized with network constraints while it considers the branching structure of MANETs.

\section{Conclusion}
We introduced the reliable framework {\it RRBPT}, suitable to specify and verify MANETs, with the aim to catch errors in design decisions. We discussed the required changes at the semantic model by extending the network constraints with negative connectivity links. Furthermore, we revised the equivalence relation of the lossy setting to preserve required behavior in the reliable framework. Then we demonstrated which axioms should be added to /removed from the reliable setting. We provided an analysis approach at the syntactic level, exploiting a precongruence relation and our axiomatization. We applied our analysis approach to a simple routing protocol to prove that it correctly finds routes among connected nodes.

\nocite{*}
\bibliographystyle{fundam}


\appendix

\section{\label{sec::app}Branching Reliable Computed Network Bisimilarity is an Equivalence} To prove that branching reliable computed network bisimilarity
is an equivalence, we exploit semi-branching reliable computed network
bisimilarity, following \cite{Basten}. 

\begin{definition} {\label{Def::sbbism}} A binary relation
$\mathcal{R}$ on computed network terms is a semi-branching reliable computed
network simulation, if $t_1\mathcal{R} t_2$
implies whenever $t_1\overto{(\mathcal{C},\eta)} t_1'$:
\begin{itemize}
\item either $\eta=\tau$ and there is a $t_2'$ such that $t_2\Rightarrow t_2'$ with $t_1\mathcal{R}t_2'$ and $t_1'\mathcal{R}t_2'$; or
\item there are $s_1'',\ldots,s_k''$ and $s_1',\ldots,s_k'$ for some $k>0$
such that $\forall i\le k\,(t_2\Rightarrow s_i''\overto{\langle(\mathcal{C}_i,\eta)\rangle} s_i'$, with $t_1\mathcal{R} s_i''$ and $t_1'\mathcal{R}
s_i')$, and $\langle\mathcal{C}_1\rangle,\ldots,\langle\mathcal{C}_k\rangle$ constitute a partitioning of $\langle\mathcal{C}\rangle$.
\end{itemize} $\mathcal{R}$ is a semi-branching reliable computed network bisimulation if $\mathcal{R}$ and ${\mathcal{R}}^{-1}$
are semi-branching reliable computed network simulations. Computed networks
$t_1$ and $t_2$ are semi-branching reliable computed
network bisimilar if $t_1\mathcal{R} t_2$, for
some semi-branching reliable computed network bisimulation relation
$\mathcal{R}$.
\end{definition}

\begin{lemma}\label{lem::property}
Let $t_1$ and $t_2$ be computed network terms,
and $\mathcal{R}$ a semi-branching reliable computed network bisimulation
such that $t_1 \mathcal{R} t_2$.
\begin{itemize}
\item If $t_1\Rightarrow t_1'$ then $\exists
t_2'\cdot t_2\Rightarrow
t_2'\wedge t_1' \mathcal{R} t_2'$
\item If $t_2\Rightarrow t_2'$ then $\exists
t_1'\cdot t_1\Rightarrow
t_1'\wedge t_1' \mathcal{R} t_2'$
\end{itemize}
\end{lemma}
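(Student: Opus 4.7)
\noindent The plan is to prove the lemma by a straightforward induction on the length $n$ of the $\tau$-derivation witnessing $t_1 \Rightarrow t_1'$; this is the standard ``stuttering'' argument for semi-branching bisimilarity \cite{Basten}, and I only need to check that Definition~\ref{Def::sbbism} really does supply the required matching step. The second bullet follows by symmetry once $\mathcal{R}^{-1}$ is also known to be a semi-branching reliable computed network simulation, which is exactly the definition of a semi-branching reliable computed network bisimulation.

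\noindent The base case $n=0$ is immediate: $t_1' = t_1$, and we take $t_2' = t_2$. For the inductive step, suppose $t_1 \overto{(\mathcal{C},\tau)} t_1'' \Rightarrow t_1'$ with the second arrow of length $n-1$. Apply the semi-branching simulation clause to $t_1 \mathcal{R} t_2$ and $t_1 \overto{(\mathcal{C},\tau)} t_1''$. Two cases arise:

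\smallskip
\noindent \emph{Case A (the $\eta = \tau$ stuttering clause).} There exists $t_2''$ such that $t_2 \Rightarrow t_2''$ with $t_1 \mathcal{R} t_2''$ and $t_1'' \mathcal{R} t_2''$. From $t_1'' \mathcal{R} t_2''$ and $t_1'' \Rightarrow t_1'$ of length $n-1$, the induction hypothesis yields $t_2'' \Rightarrow t_2'$ with $t_1' \mathcal{R} t_2'$, and so $t_2 \Rightarrow t_2'' \Rightarrow t_2'$ as required.

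\smallskip
\noindent \emph{Case B (the partitioning clause).} There are $s_1'', \ldots, s_k''$ and $s_1', \ldots, s_k'$ such that for every $i \le k$, $t_2 \Rightarrow s_i'' \overto{\langle(\mathcal{C}_i,\tau)\rangle} s_i'$, with $t_1 \mathcal{R} s_i''$ and $t_1'' \mathcal{R} s_i'$, and the $\langle\mathcal{C}_i\rangle$ partition $\langle\mathcal{C}\rangle$. Pick any index $i$; since $\eta = \tau$ contains no $\nsnd$ action, $\langle(\mathcal{C}_i,\tau)\rangle$ is just $(\mathcal{C}_i,\tau)$, so $t_2 \Rightarrow s_i'$ and $t_1'' \mathcal{R} s_i'$. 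Apply the induction hypothesis to $t_1'' \Rightarrow t_1'$, obtaining $s_i' \Rightarrow t_2'$ with $t_1' \mathcal{R} t_2'$, and conclude $t_2 \Rightarrow t_2'$.

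\noindent The second bullet is the mirror image, invoking $\mathcal{R}^{-1}$, which is itself a semi-branching reliable computed network simulation by the definition of bisimulation. The only subtlety I expect to pay attention to is the interaction between the partitioning clause of Definition~\ref{Def::sbbism} and the $\langle \cdot \rangle$ notation, but this collapses as soon as $\eta = \tau$ because the $\langle \cdot \rangle$ rewriting only touches $\nsnd$-labelled transitions. Hence the induction goes through without further complication.
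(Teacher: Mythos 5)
Your proposal is correct and follows essentially the same argument as the paper: induction on the length of the $\Rightarrow$ derivation, a case split on the two clauses of Definition~\ref{Def::sbbism}, and in the partitioning case picking an arbitrary block (noting that $\langle\cdot\rangle$ is the identity on $\tau$-labels) to obtain a single $\Rightarrow$ path. The only difference is that you peel off the first $\tau$-step while the paper peels off the last, which is immaterial.
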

\begin{proof}We only give the proof of the first property. The second
property can be proved in a similar fashion. The proof is by
induction on the number of $\Rightarrow$ steps from $t_1$
to $t_1'$:\begin{itemize}
\item Base: Assume that the number of steps equals zero. Then
$t_1$ and $t_1'$ must be equal. Since
$t_1 \mathcal{R} t_2$ and
$t_2\Rightarrow t_2$, the property is satisfied.
\item Induction step: Assume $t_1\Rightarrow
t_1'$ in $n$ steps, for some $n\ge 1$. Then there is  $t_1''$ such that $t_1\Rightarrow t_1''$
in $n-1$ $\Rightarrow$ steps, and $t_1''\overto{(\mathcal{C},\tau)}t_1'$. By the induction hypothesis, there is a $t_2''$ such that $t_2\Rightarrow t_2''$ and $t_1'' \mathcal{R} t_2''$.
Since $t_1''\overto{(\mathcal{C},\tau)}t_1'$ and $\mathcal{R}$ is a semi-branching reliable computed network bisimulation,
there are two cases to consider: \begin{itemize}
\item there is a $t_2'$ such that
$t_2''\Rightarrow t_2'$,
$t_1'' \mathcal{R} t_2'$, and $t_1' \mathcal{R}
t_2'$. So $t_2\Rightarrow
t_2'$ such that $t_1' \mathcal{R}
t_2'$.
\item or there are $s_1''',\ldots,s_k'''$ and $s_1',\ldots,s_k'$ for some $k>0$
such that $\forall i\le k \,(t_2''\Rightarrow s_i'''\overto{(\mathcal{C}_i,\tau)} s_i'$,
with $t_1''\mathcal{R} s_i'''$ and $t_1'\mathcal{R}
s_i')$, and $\mathcal{C}_1,\ldots,\mathcal{C}_k$ constitute a partitioning of $\mathcal{C}$. By definition, $s_i'''\overto{(\mathcal{C}_i,\tau)}
s_i'$ yields $s_i'''\Rightarrow
s_i'$. Consequently for any arbitrary $i\le k$, $t_2\Rightarrow
s_i'$ such that $t_1' \mathcal{R}
s_i'$.\end{itemize}
\end{itemize}
\end{proof}

\begin{proposition}\label{Pro::compose}
The relation composition of two semi-branching reliable computed network
bisimulations is again a semi-branching reliable computed network
bisimulation.
\end{proposition}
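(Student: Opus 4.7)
The plan is to reduce this to showing that the composition $\mathcal{R} := \mathcal{R}_1 \circ \mathcal{R}_2$ satisfies the simulation clause of Definition~\ref{Def::sbbism}; the inverse $\mathcal{R}^{-1} = \mathcal{R}_2^{-1} \circ \mathcal{R}_1^{-1}$ is then a composition of the same kind (of the inverse bisimulations), so the bisimulation property follows by symmetry. Suppose $t_1\,\mathcal{R}\,t_3$ witnessed by some $t_2$ with $t_1\,\mathcal{R}_1\,t_2$ and $t_2\,\mathcal{R}_2\,t_3$, and let $t_1\overto{(\mathcal{C},\eta)}t_1'$. I would analyse how $\mathcal{R}_1$ matches this transition using the two clauses of Definition~\ref{Def::sbbism}, and inside each case further analyse how $\mathcal{R}_2$ matches the transitions thus obtained from $t_2$.

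First, if $\mathcal{R}_1$ supplies the $\tau$-clause (so $\eta=\tau$ and there is $t_2'$ with $t_2\Rightarrow t_2'$, $t_1\,\mathcal{R}_1\,t_2'$, $t_1'\,\mathcal{R}_1\,t_2'$), then by Lemma~\ref{lem::property} applied to $\mathcal{R}_2$ there is $t_3'$ with $t_3\Rightarrow t_3'$ and $t_2'\,\mathcal{R}_2\,t_3'$. Hence $t_1\,\mathcal{R}\,t_3'$ and $t_1'\,\mathcal{R}\,t_3'$, discharging the $\tau$-clause for $\mathcal{R}$. Otherwise, $\mathcal{R}_1$ supplies a family $\{(s_i'',s_i',\mathcal{C}_i)\}_{i\le k}$ with $t_2\Rightarrow s_i''\overto{\langle(\mathcal{C}_i,\eta)\rangle}s_i'$, $t_1\,\mathcal{R}_1\,s_i''$, $t_1'\,\mathcal{R}_1\,s_i'$, and $\{\langle\mathcal{C}_i\rangle\}_{i\le k}$ a partitioning of $\langle\mathcal{C}\rangle$. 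For each $i$, Lemma~\ref{lem::property} first lifts $t_2\Rightarrow s_i''$ through $\mathcal{R}_2$ to obtain $r_i''$ with $t_3\Rightarrow r_i''$ and $s_i''\,\mathcal{R}_2\,r_i''$, which already gives $t_1\,\mathcal{R}\,r_i''$.

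Now I unfold $s_i''\overto{\langle(\mathcal{C}_i,\eta)\rangle}s_i'$ into the underlying real transition $s_i''\overto{(\bar{\mathcal{C}}_i,\bar{\eta}_i)}s_i'$ and apply the semi-branching clause of $\mathcal{R}_2$ to it. If the $\tau$-clause is invoked for some $i$ (forcing $\bar{\eta}_i=\tau$ and hence $\eta=\tau$, since $\tau$ can never arise from a $\nsnd(\mathfrak{m},?)$ substitution), I get $r_i'$ with $r_i''\Rightarrow r_i'$, $s_i''\,\mathcal{R}_2\,r_i'$ and $s_i'\,\mathcal{R}_2\,r_i'$; then $t_3\Rightarrow r_i'$ together with $t_1\,\mathcal{R}\,r_i'$ and $t_1'\,\mathcal{R}\,r_i'$ discharges the $\tau$-clause for $\mathcal{R}$ (and one such $i$ suffices). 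Otherwise, the partition clause of $\mathcal{R}_2$ applied to each $i$ yields families $\{(u_{i,j}'',u_{i,j}',\mathcal{D}_{i,j})\}_{j\le n_i}$ with $r_i''\Rightarrow u_{i,j}''\overto{\langle(\mathcal{D}_{i,j},\bar{\eta}_i)\rangle}u_{i,j}'$, $s_i''\,\mathcal{R}_2\,u_{i,j}''$, $s_i'\,\mathcal{R}_2\,u_{i,j}'$, and $\{\langle\mathcal{D}_{i,j}\rangle\}_{j\le n_i}$ partitioning $\langle\bar{\mathcal{C}}_i\rangle$. Indexing over all pairs $(i,j)$ produces a witness family for $\mathcal{R}$: we have $t_3\Rightarrow u_{i,j}''$, $t_1\,\mathcal{R}\,u_{i,j}''$ via $s_i''$, and $t_1'\,\mathcal{R}\,u_{i,j}'$ via $s_i'$.

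The main obstacle, and the step I would check most carefully, is that the resulting constraints $\{\langle\mathcal{D}_{i,j}\rangle\}_{(i,j)}$ indeed partition $\langle\mathcal{C}\rangle$ as a whole, and that the underlying real transitions of $u_{i,j}''$ can legitimately be labelled $\langle(\mathcal{D}_{i,j},\eta)\rangle$ with the original $\eta$ of $t_1$. The subtlety is that $\mathcal{R}_1$ may have introduced a substitution $[\ell_i/?]$ when $\eta=\nsnd(\mathfrak{m},?)$, so that $\bar{\eta}_i=\eta[\ell_i/?]$; the subsequent $\mathcal{R}_2$-matching of this already-instantiated action cannot introduce a second substitution of $?$, so its real label is $(\mathcal{D}_{i,j},\bar{\eta}_i)$ itself. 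One has to argue that such a two-step instantiation is still expressible as a single $\langle(\mathcal{D}_{i,j}',\eta)\rangle$-matching against the original $\eta$ (taking $\mathcal{D}_{i,j}' = \mathcal{D}_{i,j}$ when $\bar{\eta}_i=\eta$ and $\mathcal{D}_{i,j}' = \mathcal{D}_{i,j}$ viewed as the $[\ell_i/?]$-instance of itself otherwise), and then to verify that $\Gamma$-disjointness and $\Gamma$-covering lift through this composition. Once this bookkeeping is discharged, the partition clause of $\mathcal{R}$ is satisfied, concluding the simulation verification.
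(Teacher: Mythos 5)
Your proof is correct and follows essentially the same route as the paper's: handle the $\tau$-clause of $\mathcal{R}_1$ via Lemma~\ref{lem::property}, and in the partition case lift each $t_2\Rightarrow s_i''$ through $\mathcal{R}_2$ with the same lemma, then split on whether $\mathcal{R}_2$ answers with its $\tau$-clause (one such $i$ suffices) or with nested partition families whose union refines the original partitioning of $\langle\mathcal{C}\rangle$. The bookkeeping you flag about composing the $\langle\cdot\rangle$-instantiations and the partitions is real, but the paper leaves it at the same level of detail, simply asserting that the combined family $\{\langle\mathcal{C}_{i_o}\rangle\}$ partitions $\langle\mathcal{C}\rangle$.
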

\begin{proof}Let $\mathcal{R}_1$ and $\mathcal{R}_2$ be
semi-branching reliable computed network bisimulations with $t_1
\mathcal{R}_1 t_2$ and $t_2 \mathcal{R}_2
t_3$. Let $t_1\overto{(\mathcal{C},\eta)} t_1'$.
It must be shown that\begin{itemize}
\item either $\eta=\tau$ and there is a $t_3'$ such that $t_3\Rightarrow t_3'$ with $t_1\mathcal{R}_1\circ \mathcal{R}_2 t_3'$ and $t_1'\mathcal{R}_1\circ \mathcal{R}_2 t_3'$; or
\item $\exists s_1',\ldots,s_k',s_1'',\ldots,s_k''\,\forall i\le k\,(t_3\Rightarrow s_i''\overto{\langle (\mathcal{C}_i,\eta)\rangle}s_i'\,\wedge\, t_1 \mathcal{R}_1\circ \mathcal{R}_2 s_i'' \,\wedge\, t_1' \mathcal{R}_1\circ \mathcal{R}_2 s_i')$, where $\langle\mathcal{C}_1\rangle,\ldots,\langle\mathcal{C}_k\rangle$ constitute a partitioning of $\langle\mathcal{C}\rangle$.
\end{itemize}

Since $t_1 \mathcal{R}_1 t_2$, two cases can be considered:

\begin{itemize}
\item $\eta=\tau$ and there is a $t_2'$ such that $t_2\Rightarrow t_2'$ with $t_1 \mathcal{R}_1 t_2'$ and $t_1' \mathcal{R}_1 t_2'$. Lemma~\ref{lem::property} yields that there is a $t_3'$ that $t_3\Rightarrow t_3'$ with $t_2' \mathcal{R}_2 t_3'$. It immediately follows that $t_1\mathcal{R}_1\circ \mathcal{R}_2 t_3'$ and $t_1'\mathcal{R}_1\circ \mathcal{R}_2 t_3'$.
\item there exist
$s^{\ast\ast}_1,\ldots s^{\ast\ast}_j$, $s^{\ast}_1\ldots s^{\ast}_j$ for some $j>0$ such that
$\forall i\le j\,(t_2\Rightarrow s^{\ast\ast}_i\overto{\langle(\mathcal{C}_i,\eta)\rangle}
s^{\ast}_i$, $t_1 \mathcal{R}_1 s^{\ast\ast}_i$,
$t_1' \mathcal{R}_1 s^{\ast}_i)$, and $\langle\mathcal{C}_1\rangle,\ldots,\langle\mathcal{C}_j\rangle$ is a partitioning of $\langle\mathcal{C}\rangle$. Since $t_2
\mathcal{R}_2 t_3$ and $t_2\Rightarrow
s^{\ast\ast}_i$, Lemma~\ref{lem::property} yields that there are
$s_1''',\ldots,s_j'''$ such that $\forall i\le j\,(t_3\Rightarrow
s_i'''\,\wedge \,s^{\ast\ast}_i \mathcal{R}_2
s_i''')$. Two cases can be distinguished:\begin{itemize}
\item either $\eta=\tau$ and for some $i\le j $, $s^{\ast\ast}_i\overto{(\mathcal{C}_i,\tau)}
s^{\ast}_i$ implies there is an $s_i''$ such that $s_i'''\Rightarrow s_i''$ with $s^{\ast\ast}_i \mathcal{R}_2 s_i''$ and $s^{\ast}_i\mathcal{R}_2 s_i''$. It follows immediately that there is an $s_i''$ such that $t_3\Rightarrow s_i''$ with $t_1 \mathcal{R}_1\circ \mathcal{R}_2 s_i''$ and $t_1'\mathcal{R}_1\circ \mathcal{R}_2 s_i''$; or
\item for all $i\le j $, $s^{\ast\ast}_i\overto{\langle(\mathcal{C}_i,\eta)\rangle} s^{\ast}_i$ implies there are  $s_{i_1}'',\ldots,s_{i_{k_i}}''$ and $s_{i_1}',\ldots,s_{i_{k_i}}'$ for some $k_i>0$ such that $\forall o\le k_i\,(s_i''\Rightarrow
s_{i_o}''\overto{\langle(\mathcal{C}_{i_o},\eta)\rangle} s_{i_o}'$,
$s^{\ast\ast}_i \mathcal{R}_2  s_{i_o}''$, $s^{\ast}_i \mathcal{R}_2
s_{i_o}')$, and $\langle\mathcal{C}_{i_1}\rangle,\ldots,\langle\mathcal{C}_{i_{k_i}}\rangle$ is a partitioning of $\langle\mathcal{C}_i\rangle$. Since $t_3\Rightarrow s_i''$, we have $\forall i\le j,\, \forall o\le k_i\,(t_3\Rightarrow
s_{i_o}''\overto{\langle(\mathcal{C}_{i_o},\eta)\rangle} s_{i_o}'$ with $t_1 \mathcal{R}_1\circ\mathcal{R}_2 s_{i_o}''$, $t_1' \mathcal{R}_1\circ\mathcal{R}_2 s_{i_o}')$, and $\{\langle\mathcal{C}_{i_o}\rangle\,\mid\,i\le j,o<k_i\}$ is a partitioning of $\langle\mathcal{C}\rangle$.
\end{itemize}
\end{itemize}
\end{proof}

\begin{corollary}\label{Cor::semieq}
Semi-branching reliable computed network bisimilarity is an equivalence relation.
\end{corollary}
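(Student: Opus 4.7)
The plan is to verify the three defining properties of an equivalence relation in turn, each one reducing to a straightforward observation given the work already done.

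For reflexivity, I would exhibit the identity relation $\mathcal{I} = \{(t,t) \mid t \text{ is an RCNT term}\}$ as a semi-branching reliable computed network bisimulation. Given $t \mathcal{I} t$ and any transition $t \overto{(\mathcal{C},\eta)} t'$, I match it using the second clause of Definition~\ref{Def::sbbism} with $k=1$, $\mathcal{C}_1 = \mathcal{C}$, $s_1'' = t$ (using the trivial $\Rightarrow$-step) and $s_1' = t'$; the pair $(t,t)$ and $(t',t')$ both lie in $\mathcal{I}$, and $\langle\mathcal{C}\rangle$ is trivially partitioned by itself. Since $\mathcal{I}^{-1} = \mathcal{I}$, it is also a simulation in the reverse direction, hence a bisimulation.

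For symmetry, I would simply observe that the definition of semi-branching reliable computed network bisimulation (Definition~\ref{Def::sbbism}) is self-dual: $\mathcal{R}$ is a bisimulation iff both $\mathcal{R}$ and $\mathcal{R}^{-1}$ are semi-branching reliable computed network simulations. Thus if $\mathcal{R}$ witnesses $t_1 \simeq_{sbr} t_2$, then $\mathcal{R}^{-1}$ witnesses $t_2 \simeq_{sbr} t_1$, with no further work required.

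Transitivity is the nontrivial step, and it is exactly the content of Proposition~\ref{Pro::compose}. Given witnessing bisimulations $\mathcal{R}_1$ for $t_1 \simeq_{sbr} t_2$ and $\mathcal{R}_2$ for $t_2 \simeq_{sbr} t_3$, I form $\mathcal{R}_1 \circ \mathcal{R}_2$. By Proposition~\ref{Pro::compose} this composition is again a semi-branching reliable computed network bisimulation, and since $(t_1, t_3) \in \mathcal{R}_1 \circ \mathcal{R}_2$, we conclude $t_1 \simeq_{sbr} t_3$. I expect the main obstacle to be invisible here precisely because Proposition~\ref{Pro::compose} has absorbed all of it: the delicate case analysis needed for transitivity under the partitioning-based matching clause and for reconciling single-$\Rightarrow$ matches with partitioned $\langle(\mathcal{C},\eta)\rangle$-matches has already been carried out there, in particular via Lemma~\ref{lem::property} to align $\Rightarrow$-closures across the two bisimulations. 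So the corollary reduces to assembling these three observations with essentially no further calculation.
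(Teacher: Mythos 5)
Your proposal is correct and matches the paper's (implicit) argument exactly: the paper derives this corollary from Proposition~\ref{Pro::compose} for transitivity, with reflexivity via the identity relation and symmetry built into the definition of bisimulation. Nothing further is needed.
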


It is not hard to see that the union of semi-branching reliable computed network bisimulations is again a semi-branching reliable computed network bisimulation.

\begin{proposition}\label{Prop::semi-branching}
The largest semi-branching reliable computed network bisimulation is a branching reliable computed network bisimulation.
\end{proposition}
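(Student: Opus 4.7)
The plan is the standard Basten-style maximality argument (\cite{Basten}). Since the union of any family of semi-branching reliable computed network bisimulations is again one (direct from Definition Def::sbbism), the largest such relation $\mathcal{R}$ exists. Comparing Definitions Def::sbbism and Def::brbism, the partition cases are identical and the first case is gated on $\eta=\tau$ in both; the sole difference is that semi-branching admits intermediate $\tau$-steps on the right via $t_2\Rightarrow t_2'$, while branching insists on $t_1'\mathcal{R}t_2$ directly (the zero-length case $t_2'=t_2$). It therefore suffices to prove: whenever $(t_1,t_2)\in\mathcal{R}$, $t_1\overto{(\mathcal{C},\tau)}t_1'$, and the semi-branching condition supplies some $t_2'$ with $t_2\Rightarrow t_2'$, $t_1\mathcal{R}t_2'$ and $t_1'\mathcal{R}t_2'$, we in fact have $t_1'\mathcal{R}t_2$.

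To derive this I would define
\[
\mathcal{R}' \;=\; \mathcal{R} \;\cup\; \{(t_1',t_2) \mid \exists t_1,\mathcal{C},t_2'.\; t_1\mathcal{R}t_2,\; t_1\overto{(\mathcal{C},\tau)}t_1',\; t_2\Rightarrow t_2',\; t_1\mathcal{R}t_2',\; t_1'\mathcal{R}t_2'\},
\]
prove that $\mathcal{R}'$ is still a semi-branching reliable computed network bisimulation, and then invoke maximality to conclude $\mathcal{R}'\subseteq\mathcal{R}$, hence $t_1'\mathcal{R}t_2$. For the forward condition on a fresh pair $(t_1',t_2)\in\mathcal{R}'\setminus\mathcal{R}$, any transition $t_1'\overto{(\mathcal{C}_0,\eta_0)}t_1''$ is already matched by the bisimulation $(t_1',t_2')\in\mathcal{R}$, and the resulting response from $t_2'$ (whether a $\Rightarrow$-extension in the first clause or a partition under $\langle(\mathcal{C}_i,\eta_0)\rangle$ in the second) can be prepended with the $\tau$-path $t_2\Rightarrow t_2'$ via Lemma lem::property to yield a legal response from $t_2$.

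The principal obstacle is the backward condition on fresh pairs. Given $t_2\overto{(\mathcal{C}_0,\eta_0)}s_2'$, the required response must issue from $t_1'$, whereas the only available bisimulation hypothesis $(t_1,t_2)\in\mathcal{R}$ constrains the behaviour of $t_1$ (not of $t_1'$), and the $\tau$-step $t_1\overto{(\mathcal{C},\tau)}t_1'$ may branch away from $t_1$'s chosen matching trajectory. To handle this I would simultaneously enlarge $\mathcal{R}'$ with all intermediate pairs $(t_1',t_2^{\ast\ast})$ for $t_2^{\ast\ast}$ on the $\tau$-path $t_2\Rightarrow t_2'$, and close under the symmetric rule on the left; then perform a case split on whether the transition $t_2\overto{\eta_0}s_2'$ lies along that path (in which case the matching point is slid along the path and paired with the reflexive choice $t_1'\Rightarrow t_1'$, giving case 1 of Def::sbbism) or branches off it (in which case the backward condition of $\mathcal{R}$ applied at $(t_1,t_2)$ or at $(t_1,t_2')$ produces the response, and a further chase using the partition clause of Def::sbbism shows that $t_1'$ reaches the same targets up to further $\tau$-steps). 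This diagram-chase is the technical heart of the proof; the network-constraint component is passively transported along $\tau$-paths and the partitioning machinery of the second clause is preserved under concatenation of such paths, so the argument extends structurally unchanged from Basten's setting.
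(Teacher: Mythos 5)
Your overall skeleton matches the paper's: establish that for the largest semi-branching reliable computed network bisimulation $\mathcal{R}$, whenever $t_1\,\mathcal{R}\,t_2$, $t_2\Rightarrow t_2'$, $t_1\,\mathcal{R}\,t_2'$ and $t_1'\,\mathcal{R}\,t_2'$, one already has $t_1'\,\mathcal{R}\,t_2$; and your treatment of the forward condition (prepend $t_2\Rightarrow t_2'$ to the response supplied by $(t_1',t_2')\in\mathcal{R}$, using Lemma \ref{lem::property}) is exactly what the paper does. The gap is in the backward condition, and your proposed repair does not close it. The difficulty is not which point of the $\tau$-path to match against, but that the only way to answer a move $t_2\overto{(\mathcal{C}_0,\eta_0)}s_2'$ from $t_1'$ is to transport $t_1$'s answer (obtained from $(t_1,t_2)\in\mathcal{R}$) through $t_2'$ (via $(t_1,t_2')\in\mathcal{R}$) and back to $t_1'$ (via $(t_1',t_2')\in\mathcal{R}$). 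After this three-step transport, the targets reached by $t_1'$ are related to $s_2'$ only through a zig-zag, i.e.\ they lie in $\mathcal{R}\circ\mathcal{R}^{-1}\circ\mathcal{R}$, not in $\mathcal{R}$ or in your $\mathcal{R}'$ (which only contains pairs of the special shape $(t_1',t_2^{\ast\ast})$ arising from the defining scenario). Your phrase ``$t_1'$ reaches the same targets up to further $\tau$-steps'' is where the argument breaks: it does not reach the same targets, only zig-zag-related ones, and nothing in your construction puts those pairs into the relation you are verifying.

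The missing ingredient is precisely Proposition \ref{Pro::compose} (the composition of two semi-branching reliable computed network bisimulations is again one), which together with maximality of $\mathcal{R}$ and $\mathcal{R}\subseteq\mathcal{R}\circ\mathcal{R}^{-1}\circ\mathcal{R}$ yields $\mathcal{R}=\mathcal{R}\circ\mathcal{R}^{-1}\circ\mathcal{R}$; this is how the paper converts the zig-zag-related target pairs into members of $\mathcal{R}$ and discharges the backward condition for the fresh pair $(t_1',t_2)$. (This is also the actual content of the ``Basten-style'' argument you invoke at the outset.) Your path-sliding case split and the enlargement by intermediate pairs are neither necessary nor sufficient here: the on-path $\tau$-case is subsumed by the general backward case once the composition identity is available, and the off-path case cannot be finished without it. Incidentally, the partitioning bookkeeping in the backward chase is also more delicate than ``preserved under concatenation'': the paper must re-partition each $\langle\mathcal{C}_i\rangle$ at every transport step and take the doubly-indexed family $\{\langle\mathcal{C}_{i_{o_j}}\rangle\}$ as the final partitioning of $\langle\mathcal{C}\rangle$, which your sketch elides.
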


\begin{proof}Suppose $\mathcal{R}$ is the largest semi-branching reliable
computed network bisimulation for some given CLTS. Let $ t_{1}\mathcal{R} t_2$,
$ t_2\Rightarrow t_2'$, $ t_{1}
\mathcal{R} t_2'$ and $ t_{1}' \mathcal{R}
t_2'$. We show that
$\mathcal{R}'=\mathcal{R}\cup\{( t_{1}', t_2)\}$
is a semi-branching reliable computed network bisimulation.
\begin{enumerate}
\item If $ t_1'\overto{(\mathcal{C},\eta)} t_1''$, then it follows from $( t_1', t_2')\in
\mathcal{R}$ that\begin{itemize}
\item either $\eta=\tau$ and there is a $t_2''$ such that $t_2'\Rightarrow t_2''$ with $t_1'\mathcal{R}t_2''$ and $t_1''\mathcal{R}t_2''$. Finally $t_2\Rightarrow t_2'$ results $t_1'\mathcal{R}'t_2''$ and $t_1''\mathcal{R}'t_2''$; or
\item there are $ s_1''',\ldots,s_k'''$ and $ s_1'',\ldots,s_k''$ for some $k>0$ such that
$\forall i\le k\,( t_2'\Rightarrow s_i'''\overto{\langle(\mathcal{C}_i,\eta)\rangle} s_i''$ with $( t_1', s_i'''),( t_1'', s_i'')\in \mathcal{R})$ and $\langle\mathcal{C}_1\rangle,\ldots,\langle\mathcal{C}_k\rangle$ is a partitioning of $\langle\mathcal{C}\rangle$. And $ t_2 \Rightarrow  t_2'$ yields $\forall i\le k\,( t_{2}\Rightarrow  s_i'''\overto{\langle(\mathcal{C}_i,\eta)\rangle} s_i''$, with $( t_1', s_i'''),( t_1'', s_i'')\in \mathcal{R}')$.
\end{itemize}
\item If $ t_2\overto{(\mathcal{C},\eta)} t_{2}''$,
then it follows from $( t_1, t_2)\in
\mathcal{R}$ that\begin{itemize}
\item either $\eta=\tau$, and there is a $t_1''$ such that $t_1\Rightarrow t_1''$ with $t_1''\mathcal{R}t_2$ and $t_1''\mathcal{R}t_2''$. Furthermore, $(t_1,t_2')\in\mathcal{R}$, $t_1\Rightarrow t_1''$, and Lemma \ref{lem::property} imply there is a $t_2'''$ such that $t_2'\Rightarrow t_2'''$ with $(t_1'',t_2''')\in\mathcal{R}$. Similarly $(t_1',t_2')\in\mathcal{R}$, $t_2'\Rightarrow t_2'''$, and Lemma \ref{lem::property} imply there is a $t_1'''$ such that $t_1'\Rightarrow t_1'''$ with $(t_1''',t_2''')\in\mathcal{R}$. From $(t_1''',t_2''')\in\mathcal{R}$, $(t_2''',t_1'')\in\mathcal{R}^{-1}$, and $(t_1'',t_2)\in\mathcal{R}$, we conclude $(t_1''',t_2)\in \mathcal{R}\circ\mathcal{R}^{-1}\circ\mathcal{R}$. And from $(t_1''',t_2''')\in\mathcal{R}$, $(t_2''',t_1'')\in\mathcal{R}^{-1}$, and $(t_1'',t_2'')\in\mathcal{R}$, we conclude $(t_1''',t_2'')\in \mathcal{R}\circ\mathcal{R}^{-1}\circ\mathcal{R}$.
\item or there are $ s_{1_1}''',\ldots,s_{1_k}'''$
and $ s_{1_1}'',\ldots,s_{1_k}''$ for some $k>0$ such that $\forall i\le k\,( t_1\Rightarrow
s_{1_i}'''\overto{\langle(\mathcal{C}_i,\eta)\rangle} s_{1_i}''$
with $( s_{1_i}''', t_2),( s_{1_i}'', t_2'')\in
\mathcal{R})$ and $\langle\mathcal{C}_1\rangle,\ldots,\langle\mathcal{C}_k\rangle$ is a partitioning of $\langle\mathcal{C}\rangle$. Since $( t_1, t_2')\in
\mathcal{R}$ and $ t_1\Rightarrow
s_{1_i}'''$, by Lemma~\ref{lem::property},
there are $s_{2_1}''',\ldots,s_{2_k}'''$ such that
$\forall i\le k\,( t_2'\Rightarrow
s_{2_i}'''$ and $( s_{1_i}''',s_{2_i}''')\in
\mathcal{R})$. Since
$ s_{1_i}'''\overto{\langle(\mathcal{C}_i,\eta)\rangle} s_{1_i}''$,
there are $s_{2_{i_1}}^{\ast\ast},\ldots,s_{2_{i_{k_i}}}^{\ast\ast}$ and
$s_{2_{i_1}}^{\ast},\ldots,s_{2_{i_{k_i}}}^{\ast}$ for some $k_i>0$ such that
$\forall o\le k_i\,(s_{2_i}'''\Rightarrow
s_{2_{i_o}}^{\ast\ast}\overto{\langle(\mathcal{C}_{i_o},\eta)\rangle}
s_{2_{i_o}}^{\ast}$ with $( s_{1_i}''',s_{2_{i_o}}^{\ast\ast}),(s_{1_i}'',s_{2_{i_o}}^{\ast})\in
\mathcal{R})$ and $\langle\mathcal{C}_{i_1}\rangle,\ldots,\langle\mathcal{C}_{i_{k_i}}\rangle$ is a partitioning of $\langle\mathcal{C}_i\rangle$. Since $ t_{2}'\Rightarrow
s_{2_i}'''$ and $s_{2_i}'''\Rightarrow
s_{2_{i_o}}^{\ast\ast}$, we have
$\forall i\le k,o\le k_i\,( t_{2}'\Rightarrow s_{2_{i_o}}^{\ast\ast})$.
By assumption, $( t_1', t_2')\in
\mathcal{R}$, so by Lemma \ref{lem::property} there are $s_{1_1}^{\ast\ast},\ldots,s_{1_K}^{\ast\ast}$, where $K=\sum_{i=1}^k k_i$, such that
$\forall z\le K\,( t_1'\Rightarrow  s_{1_z}^{\ast\ast}$ and $(s_{1_z}^{\ast\ast}, s_{2_{i_o}}^{\ast\ast})\in
\mathcal{R}$, where $z=(\sum_{j=1}^{i-1}k_j)+o)$. Since $s_{2_{i_o}}^{\ast\ast}\overto{\langle(\mathcal{C}_{i_o},\eta)\rangle} s_{2_{i_o}}^{\ast}$, there are
$s_{1_{z_1}}^{\ast\ast\ast},\ldots,s_{1_{z_{k'_z}}}^{\ast\ast\ast}$ and $s_{1_{z_1}}^{'},\ldots,s_{1_{z_{k_z'}}}^{'}$ for some $k'_z>0$
such that $\forall j\le k'_z\,( s_{1_z}^{\ast\ast}\Rightarrow
s_{1_{z_j}}^{\ast\ast\ast}\overto{\langle(\mathcal{C}_{i_{o_j}},\eta)\rangle} s_{1_{z_j}}^{'}$
with $(s_{1_{z_j}}^{\ast\ast\ast},s_{2_{i_o}}^{\ast\ast}),(s_{1_{z_j}}^{'},s_{2_{i_o}}^{\ast})\in
\mathcal{R})$ and $\langle\mathcal{C}_{i_{o_1}}\rangle,\ldots,\langle\mathcal{C}_{i_{o_{k_z'}}}\rangle$ is a partitioning of $\langle\mathcal{C}_{i_o}\rangle$. And $ t_1'\Rightarrow s_{1_z}^{\ast\ast}$ yields $ \forall
i\le k,o\le k_i,j\le k_z'(t_1'\Rightarrow
s_{1_{z_j}}^{\ast\ast\ast}\overto{\langle(\mathcal{C}_{i_{o_j}},\eta)\rangle} s_{1_{z_j}}^{'}$ with \[
\begin{array}{c}
(s_{1_{z_j}}^{\ast\ast\ast},s_{2_{i_o}}^{\ast\ast})\in
\mathcal{R} \wedge (s_{2_{i_o}}^{\ast\ast},s_{1_i}''')\in
\mathcal{R}^{-1}\wedge (s_{1_i}''', t_2)\in
\mathcal{R} \\ \hspace*{2cm}\Rightarrow
(s_{1_{z_j}}^{\ast\ast\ast}, t_2)\in
\mathcal{R}\circ \mathcal{R}^{-1}\circ \mathcal{R}\\
(s_{1_{z_j}}^{'},s_{2_{i_o}}^{\ast})\in
\mathcal{R} \wedge (s_{2_{i_o}}^{\ast}, s_{1_i}'')\in
\mathcal{R}^{-1}\wedge (  s_{1_i}'', t_1'')\in
\mathcal{R} \\ \hspace*{2cm}\Rightarrow
(s_{1_{z_j}}^{'}, t_2'')\in
\mathcal{R}\circ \mathcal{R}^{-1}\circ \mathcal{R},
\end{array}
\]where $z=(\sum_{l=1}^{i-1}k_l)+o)$, and $\{\langle\mathcal{C}_{i_{o_j}}\rangle\,\mid\, i\le k, o\le k_i, j\le k_z'\}$ is a partitioning of $\langle\mathcal{C}\rangle$.
\end{itemize}
By Proposition~\ref{Pro::compose}, $\mathcal{R}\circ \mathcal{R}^{-1}\circ \mathcal{R}$ is a semi-branching reliable
computed network bisimulation. Since $\mathcal{R}$ is the largest semi-branching reliable computed network bisimulation, and clearly $\mathcal{R} \subseteq \mathcal{R}\circ \mathcal{R}^{-1}\circ \mathcal{R}$, we have $\mathcal{R}=\mathcal{R}\circ \mathcal{R}^{-1}\circ
\mathcal{R}$.
\end{enumerate}
So $\mathcal{R}'$ is a semi-branching reliable computed network bisimulation. Since $\mathcal{R}$ is the largest semi-branching reliable
computed network bisimulation, $\mathcal{R}'=\mathcal{R}$.

We will now prove that $\mathcal{R}$ is a branching reliable computed network
bisimulation. Let $ t_1 \mathcal{R}  t_2$, and
$ t_1\overto{(\mathcal{C},\eta)}  t_1'$. We only consider the
case when $\eta=\tau$, because for other cases, the transfer condition of
Definition~\ref{Def::brbism} and Definition~\ref{Def::sbbism} are the
same.
Two cases can be
distinguished:\begin{enumerate}
\item There is a $t_2'$ such that $t_2\Rightarrow t_2'$ with $t_1\mathcal{R}t_2'$ and $t_1'\mathcal{R}t_2'$:  we proved above that $ t_1' \mathcal{R}
t_2$. This agrees with the first case of
Definition~\ref{Def::brbism}.
\item There are $ s_1'',\ldots,s_k''$ and $ s_1',\ldots,s_k'$ for some $k>0$ such that
$\forall i\le k\, (t_2\Rightarrow
s_i''\overto{\langle(\mathcal{C}_{i},\tau)\rangle} s_i'$ with
$ t_1 \mathcal{R}  s_i''$ and $ t_1'
\mathcal{R}  s_i')$ and $\langle\mathcal{C}_1\rangle,\ldots,\langle\mathcal{C}_k\rangle$ constitute a partitioning of $\langle\mathcal{C}\rangle$. This agrees with
the second case of Definition~\ref{Def::bbism}.\end{enumerate}
Consequently $\mathcal{R}$ is a branching reliable computed network
bisimulation.
\end{proof}

Since any branching reliable computed network bisimulation is a
semi-branching reliable computed network bisimulation, this yields the
following corollary.

\begin{corollary}\label{Co::relate}
Two computed network terms are related by a branching reliable computed
network bisimulation if and only if they are related by a
semi-branching reliable computed network bisimulation.
\end{corollary}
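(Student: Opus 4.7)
The proof has two directions. The ``only if'' direction (branching $\Rightarrow$ semi-branching) is the easy one: given a branching reliable computed network bisimulation $\mathcal{R}$ and a challenge transition $t_1\overto{(\mathcal{C},\eta)}t_1'$ with $t_1\,\mathcal{R}\,t_2$, either the first clause of Definition~\ref{Def::brbism} yields $\eta=\tau$ and $t_1'\,\mathcal{R}\,t_2$, in which case taking $t_2':=t_2$ together with the reflexive step $t_2\Rightarrow t_2$ witnesses the first clause of Definition~\ref{Def::sbbism}, or else the partitioning clause fires, and that clause is literally identical in the two definitions. So $\mathcal{R}$ is already a semi-branching reliable computed network bisimulation.

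For the ``if'' direction (semi-branching $\Rightarrow$ branching), the plan is not to massage a given semi-branching $\mathcal{R}$ into a branching one directly, but rather to pass through the largest semi-branching bisimulation. Concretely, if $t_1\,\mathcal{R}\,t_2$ for some semi-branching bisimulation $\mathcal{R}$, then $t_1\,\mathcal{R}^{*}\,t_2$, where $\mathcal{R}^{*}$ denotes the union of all semi-branching reliable computed network bisimulations on the underlying CLTS. As observed in the paragraph preceding Proposition~\ref{Prop::semi-branching}, arbitrary unions of semi-branching bisimulations are again semi-branching bisimulations, so $\mathcal{R}^{*}$ is well-defined and is the largest such relation, and it contains the original $\mathcal{R}$. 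Proposition~\ref{Prop::semi-branching} then gives us that $\mathcal{R}^{*}$ is a branching reliable computed network bisimulation, and it still relates $t_1$ and $t_2$; hence these terms are related by a branching bisimulation, as required.

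The main conceptual work has already been done in Proposition~\ref{Prop::semi-branching}, so the only residual obligation for this corollary is the routine verification that unions of semi-branching bisimulations are semi-branching bisimulations. This is immediate from the shape of Definition~\ref{Def::sbbism}, whose two clauses quantify existentially over witnesses required to lie in the relation: any witness supplied by a member of the family remains a witness in the larger union. I do not expect any real obstacle here; the corollary is really just a repackaging of Proposition~\ref{Prop::semi-branching} combined with the fact that semi-branching bisimilarity is the union of all semi-branching bisimulations.
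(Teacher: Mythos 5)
Your proposal is correct and follows essentially the same route as the paper: the ``only if'' direction is the observation (made in the paper immediately before the corollary) that every branching reliable computed network bisimulation already satisfies the semi-branching transfer conditions, and the ``if'' direction passes to the largest semi-branching bisimulation (the union of all of them, which the paper notes is again a semi-branching bisimulation) and invokes Proposition~\ref{Prop::semi-branching}. No discrepancy to report.
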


\begin{corollary}\label{Co::branchingeq}
Branching reliable computed network bisimilarity is an equivalence relation.
\end{corollary}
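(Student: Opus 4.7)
The plan is to derive the corollary as an immediate consequence of the infrastructure already assembled earlier in the appendix, rather than to attempt a direct proof of reflexivity, symmetry, and transitivity for branching reliable computed network bisimilarity itself. Reflexivity and symmetry are essentially built into Definition \ref{Def::brbism}: the identity relation on terms is trivially a branching reliable computed network bisimulation, and the definition already requires both $\mathcal{R}$ and $\mathcal{R}^{-1}$ to be branching reliable computed network simulations, so $\simeq_{br}$ is symmetric by construction. The main obstacle, as in the classical branching-bisimilarity setting of Basten, is transitivity: the naive relational composition of two branching bisimulations is not in general a branching bisimulation, because when one side replies to a transition with a longer $\Rightarrow$-path, the matching witnesses need not sit in the other relation.

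To sidestep this, I would invoke the semi-branching variant introduced in Definition \ref{Def::sbbism} as the intermediate notion, exactly as set up in the appendix. The plan uses three ingredients already available: (i) semi-branching reliable computed network bisimilarity is itself an equivalence (Corollary \ref{Cor::semieq}), which in turn relies on Proposition \ref{Pro::compose} stating that the composition of two semi-branching bisimulations is again semi-branching; (ii) every branching bisimulation is trivially a semi-branching bisimulation, and conversely by Proposition \ref{Prop::semi-branching} the largest semi-branching bisimulation is in fact a branching bisimulation; (iii) as noted in Corollary \ref{Co::relate}, these two facts combine to show that two terms are related by some branching reliable computed network bisimulation if and only if they are related by some semi-branching one.

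Given these three ingredients, the proof of Corollary \ref{Co::branchingeq} reduces to a single sentence: by Corollary \ref{Co::relate}, branching and semi-branching reliable computed network bisimilarity coincide as relations on closed terms, and by Corollary \ref{Cor::semieq} the latter is an equivalence; therefore so is the former. In this chain, all of the genuine work has been absorbed into Lemma \ref{lem::property} (stuttering closure of $\mathcal{R}$ along $\Rightarrow$), Proposition \ref{Pro::compose} (composition, where the partitioning structure of the matched network constraints has to be managed by iterating the partitions produced on each side), and Proposition \ref{Prop::semi-branching} (extending $\mathcal{R}$ with the pair $(t_1', t_2)$ when $t_2 \Rightarrow t_2'$), so at the level of the corollary itself there is nothing left to do beyond citing these results.
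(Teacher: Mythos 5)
Your proposal is correct and follows exactly the paper's route: the corollary is obtained as an immediate consequence of Corollary \ref{Co::relate} (branching and semi-branching reliable computed network bisimilarity coincide, via Proposition \ref{Prop::semi-branching}) together with Corollary \ref{Cor::semieq} (semi-branching bisimilarity is an equivalence, via Lemma \ref{lem::property} and Proposition \ref{Pro::compose}). Nothing further is needed.
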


\begin{corollary}\label{Co::rootedbranchingeq}
Rooted branching reliable computed network bisimilarity is an equivalence relation.
\end{corollary}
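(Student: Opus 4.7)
The plan is to derive this as a short consequence of Corollary \ref{Co::branchingeq}, which already gives that $\simeq_{br}$ is an equivalence. Definition \ref{Def::rbrbism} is framed through two clauses that are symmetric in $t_1$ and $t_2$, so symmetry of $\simeq_{rbr}$ falls out of the definition itself; only reflexivity and transitivity require argument.

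For reflexivity, given any $t$ and any transition $t \overto{(\mathcal{C},\eta)} t'$, the first alternative of the definition of $\langle\cdot\rangle$ gives $t \overto{\langle(\mathcal{C},\eta)\rangle} t'$, and reflexivity of $\simeq_{br}$ supplies $t' \simeq_{br} t'$. Both clauses of Definition \ref{Def::rbrbism} therefore hold with witness $t_2' = t'$.

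The substantive step is transitivity. Suppose $t_1 \simeq_{rbr} t_2$ and $t_2 \simeq_{rbr} t_3$, and fix a transition $t_1 \overto{(\mathcal{C},\eta)} t_1'$. From $t_1 \simeq_{rbr} t_2$ I obtain a match $t_2 \overto{\langle(\mathcal{C},\eta)\rangle} t_2'$ with $t_1' \simeq_{br} t_2'$, which unpacks to a concrete transition $t_2 \overto{(\mathcal{C}^*,\eta^*)} t_2'$ for some $(\mathcal{C}^*,\eta^*) \in \langle(\mathcal{C},\eta)\rangle$. Feeding this concrete transition into $t_2 \simeq_{rbr} t_3$ yields $t_3 \overto{\langle(\mathcal{C}^*,\eta^*)\rangle} t_3'$ with $t_2' \simeq_{br} t_3'$, and transitivity of $\simeq_{br}$ gives $t_1' \simeq_{br} t_3'$. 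The second clause is handled by the mirror argument starting from a transition of $t_3$.

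The only bookkeeping obstacle is verifying $\langle(\mathcal{C}^*,\eta^*)\rangle \subseteq \langle(\mathcal{C},\eta)\rangle$, so that the match produced for $t_3$ is indeed a $\langle(\mathcal{C},\eta)\rangle$-transition as required by Definition \ref{Def::rbrbism}. If $(\mathcal{C}^*,\eta^*) = (\mathcal{C},\eta)$ this is trivial; otherwise $\eta = \nsnd(\mathfrak{m},?)$ and $(\mathcal{C}^*,\eta^*) = (\mathcal{C}[\ell/?],\eta[\ell/?])$ for some $\ell \in {\it Loc}$, so $\eta^* = \nsnd(\mathfrak{m},\ell)$ contains no unknown location, forcing $\langle(\mathcal{C}^*,\eta^*)\rangle = \{(\mathcal{C}^*,\eta^*)\}$, which is plainly inside $\langle(\mathcal{C},\eta)\rangle$. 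Once this small observation about the $\langle\cdot\rangle$ closure is in place, the rest is simply lifting the equivalence of $\simeq_{br}$ through the rooted wrapping.
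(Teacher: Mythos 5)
Your proof is correct and follows essentially the same route as the paper: reflexivity and symmetry read off the definition, and transitivity is obtained by chaining the two matching transitions through $t_2$ and invoking the already-established equivalence (transitivity) of $\simeq_{br}$ from Corollary~\ref{Co::branchingeq}. Your extra check that $\langle(\mathcal{C}^*,\eta^*)\rangle\subseteq\langle(\mathcal{C},\eta)\rangle$ when the intermediate match instantiates the unknown sender is a detail the paper glosses over, and it is handled correctly.
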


\begin{proof}
It is easy to show that rooted branching reliable computed network
bisimilarity is reflexive and symmetric. To conclude the proof, we
show that rooted branching reliable computed network bisimilarity is
transitive. Let $ t_1\simeq_{rbr} t_2$ and
$ t_2\simeq_{rbr} t_3$. Since
$ t_1\simeq_{rbr} t_2$, if
$ t_1\overto{(\mathcal{C},\eta)} t_1'$, then there is
$ t_2'$ such that
$ t_2\overto{\langle(\mathcal{C},\eta)\rangle} t_2'$ and
$ t_1'\simeq_{br} t_2'$. Since
$ t_2\simeq_{rbr} t_3$, there is a $ t_3'$
such that $ t_3\overto{\langle(\mathcal{C},\eta)\rangle} t_3'$ and
$ t_2'\simeq_{br} t_3'$. Equivalence of branching reliable
computed network bisimilarity yields
$ t_3\overto{\langle(\mathcal{C},\eta)\rangle} t_3'$ with
$ t_1'\simeq_{br} t_3'$. The same argumentation holds
when $ t_3\overto{(\mathcal{C},\eta)} t_3'$. Consequently
the transfer conditions of Definition~\ref{Def::rbrbism} holds and
$ t_1\simeq_{rbr} t_3$.
\end{proof}

\section{\label{sec::cong}Rooted Branching Reliable Computed Network Bisimilarity is a Congruence}

\begin{theorem}\label{Theo::congruence}
Rooted branching reliable computed network bisimilarity is a congruence for
terms with
respect to {\it RCNT} operators.
\end{theorem}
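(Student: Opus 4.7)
The plan is to follow the standard methodology for proving that a rooted branching bisimilarity is a congruence (as in the lossy setting of \cite{FatemehFI10} and in the style of \cite{Basten}), adapted to handle the three novel features of {\it RCNT}: the constraint-labeled transitions, the partitioning clause in Definition~\ref{Def::brbism}, and the negative premise of rule ${\it Rcv}_2$. The overall strategy is two-tiered: first establish that the non-rooted relation $\simeq_{br}$ is preserved by every {\it RCNT} operator except choice, then upgrade to $\simeq_{rbr}$ by exploiting the rooted matching of initial transitions.

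For the first tier, for each $n$-ary operator $op$, I would define a candidate relation $\mathcal{R}_{op}$ consisting of all pairs $(op(t_1,\ldots,t_n), op(t_1',\ldots,t_n'))$ with $t_i \simeq_{br} t_i'$ for every $i$, and show that $\mathcal{R}_{op} \cup \simeq_{br}$ is a semi-branching reliable computed network bisimulation, invoking Corollary~\ref{Co::relate}. For the unary prefixes $\alpha.t$ and $(\mathcal{C},\eta).t$, for hiding $(\nu\ell)\cdot$, abstraction $\tau_m$, encapsulation $\partial_m$, topology restriction $\mathcal{C}\rhd\cdot$, and the sensing operator, this is mechanical: every transition of the composite term derives from a transition of a single component through one of the rules in Tables~\ref{Tab::RRBPTsemanticsII} and~\ref{Tab::SOSRCNT}, the matching transition on the other side is provided by the bisimulation hypothesis on that component, and the composite resultants again lie in $\mathcal{R}_{op}$. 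Choice $t_1+t_2$ fails to preserve $\simeq_{br}$ at the root (the usual $\tau$-prefix counterexample), which is precisely why rootedness is needed.

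The genuinely delicate cases are parallel composition (together with its auxiliary left and communication merges), the deployment pair $\oc\cdot\cc_\ell$ and $\ell{:}t_1{:}t_2$, and recursion $\rec \mathfrak{A}\cdot t$. For parallel composition, a transition $t_1\parallel t_2 \overto{(\mathcal{C},\eta)} t_1'\parallel t_2'$ arises through ${\it Bro}$ or ${\it Recv}$ from synchronising transitions of $t_1$ and $t_2$, whose matching on the other side must supply a $\tau$-closure followed by a family of $\langle(\mathcal{C}_i,\eta)\rangle$-transitions whose constraints partition $\langle\mathcal{C}\rangle$. The key technical lemma I would prove is that unions of well-formed constraints distribute over partitionings: if $\langle\mathcal{C}_1\rangle,\ldots,\langle\mathcal{C}_k\rangle$ partitions $\langle\mathcal{C}\rangle$ and each $\mathcal{C}_i\cup\mathcal{D}$ is well-formed, then $\langle\mathcal{C}_1\cup\mathcal{D}\rangle,\ldots,\langle\mathcal{C}_k\cup\mathcal{D}\rangle$ partitions $\langle\mathcal{C}\cup\mathcal{D}\rangle$, and similarly under the substitution $[\ell/?]$ needed in ${\it Bro}$. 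This lemma bridges the synchronisation rules with Definition~\ref{Def::brbism}. For the deployment operator the subtlety is the self-loops that ${\it Rcv}_2$ synthesises for every message $t$ cannot currently process; I would first prove an auxiliary invariance that two terms related by $\simeq_{br}$ enable the same set of receive messages up to $\tau$-closure, since a $\tau$-step cannot lose the capability to drop an unprocessed message (the first clause of Definition~\ref{Def::brbism} keeps receive enabledness intact in the target state). Recursion is handled by the standard guardedness argument, showing that $\{(C[\rec\mathfrak{A}\cdot t/\mathfrak{A}],\, C[\rec\mathfrak{A}\cdot t'/\mathfrak{A}]) \mid t\simeq_{br} t',\; C\text{ a context}\}$ is a semi-branching bisimulation by structural induction on $C$ and by using ${\it Rec}$ to unfold when needed.

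For the second tier, lifting to $\simeq_{rbr}$ is the classical argument: every initial transition of $op(t_1,\ldots,t_n)$ comes from an initial transition of some $t_i$ or from a synchronisation of initial transitions of several $t_i$'s, and by rootedness of $t_i\simeq_{rbr} t_i'$ each such initial transition is matched by an $\langle(\mathcal{C},\eta)\rangle$-transition of $t_i'$ landing in a state that is $\simeq_{br}$-equivalent; the composite successors then fall in the bisimulation established at the first tier, witnessing $op(t_1,\ldots,t_n)\simeq_{rbr} op(t_1',\ldots,t_n')$. The main obstacle I foresee is precisely the interaction between the negative premise of ${\it Rcv}_2$ and the partitioning clause in the presence of parallel composition: I must ensure that the self-loops produced by ${\it Rcv}_2$ on the two sides partition the same constraint space even after communications that change the local control state, which relies jointly on the invariance of enabled-message sets under $\simeq_{br}$ and on the union-versus-partition lemma above. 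All other operators reduce to routine case analysis once these two facts are in place.
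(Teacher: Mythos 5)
Your proposal is correct and follows essentially the same route as the paper: a per-operator case analysis that builds a candidate relation from the witnessing bisimulations, verifies the transfer conditions including the partitioning clause (relying, as the paper does implicitly, on unions of constraints preserving partitionings under ${\it Bro}$/${\it Recv}$), treats the ${\it Rcv}_2$ self-loops of deployed terms by arguing they must be matched by ${\it Rcv}_2$-transitions on the other side, and then lifts to $\simeq_{rbr}$ via the root condition. The only presentational differences are that you make the union-distributes-over-partitioning lemma and the receive-enabledness invariance explicit (the paper instead argues directly from the syntactic form of the maximal constraints produced by ${\it Rcv}_2$), and that the paper states the cases for sequential operators with hypotheses on the deployed terms $\oc t_i\cc_\ell$ rather than on the bare terms.
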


\begin{proof}We need to prove the following cases:\begin{enumerate}
\item $\oc t_1\cc_\ell\simeq_{rbr} \oc t_2\cc_\ell$ implies $\oc \alpha.t_1\cc_\ell\simeq_{rbr} \oc
\alpha.t_2\cc_\ell$; \label{case::1}
\item $\oc t_1\cc_\ell\simeq_{rbr} \oc t_2\cc_\ell$ and $\oc t_1'\cc_\ell\simeq_{rbr} \oc t_2'\cc_\ell$ implies $\oc t_1+t_1'\cc_\ell\simeq_{rbr} \oc
t_2+t_2'\cc_\ell$; \label{case::2}
\item $\oc t_1\cc_\ell\simeq_{rbr} \oc t_2\cc_\ell$ and $\oc t_1'\cc_\ell\simeq_{rbr} \oc t_2'\cc_\ell$ implies $\oc
{\it sense}(\ell',t_1,t_1')\cc_\ell\simeq_{rbr} \oc{\it sense}(\ell',t_2,t_2')\cc_\ell$;
\label{case::3}
\item $\oc t_1\cc_\ell\simeq_{rbr}\oc t_2\cc_\ell$ implies
$\ell :t:  t_1\simeq_{rbr}\ell :t: t_2$ for any arbitrary term $t$;\label{case::13}
\item $t_1\simeq_{rbr}t_2$ implies
$(\mathcal{C},\eta).t_1\simeq_{rbr}(\mathcal{C},\eta).t_2$; \label{case::4}
\item $ t_1\simeq_{rbr} t_2$ and $ t_1'\simeq_{rbr} t_2'$ implies
$ t_1+ t_1'\simeq_{rbr} t_2+ t_2'$;
\label{case::5}
\item $t_1\simeq_{rbr}t_2$ implies
$(\nu\ell)t_1\simeq_{rbr}(\nu\ell)t_2$;
\label{case::6}
\item $t_1\simeq_{rbr}t_2$ and $t_1'\simeq_{rbr}t_2'$ implies
$t_1\parallel t_1'\simeq_{rbr}t_2\parallel t_2'$;
\label{case::7}
\item $ t_1\simeq_{rbr} t_2$ and $ t_1'\simeq_{rbr} t_2'$ implies
$ t_1\lm t_1'\simeq_{rbr} t_2\lm t_2'$;
\label{case::8}
\item $ t_1\simeq_{rbr} t_2$ and $ t_1'\simeq_{rbr} t_2'$ implies
$ t_1\mid t_1'\simeq_{rbr} t_2\mid t_2'$;
\label{case::9}
\item $t_1\simeq_{rbr}t_2$ implies $\partial_M(t_1)\simeq_{rbr}\partial_M(t_2)$; \label{case::10}
\item $t_1\simeq_{rbr}t_2$ implies
$\tau_M(t_1)\simeq_{rbr}\tau_M(t_2)$;\label{case::11}
\item $t_1\simeq_{rbr}t_2$ implies
$\mathcal{C}\rhd t_1\simeq_{rbr}\mathcal{C}\rhd t_2$.
\label{case::12}
\end{enumerate}

Clearly, if $ t_1\simeq_{rbr} t_2$ then
$ t_1\simeq_{br} t_2$ is witnessed by the following
branching reliable computed network bisimulation relation:\[\begin{array}{l}
\mathcal{R}'=\{\mathcal{R}\,\mid\, t_1\overto{(\mathcal{C},\eta)} t_1'\Rightarrow
\exists
t_2'\cdot t_2\overto{\langle (\mathcal{C},\eta)\rangle } t_2'\wedge
t_1'\simeq_{br} t_2'\mbox{\small ~is witnessed by
$\mathcal{R}$}\} \\
\hspace*{0.5cm}\cup\,\{\mathcal{R}\,\mid\, t_2\overto{(\mathcal{C},\eta)} t_2'\Rightarrow
\exists
t_1'\cdot t_1\overto{\langle (\mathcal{C},\eta)\rangle} t_1'\wedge
t_1'\simeq_{br} t_2'\mbox{\small ~is witnessed by
$\mathcal{R}$}\} \\
\hspace*{0.5cm}\cup\,\{( t_1, t_2)\}.\end{array}\]

We prove the cases
\ref{case::1},~\ref{case::2},~\ref{case::13},~\ref{case::6},~\ref{case::9},~
\ref{case::10}, and~\ref{case::12} since the proof of the cases \ref{case::3} and \ref{case::5} are similar to the case \ref{case::2}, the case~\ref{case::4} is
similar to the case \ref{case::1},
the cases~\ref{case::7} and~\ref{case::8} are
similar to the case \ref{case::9},
and the case~\ref{case::11} is similar to the
case~\ref{case::10}.

\noindent \textbf{Case~\ref{case::1}}. The first transitions of $\oc
\alpha.t_1\cc_\ell$ and $\oc \alpha.t_2\cc_\ell$ are the same with application of the rule ${\it Snd}$ (if $\alpha$ is a send action), ${\it Rcv}_{1}$ (if $\alpha$ is a receive action), or ${\it Rcv}_2$ (for receiving  $(\mathcal{C},\nrcv(\mathfrak{m}))$which are not derivable from ${\it Rcv}_1$), and
by assumption $\oc t_1\cc_\ell\simeq_{rbr}\oc t_1\cc_\ell$ implies $\oc
t_1\cc_\ell\simeq_{br}\oc t_1\cc_\ell$. Thus the transfer conditions
of Definition~\ref{Def::rbrbism} hold.

\noindent \textbf{Case~\ref{case::2}}. Every transition $\oc
t_1+t_1'\cc_\ell\overto{(\mathcal{C},\eta)} t$ owes to $\oc
t_1\cc_\ell\overto{(\mathcal{C},\eta)} t$ or $\oc
t_1'\cc_\ell\overto{(\mathcal{C},\eta)} t$ by application of ${\it Choice}$, or is implied by application of ${\it Rcv}_2$, i.e., $\oc
t_1+t_1'\cc_\ell\overto{(\mathcal{C},\nrcv(\mathfrak{m}))} \oc
t_1+t_1'\cc_\ell$  iff there exists no $\oc t_1+t_1'\cc_\ell\overto{(\mathcal{C}',\nrcv(\mathfrak{m}))}t$ for some $t$ such that $\mathcal{C}\preccurlyeq \mathcal{C}'$. For the former case, $\oc
t_1\cc_\ell\simeq_{rbr}\oc t_2\cc_\ell$ and $\oc
t_1'\cc_\ell\simeq_{rbr}\oc t_2'\cc_\ell$ imply there is a $ t'$
such that $\oc t_2\cc_\ell\overto{(\langle\mathcal{C},\eta)\rangle} t'$ or $\oc
t_2'\cc_\ell\overto{\langle(\mathcal{C},\eta)\rangle} t'$ and
$ t\simeq_{br} t'$. Thus $\oc
t_2+t_2'\cc_\ell\overto{\langle(\mathcal{C},\eta)\rangle}  t'$ with
$ t\simeq_{br} t'$. For the latter case by $\it Choice$, there exists no $\oc t_1\cc_\ell\overto{(\mathcal{C}',\nrcv(\mathfrak{m}))}t$ and $\oc t_1'\cc_\ell\overto{(\mathcal{C}',\nrcv(\mathfrak{m}))}t$ for some $t$ such that $\mathcal{C}\preccurlyeq \mathcal{C}'$. Thus by ${\it Rcv}_2$, $\oc t_1\cc_\ell\overto{(\mathcal{C},\nrcv(\mathfrak{m}))}\oc t_1\cc_\ell$ and $\oc t_1'\cc_\ell\overto{(\mathcal{C},\nrcv(\mathfrak{m}))}\oc t_1'\cc_\ell$. We remark that transitions derived by application of ${\it Rcv}_2$ are those that cannot be derived from ${\it Rcv}_1$. The greatest value of the network constraints of such transitions either have the disconnectivity pair in the form of $?\nconn \ell$ or have no connectivity pair in the form of $?\conn \ell$. 
This implies that such transitions can not be mimicked by application of ${\it Rcv}_1$ (since it will add constraints of the form $?\conn\ell$) .
Therefore, $\oc
t_1\cc_\ell\simeq_{rbr}\oc t_2\cc_\ell$ and $\oc
t_1'\cc_\ell\simeq_{rbr}\oc t_2'\cc_\ell$ imply that $\oc t_2\cc_\ell\overto{(\mathcal{C},\nrcv(\mathfrak{m}))}\oc t_2\cc_\ell$ and $\oc t_2'\cc_\ell\overto{(\mathcal{C},\nrcv(\mathfrak{m}))}\oc t_2'\cc_\ell$ which are derived by application of ${\it Rcv}_2$. Consequently $\oc t_2+t_2'\cc_\ell\overto{(\mathcal{C},\nrcv(\mathfrak{m}))}\oc t_2+t_2'\cc_\ell$.

\noindent \textbf{Case~\ref{case::13}}
Suppose that $\ell :t: t_1\overto{(\mathcal{C}^\ast,\eta)}t^\ast$, then three cases can be distinguished: \begin{itemize}
\item  It owes to $t_1\overto{(\mathcal{C},\snd(\mathfrak{m}))}t_1'$ by application of ${\it Inter}_1'$, and $\mathcal{C}^\ast=\mathcal{C}[\ell/?]$, $\eta=\nsnd(\mathfrak{m},\ell)$ and $t^\ast=\oc t_1'\cc_\ell$. By application of $\it Snd$, it implies that $\oc t_1\cc_\ell\overto{(\mathcal{C}[\ell/?],\nsnd(\mathfrak{m},\ell))}\oc t_1'\cc_\ell$. By assumption $\oc t_1\cc_\ell\simeq_{rbr}\oc t_2\cc_\ell$ implies that  $\oc t_2\cc_\ell\overto{ (\mathcal{C}[\ell/?],\nsnd(\mathfrak{m},\ell))}\oc t_2'\cc_\ell$ and $\oc t_1'\cc_\ell\simeq_{br}\oc t_2'\cc_\ell$. Therefore, by rule $\it Snd$, $t_2\overto{(\mathcal{C},\snd(\mathfrak{m}))}t_2'$, and hence by application of ${\it Inter}_1'$, $\ell :t: t_2 \overto{(\mathcal{C}[\ell/?],\nsnd(\mathfrak{m},\ell))}\oc t_2'\cc_\ell$.
\item It owes to $t_1\overto{(\mathcal{C},\rcv(\mathfrak{m}))}t_1'$ by application of either ${\it Inter}_2'$ or ${\it Inter}_3'$. This case is proved with the same argumentation as the previous case.
\item If $t_1$ and $t_2$ are of the form ${\it sense}(\ell',t_1^\ast,t_1^{\ast\ast})$ and ${\it sense}(\ell',t_2^\ast,t_2^{\ast\ast})$ respectively, and the transition owes to either ${\it Sen}_3$ or ${\it Sen}_4$. Assume it was derived by ${\it Sen}_3$, as the other case can be proved with the same argumentation. Thus, $t_1^\ast\noverto{\rcv(\mathfrak{m})}$, $t_1^{\ast\ast}\overto{(\mathcal{C},\rcv(\mathfrak{m}))}{t_1^{\ast\ast}}'$, $\mathcal{C}^\ast=\{\ell'\conn \ell\}\cup\mathcal{C}[\ell/?]$, $\eta=\nrcv(\mathfrak{m})$ and $t^\ast=t$. Therefore, by application of ${\it Rcv}_2$, $\oc t_1\cc_\ell\overto{(\mathcal{C}^\ast,\nrcv(\mathfrak{m}))}\oc t_1\cc_\ell$, and by application of ${\it Sen}_2$ and ${\it Rcv}_1$, $\oc t_1\cc_\ell\overto{(\{\ell'\nconn \ell\}\cup\mathcal{C}[\ell/?],\nrcv(\mathfrak{m}))}\oc {t_1^{\ast\ast}}'\cc_\ell$. By assumption $\oc t_1\cc_\ell\simeq_{rbr}\oc t_2\cc_\ell$ implies that $\oc t_2\cc_\ell\overto{(\mathcal{C}^\ast,\nrcv(\mathfrak{m}))}\oc t_2\cc_\ell$ and $\oc t_2\cc_\ell\overto{(\{\ell'\nconn \ell\}\cup\mathcal{C}[\ell/?],\nrcv(\mathfrak{m}))}\oc {t_2^{\ast\ast}}'\cc_\ell$ where $\oc {t_1^{\ast\ast}}'\cc_\ell\simeq_{br}\oc {t_2^{\ast\ast}}'\cc_\ell$. Thus, $t_2^\ast\noverto{\rcv(\mathfrak{m})}$, $t_2^{\ast\ast}\overto{(\mathcal{C},\rcv(\mathfrak{m}))}{t_2^{\ast\ast}}'$ (as the only way to generate the pair $\ell'\nconn \ell$ is through the $\it sense$ operator) and $\ell:t:t_2\overto{(\mathcal{C}^\ast,\nrcv(\mathfrak{m}))}t$ by application of ${\it Sen}_3$.
\end{itemize}

\noindent \textbf{Case~\ref{case::6}}.~We prove that if
$ t_1\simeq_{br} t_2$ then
$(\nu\ell) t_1\simeq_{br}(\nu\ell) t_2$. Let
$ t_1\simeq_{br}  t_2$ be witnessed by the
branching reliable computed network bisimulation relation $\mathcal{R}$.
We define
$\mathcal{R}'=\{((\nu\ell) t_1',(\nu\ell) t_2')|( t_1', t_2')\in
\mathcal{R}\}$. We prove that $\mathcal{R}'$ is a branching reliable computed
network bisimulation relation. Suppose
$(\nu\ell) t_1'\overto{(\mathcal{C}',\eta')}(\nu\ell) t_1''$
results from the application of $\it Hid$ on
$ t_1'\overto{(\mathcal{C},\eta)} t_1''$. Since
$( t_1', t_2')\in \mathcal{R}$, there are two
cases; in the first case $\eta$ is a $\tau$ action and
$( t_1'', t_2')\in\mathcal{R}$, consequently
$((\nu\ell) t_1'',(\nu\ell) t_2')\in
\mathcal{R}'$. In second case there are $ s_1''',\ldots s_k'''$ and
$ s_1'',\ldots,s_k''$ for some $k>0$ such that $\forall i\le k\, (t_2'\Rightarrow
s_i'''\overto{\langle(\mathcal{C}_i,\eta)\rangle}  s_i''$ with
$( t_1', s_i'''),( t_1'', s_i'')\in
\mathcal{R})$, and $\langle \mathcal{C}_1\rangle \ldots,\langle \mathcal{C}_k\rangle$ is a partitioning of $\langle \mathcal{C}\rangle$. By application of $\it Hid$,
$\forall i\le k\,((\nu\ell) t_2'\Rightarrow (\nu\ell) s_i'''$ with
$((\nu\ell) t_1',(\nu\ell) s_i''')\in
\mathcal{R}')$. There are two cases to consider:
\begin{itemize}
\item $\langle(\mathcal{C}_i,\eta)\rangle=(\mathcal{C}_i,\eta)$: Consequently
$(\nu\ell) s_i'''\overto{(\mathcal{C}_i',\eta')}
(\nu\ell) s_i''$ where $(\mathcal{C}_i',\eta')=(\mathcal{C}_i,\eta)[?/\ell]$.
\item $\langle(\mathcal{C}_i,\eta)\rangle\neq(\mathcal{C}_i,\eta)$: in this case $\eta$ is of the form
$\nsnd(\mathfrak{m},?)$, $\eta'=\eta$, and $\mathcal{C}'_i=\mathcal{C}_i[?/\ell]$. If
$\langle(\mathcal{C}_i,\eta)\rangle= (\mathcal{C}_i,\eta)[\ell/?]$ then $\langle(\mathcal{C}_i,\eta)\rangle[?/\ell]=(\mathcal{C}_i',\eta')$ holds, otherwise $\langle(\mathcal{C}_i,\eta)\rangle=(\mathcal{C}_i,\eta)[\ell'/?]$, where $\ell'\neq
\ell$, and hence $\langle(\mathcal{C}_i,\eta)\rangle[?/\ell]$ is a counterpart of $(\mathcal{C}_i',\eta')$. Consequently $(\nu\ell) s_i'''\overto{\langle(\mathcal{C}_i',\eta')\rangle}
(\nu\ell) s_i''$.
\end{itemize} Owing to the fact that a subset of $\mathcal{C}_1[?/\ell],\ldots,\mathcal{C}_k[?/\ell]$ constitutes a partitioning of $\mathcal{C}[\ell/?]$, and according to the discussion above, there are $s_1''',\ldots, s_j'''$ and
$ s_1'',\ldots,s_j''$ for some $j\le k$ such that $\forall i\le j,\,(\nu\ell) t_2'\Rightarrow
(\nu\ell) s_i'''\overto{\langle(\mathcal{C}_i',\eta')\rangle}
(\nu\ell) s_i''$ with
$((\nu\ell) t_1',(\nu\ell) s_i'''),((\nu\ell) t_1'',(\nu\ell) s_i'')\in
\mathcal{R}')$, and $\langle \mathcal{C}_1'\rangle ,\ldots,\langle \mathcal{C}_j'\rangle $ is a partitioning of $\langle \mathcal{C}'\rangle$.

Likewise we can prove that $ t_1\simeq_{rbr}  t_2$
implies $(\nu\ell) t_1\simeq_{rbr} (\nu\ell) t_2$.
To this aim we examine the root condition in
Definition~\ref{Def::rbrbism}. Suppose
$(\nu\ell) t_1\overto{(\mathcal{C}',\eta')}(\nu\ell) t_1'$.
With the same argument as above,
$(\nu\ell) t_2\overto{\langle(\mathcal{C}',\eta')\rangle}(\nu\ell) t_2'$.
Since $ t_1'\simeq_{br}  t_2'$, we proved that
$(\nu\ell) t_1'\simeq_{br} (\nu\ell) t_2'$.
Concluding $(\nu\ell) t_1\simeq_{rbr}
(\nu\ell) t_2$.

\noindent \textbf{Case~\ref{case::9}}.~From the three remaining
cases, we focus on the most challenging case, which is the
communication merge operator $|$, as the other operators are proved
in a similar way.
First we prove that if $ t_1\simeq_{br}
t_2$, then $ t_1\parallel  t\simeq_{br}
t_2\parallel  t$. Let $ t_1\simeq_{br}
t_2$ be witnessed by the branching reliable computed network
bisimulation relation $\mathcal{R}$. We define
$\mathcal{R}'=\{( t_1'\parallel  t',
t_2'\parallel
t')\mid ( t_1', t_2')\in \mathcal{R},\mbox{
$ t'$ any computed network term}\}$. We prove that
$\mathcal{R}'$ is a branching reliable computed network bisimulation
relation. Suppose $ t_1'\parallel
t\overto{(\mathcal{C}^\ast,\eta)} t^\ast$. There are
several cases to consider:
\begin{itemize}
\item Suppose $\eta$ is of the form $\nsnd(\mathfrak{m},\ell)$. First let it be performed by $ t_1'$, and $ t$ participated in the communication. That is, $ t_1'\overto{(\mathcal{C}_1,\nsnd(\mathfrak{m},\ell))} t_1''$ and $ t\overto{(\mathcal{C},\nrcv(\mathfrak{m}))} t'$ give rise to the transition $ t_1'\parallel  t\overto{(\mathcal{C}_1\cup C[\ell/?],\nsnd(\mathfrak{m},\ell))} t_1''\parallel  t'$. As $( t_1', t_2')\in \mathcal{R}$ and
$ t_1'\overto{(\mathcal{C}_1,\nsnd(\mathfrak{m},\ell))} t_1''$, there
are $ s_1''',\ldots,s_k'''$ and $s_1'',\ldots,s_k''$ for some $k>0$ such that $\forall i\le k\,( t_2'\Rightarrow s_i'''\overto{(\mathcal{C}_{1_i}[\ell'/\ell],\nsnd(\mathfrak{m},\ell'))}s_i''$,
where $(\ell=? \vee \ell=\ell')$, with $( t_1', s_i'''),( t_1'', s_i'')\in \mathcal{R})$, and $\mathcal{C}_{1_1}[\ell'/\ell],\ldots,\mathcal{C}_{1_k}[\ell'/\ell]$ is a partitioning of $\mathcal{C}_1[\ell'/\ell]$. Hence $\forall i\le k\,( t_2'\parallel t\Rightarrow  s_{i}'''\parallel  t\overto{((\mathcal{C}_{1_i}[\ell'/\ell]\cup C)[\ell'/?],\nsnd(\mathfrak{m},\ell'))} s_i''\parallel  t'$ with $( t_1'\parallel  t, s_i'''\parallel  t),( t_1''\parallel  t', s_i''\parallel  t')\in \mathcal{R}')$, and $(\mathcal{C}_{1_1}[\ell'/\ell]\cup\mathcal{C})[\ell'/?],\ldots,(\mathcal{C}_{1_k}[\ell'/\ell]\cup\mathcal{C})[\ell'/?]$ is a partitioning of $(\mathcal{C}_1[\ell'/\ell]\cup\mathcal{C})[\ell'/?]$.

Now suppose that the send action was performed by $ t$, and $ t_1'$ participated in the communication. That is, $ t_1'\overto{(\mathcal{C}_1,\nrcv(\mathfrak{m}))} t_1''$ and $ t\overto{(\mathcal{C},\nsnd(\mathfrak{m},\ell))} t'$
give rise to the transition $ t_1'\parallel  t\overto{(\mathcal{C}_1\cup \mathcal{C}[\ell/?],\nsnd(\mathfrak{m},\ell))} t_1''\parallel  t'$. Since $( t_1', t_2')\in \mathcal{R}$ and $ t_1'\overto{(\mathcal{C}_1,\nrcv(\mathfrak{m}))} t_1''$, there are $ s_1''',\ldots,s_k'''$ and $s_1'',\ldots,s_k''$ for some $k>0$
such that
$\forall i\le k\,( t_2'\Rightarrow s_i'''\overto{(\mathcal{C}_{1_i},\nrcv(\mathfrak{m}))} s_i''$ with $( t_1', s_i'''),( t_1'', s_i'')\in \mathcal{R})$, and $\mathcal{C}_{1_1},\ldots,\mathcal{C}_{1_k}$ is a partitioning of $\mathcal{C}_1$. Therefore, $\forall i\le k\,( t_2'\parallel
t\Rightarrow  s_i'''\parallel  t\overto{(\mathcal{C}_{1_i} \cup C[\ell/?],\nsnd(\mathfrak{m},\ell))} s_i''\parallel  t'$, and $( t_1'\parallel  t, s_i'''\parallel  t),( t_1''\parallel  t', s_i''\parallel  t')\in \mathcal{R}'$) and $\mathcal{C}_{1_1} \cup \mathcal{C}[\ell/?],\ldots,\mathcal{C}_{1_k} \cup \mathcal{C}[\ell/?]$ constitute a partitioning of $\mathcal{C}_{1} \cup \mathcal{C}[\ell/?]$.

\item The case where $\eta$ is a receive action is proved in a similar way to the previous case.
\item Suppose $\eta$ is a $\tau$ action. Assume it originates from $ t_1$ by application of $\it Par$. Thus $t_1'\overto{(\mathcal{C},\tau)}t_1''$ and $(t_1',t_2')\in\mathcal{R}$ implies: either $(t_1'',t_2')\in\mathcal{R}$ and consequently $(t_1''\parallel t,t_2'\parallel t)\in \mathcal{R}'$, or there are $ s_1''',\ldots,s_k'''$ and $s_1'',\ldots,s_k''$ for some $k>0$
such that
$\forall i\le k\,( t_2'\Rightarrow s_i'''\overto{(\mathcal{C}_{i},\tau)} s_i''$ with $( t_1', s_i'''),( t_1'', s_i'')\in \mathcal{R})$, and $\mathcal{C}_{1}, \ldots,\mathcal{C}_{k} $ constitute a partitioning of $\mathcal{C}$. Therefore, $\forall i\le k\,( t_2'\parallel
t\Rightarrow  s_i'''\parallel  t\overto{(C_{i}  ,\tau)} s_i''\parallel  t'$, and $( t_1'\parallel  t, s_i'''\parallel  t),( t_1''\parallel  t', s_i''\parallel  t')\in \mathcal{R}'$). The case when $t\overto{(\mathcal{C},\tau)}t'$ implies $t_1'\parallel t\overto{(\mathcal{C},\tau)} t_1'\parallel t'$ by application of $\it Par$ is straightforward.
\item The case when $\eta$ is an internal action is easy to prove (similar to the second case of the previous case).
\end{itemize}
Likewise we can prove that $ t_1\simeq_{rbr}  t_2$
implies
$ t\parallel t_1\simeq_{rbr} t\parallel t_2$.

\vspace{2mm}

Now let $ t_1\simeq_{rbr}  t_2$. To prove
$ t_1 \mid  t\simeq_{rbr}  t_2\mid t$,
we examine the root condition from Definition~\ref{Def::rbrbism}.
Suppose
$ t_1\mid t\overto{(\mathcal{C}^\ast,\nsnd(\mathfrak{m},\ell))}
t^\ast$. There are two cases to consider:
\begin{itemize}
\item This send action was performed by $ t_1$ at node $\ell$, and $ t$ participated in the communication. That is, $ t_1\overto{(\mathcal{C}_1,\nsnd(\mathfrak{m},\ell))} t_1'$ and $ t\overto{(\mathcal{C},\nrcv(\mathfrak{m}))} t'$, so that $ t_1\mid t\overto{(\mathcal{C}_1\cup \mathcal{C}[\ell/?],\nsnd(\mathfrak{m},\ell))} t_1'\parallel t'$. Since $ t_1\simeq_{rbr} t_2$, there is a $ t_2'$ such that $ t_2\overto{(\mathcal{C}_1,\nsnd(\mathfrak{m},\ell'))} t_2'$ with $(\ell=? \vee \ell=\ell')$ and $ t_1'\simeq_{br}  t_2'$. Then $ t_2\mid t\overto{(\mathcal{C}_1\cup \mathcal{C}[\ell'/?],\nsnd(\mathfrak{m},\ell))} t_2'\parallel
t'$. Since $ t_1'\simeq_{br} t_2'$, we proved that $ t_1'\parallel  t'\simeq_{br} t_2'\parallel
t'$.
\item The send action was performed by $ t$ at node $\ell$, and $ t_1$ participated in the communication. That is, $ t_1\overto{(\mathcal{C}_1,\nrcv(\mathfrak{m}))} t_1'$ and $ t\overto{(\mathcal{C},\nsnd(\mathfrak{m},\ell))} t'$, so that $ t_1\mid t\overto{(\mathcal{C}_1\cup\mathcal{C}[\ell/?],\nsnd(\mathfrak{m},\ell))} t_1'\parallel
t'$. Since $ t_1\simeq_{rbr} t_2$, there is a $ t_2'$ such that $ t_2\overto{(\mathcal{C}_1,\nrcv(\mathfrak{m}))} t_2'$ with $ t_1'\simeq_{br} t_2'$. Then $ t_2\mid t\overto{(\mathcal{C}_1\cup\mathcal{C}[\ell/?],\nsnd(\mathfrak{m},\ell))} t_2'\parallel
t'$. Since
$ t_1'\simeq_{br} t_2'$, we have
$ t_1'\parallel
t'\simeq_{br} t_2'\parallel
t'$.
\end{itemize}
Finally, the case where $ t_1\mid t\overto{(\mathcal{C}^\ast,\nrcv(\mathfrak{m}))} t^\ast$ can be easily dealt with. This receive action was performed by both $ t_1$ and $ t$.

Concluding, $ t_1 \mid
t\simeq_{rbr} t_2\mid t$. Likewise it can be
argued that $ t \mid
t_1\simeq_{rbr} t\mid t_2$.

\noindent \textbf{Case~\ref{case::10}}.~We prove that if
$ t_1\simeq_{br} t_2$, then
$\partial_M( t_1)\simeq_{br}\partial_M( t_2)$. Let
$ t_1\simeq_{br}  t_2$ be witnessed by the
branching reliable computed network bisimulation relation $\mathcal{R}$. We define
$\mathcal{R}'=\{(\partial_M( t_1'),\partial_M( t_2'))\mid ( t_1', t_2')\in
\mathcal{R}\}$. We prove that $\mathcal{R}'$ is a branching reliable computed
network bisimulation relation. Suppose that
$\partial_M( t_1')\overto{(\mathcal{C},\eta)}\partial_M( t_1'')$
results from the application of $\it Encap$ on
$ t_1'\overto{(\mathcal{C},\eta)}  t_1''$ such that $\eta\neq\nrcv(\mathfrak{m})\vee {\it isType}_m(\mathfrak{m})=F$. Since
$( t_1', t_2')\in \mathcal{R}$, two cases can be considered: either $\eta $ is a $\tau$ action and $(t_1'',t_2')\in\mathcal{R}$, or there are
$ s_1''',\ldots,s_k'''$ and $s_1'',\ldots, s_k''$ for some $k>0$ such that
$\forall i\le k\,( t_2'\Rightarrow
s_i'''\overto{\langle(\mathcal{C}_i,\eta)\rangle} s_i''$ with
$( t_1', s_i'''),( t_1'', s_i'')\in
\mathcal{R}$) and $\langle \mathcal{C}_1\rangle,\ldots,\langle \mathcal{C}_k\rangle$ is a partitioning of $\langle \mathcal{C}\rangle$. In the former case, $(\partial_M( t_1''),\partial_M( t_2'))\in
\mathcal{R}'$. In the latter case, by application of $\it Par$ and $\it Encap$,
$\forall i\le k\,(\partial_M( t_2')\Rightarrow
\partial_M( s_i''')\overto{\langle(\mathcal{C}_i,\eta)\rangle}\partial_M( t_2'')$ with
$(\partial_M( t_1'),\partial_M( s_i''')),(\partial_M( t_1''),\partial_M( s_i''))\in
\mathcal{R}')$.

Likewise we can prove that $ t_1\simeq_{rbr}  t_2$
implies $\partial_M( t_1)\simeq_{rbr}
\partial_M( t_2)$. To this aim we examine the root condition
in Definition~\ref{Def::rbrbism}. Suppose
$\partial_M( t_1)\overto{(\mathcal{C},\eta)}\partial_M( t_1')$.
With the same argument as above,
$\partial_M( t_2)\overto{\langle(\mathcal{C},\eta)\rangle}\partial_M( t_2')$.
Since $ t_1'\simeq_{br}  t_2'$, we proved that
$\partial_M( t_1')\simeq_{br} \partial_M( t_2')$.
Concluding $\partial_M( t_1)\simeq_{rbr}
\partial_M( t_2)$.

\noindent \textbf{Case~\ref{case::12}}~Suppose that $\mathcal{C}\rhd t_1\overto{(\mathcal{C}'\cup \mathcal{C},\eta)}t_1'$ by application of $\it TR$ since $t_1\overto{(\mathcal{C}',\eta)}t_1'$. By assumption $t_1\simeq_{rbr}t_2$ implies that $t_2\overto{(\mathcal{C}',\eta)}t_2'$ and $t_1'\simeq_{br}t_2'$. Therefore, by application of $\it TR$, $\mathcal{C}\rhd t_2\overto{(\mathcal{C}'\cup \mathcal{C},\eta)}t_2'$, and $t_1'\simeq_{br}t_2'$ concludes that $\mathcal{C}\rhd t_1\simeq_{rbr}\mathcal{C}\rhd t_2$.

\end{proof}
\section{Soundness of {\it RCNT} axiomatization}\label{sec::soundness}
As two rooted branching computed network bisimilar terms are also rooted branching reliable computed network bisimilar, the soundness of axioms which are in common with the lossy setting are established \cite{FatemehFI10}. Thus, to prove Theorem \ref{The::soundnessCNT}, it suffices to prove the soundness of each new axiom in comparison with the lossy setting, i.e., ${\it Dep}_{0-7}$, ${\it TRes}_{1-5}$, ${\it LM}_{1,2}'$, and $T_1$, modulo rooted branching reliable computed network bisimilarity.

We focus on the soundness of ${\it Dep}_{0}$ and $T_1$, as the soundness of the remaining axioms can be argued in a similar fashion. To prove ${\it Dep}_{0}$, we show that both sides of the axiom satisfy the transfer conditions of Definition \ref{Def::rbrbism}. 
Three cases can be distinguished. In following cases, for the sake of brevity, we write $X$ for $\rec\mathfrak{Q}\cdot\sum_{\mathfrak{m}'\not\in{\it Message}(t,\emptyset)}(\{\},\nrcv(\mathfrak{m}')).\mathfrak{Q}+\ell:\mathfrak{Q}:t$:
\begin{enumerate}
\item Assume $\oc t\cc_\ell\overto{(\mathcal{C}[\ell/?],\nsnd(\mathfrak{m},\ell))}\oc t'\cc_\ell$ since $t\overto{\mathcal{C},\snd(\mathfrak{m})}t'$ by application of $\it Snd$. By application of ${\it Inter}_1'$, $\ell:X:t\overto{(\mathcal{C}[\ell/?],\nsnd(\mathfrak{m},\ell))}\oc t'\cc_\ell$. Then, by application of $\it Rec$ and $\it Choice$, $X\overto{(\mathcal{C}[\ell/?],\nsnd(\mathfrak{m},\ell))}\oc t'\cc_\ell$.
\item  Assume $\oc t\cc_\ell\overto{(\mathcal{C}[\ell/?]\cup\{?\conn \ell\},\nrcv(\mathfrak{m}))}\oc t'\cc_\ell$ since $t\overto{\mathcal{C},\rcv(\mathfrak{m})}t'$ by application of ${\it Rcv}_1$. Thus by application of ${\it Inter}_2'$, $\ell:X:t\overto{(\mathcal{C}[\ell/?]\cup\{?\conn \ell\},\nrcv(\mathfrak{m}))}\oc t'\cc_\ell$.  Then, by application of $\it Rec$ and $\it Choice$, $X\overto{(\mathcal{C}[\ell/?]\cup\{?\conn \ell\},\nrcv(\mathfrak{m}))}\oc t'\cc_\ell$.
\item Assume $\oc t\cc_\ell\overto{(\mathcal{C},\nrcv(\mathfrak{m}))}\oc t\cc_\ell$ since $\oc t\cc_\ell\noverto{(\mathcal{C},\rcv(\mathfrak{m})}$ and $\nexists \mathcal{C}'(\oc t\cc_\ell \noverto{(\mathcal{C}',~\nrcv(\mathfrak{m}))}\,\wedge \, \mathcal{C}\preccurlyeq \mathcal{C}')$ by application of ${\it Rcv}_2$. Two cases can be distinguished: \begin{itemize}
\item Assume $t\noverto{\rcv(\mathfrak{m})}$, and consequently $m\in{\it Message}(t)$. Thus, by application of $\it Rec$, $\it Choice$ and $\it Prefix$, $X\overto{(\mathcal{C},\nrcv(\mathfrak{m})}X$, where $\mathcal{C}=\{\}$.
\item $t\overto{(\mathcal{C}'',\rcv(\mathfrak{m}))}t'$ for some $t'$, and consequently $m\not\in{\it Message}(t)$. Thus, the assumption $\oc t\cc_\ell\noverto{(\mathcal{C},\rcv(\mathfrak{m})}$ and $\nexists \mathcal{C}'(\oc t\cc_\ell \noverto{(\mathcal{C}',~\nrcv(\mathfrak{m}))}\,\wedge \, \mathcal{C}\preccurlyeq \mathcal{C}')$ imply that $?\nconn \ell\in\mathcal{C}$ while $?\conn \ell\in\mathcal{C}''$ due to application of ${\it Rcv}_1$. Then by application of ${\it Inter}_3'$, $\ell:X:t\overto{(\mathcal{C}[\ell/?]\cup\{?\nconn \ell\},\nrcv(\mathfrak{m}))}\oc t'\cc_\ell$.  Hence, $X\overto{(\mathcal{C}[\ell/?]\cup\{?\nconn \ell\},\nrcv(\mathfrak{m}))}\oc t'\cc_\ell$ by application of $\it Rec$ and $\it Choice$.
\end{itemize}
\end{enumerate}
We focus on the soundness of $T_1$. The only transition that the terms $(\mathcal{C}',\eta).((\mathcal{C}_1,\eta).t+(\mathcal{C}_2,\eta).t+t')$ and $(\mathcal{C}',\eta).((\mathcal{C},\eta).t+t')$ in $T_1$ can do is $\overto{(\mathcal{C}',\eta)}$ and the resulting terms $(\mathcal{C}_1,\eta).t+(\mathcal{C}_2,\eta).t+t'$ and $(\mathcal{C},\eta).t+t'$ are branching reliable computed network bisimilar, witnessed by the relation $\mathcal{R}$ constructed as follows:\[\mathcal{R}=\{((\mathcal{C}_1,\eta).t+(\mathcal{C}_2,\eta).t+t',(\mathcal{C},\eta).t+t'),(t,t)\,\mid \, t\in{\it RCNT}\}\]The pair $((\mathcal{C}_1,\eta).t+(\mathcal{C}_2,\eta).t+t',(\mathcal{C},\eta).t+t')$ satisfies the transfer conditions of Definition \ref{Def::brbism}. Because every initial transition that $(\mathcal{C}_1,\eta).t+(\mathcal{C}_2,\eta).t+t'$ can perform owing to $t'$, $(\mathcal{C},\eta).t+t'$ can perform too. If $(\mathcal{C}_1,\eta).t+(\mathcal{C}_2,\eta).t+t'$ can perform a $(\mathcal{C}_1,\eta)$ or $(\mathcal{C}_2,\eta)$-transition, $ (\mathcal{C},\eta).t+t'$ can also perform it by application of $\it Exe$. Vice versa, if $(\mathcal{C},\eta).t+t'$ can perform a $(\mathcal{C},\eta)$-transition, then as $\mathcal{C}_1$ and $\mathcal{C}_2$ form a partitioning of $\mathcal{C}$, $(\mathcal{C}_1,\eta).t+(\mathcal{C}_2,\eta).t+t'$ can perform a corresponding $(\mathcal{C}_1,\eta)$- or $(\mathcal{C}_2,\eta)$-transition.
\section{Completeness of  {\it RCNT} axiomatization}\label{sec::completeness}
To define {\it RCNT} terms with a finite-state behavior, we borrow the syntactical restriction of \cite{FatemehFI10} on recursive terms $\rec\mathfrak{A}\cdot t$, following the approach of~\cite{BaetenBravetti}. We consider so-called \textit{finite-state Reliable Computed Network Theory} ({\it RCNT$_f$}), obtained by
restricting recursive terms $\rec \mathfrak{A}\cdot t$ to those that of which the bound network names do not occur in the scope of parallel, communication merge, left merge, hide, encapsulation and abstraction operators in $t$.

We follow the corresponding proof of \cite{FatemehFI10} to prove Theorem \ref{The::completenessCNT} by performing the following steps:
\begin{enumerate}
\item first we show that each {\it RCNT}$_f$ term can be turned into a
\textit{normal form} consisting of only $0, (\mathcal{C},\eta).t',t'+t''$ and
$\rec\mathfrak{A}\cdot t'$, where $\mathfrak{A}$ is
guarded in $t'$;
\item next we define \textit{recursive network specification}s and prove that each guarded recursive network specification has a
unique solution;
\item finally we show that our axiomatization is ground-complete for normal forms, by showing that equivalent normal forms are
solutions for the same guarded recursive network specification.
\end{enumerate}
Completeness of our axiomatization for all {\it RCNT}$_f$ terms
results from the steps $1$ and $3$. We only discuss the first step, as others are exactly the same as in the lossy setting.

\begin{proposition}\label{pro::normalize}
Each closed term $t$ of {\it RCNT}$_f$ whose network names do not occur in the scope of one of the operators $\parallel,\lm,\mid,(\nu\ell),\tau_M$ or
$\partial_M$ for some $\ell\in{\it Loc}$ and $M\subseteq{\it Msg}$, can be turned into a normal form.
\end{proposition}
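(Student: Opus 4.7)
The plan is to proceed by structural induction on the term $t$, applying the axioms to push all operators that are not part of the normal form grammar toward the leaves and then eliminate them. The base case $0$ is already in normal form, a prefix $(\mathcal{C},\eta).t'$ is handled by normalizing $t'$, and a sum $t_1+t_2$ by normalizing both summands. A free process name $\mathfrak{A}$ with defining equation $\mathfrak{A}\deff t$ is rewritten, using ${\it Unfold}/{\it Fold}$, into $\rec\mathfrak{A}\cdot t$ (with $t$ itself normalized), after which ${\it Ung}$ and ${\it WUng}_{1,2}$ are used to deal with any unguarded occurrences of $\mathfrak{A}$.

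The delicate cases are the deployment operators, the sense operator, and the topology restriction. A term $\oc t\cc_\ell$ is rewritten via ${\it Dep}_0$ to $\rec\mathfrak{Q}\cdot \sum_{\mathfrak{m}'\not\in{\it Message}(t,\emptyset)}(\{\},\nrcv(\mathfrak{m}')).\mathfrak{Q}+\ell:\mathfrak{Q}:t$, reducing matters to normalizing a local deployment $\ell:\mathfrak{Q}:t$. By the grammatical restriction, $t$ is built from $0$, action prefix, choice, sense and process names, so ${\it Dep}_{3,4,5}$ push the local deployment through choice, deadlock, and process names, while ${\it Dep}_{1,2,6}$ eliminate it at prefixed protocol actions, converting them into prefixed \emph{network} actions. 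The sense operator is removed by ${\it Dep}_7$, which produces summands of the form $\mathcal{C}\rhd (\ell:t_3:t_i)$; these are then flattened by ${\it TRes}_{1-5}$, absorbing the restriction into the network constraints of the underlying prefixes (using ${\it TRes}_3$ to commute the restriction past the freshly-introduced recursion when needed).

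For the remaining operators, the crucial hypothesis is that bound network names do not occur in the scope of $\parallel,\lm,\mid,(\nu\ell),\tau_M,\partial_M$. This means that any subterm under such an operator is already closed under the recursion structure, so after normalizing its children by the induction hypothesis, one can apply the expansion laws ${\it Br},{\it LM}_{1',2',2,3},{\it S}_{1-4},{\it Sync}_{1-3}$ repeatedly to eliminate $\parallel,\lm,\mid$ at the top, and the axioms ${\it Res}_{1-3},{\it Abs}_{1-4},{\it Ecp}_{1-4}$ (together with the Hiding axiom for recursion) to push $(\nu\ell),\tau_M,\partial_M$ inward until they disappear at prefixed actions and $0$. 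The absence of recursion variables inside these operators guarantees that these expansions produce a finite term.

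The main obstacle is the joint handling of the deployment axioms and the recursion they introduce: after ${\it Dep}_0$ the body $\ell:\mathfrak{Q}:t$ may contain nested sense, topology restriction, and process-name subterms, and one must argue that iterated application of ${\it Dep}_{1\text{--}7}$ and ${\it TRes}_{1\text{--}5}$ terminates with a term that is a sum of $(\mathcal{C},\eta)$-prefixed actions whose continuations are either $\mathfrak{Q}$ or previously-normalized recursion variables. Termination follows from the syntactic restriction defining {\it RCNT}$_f$, from the finiteness of ${\it Message}(t,\emptyset)$ (which bounds the sum in ${\it Dep}_0$), and from the fact that ${\it Dep}_5$ together with the guardedness of recursion prevents infinite regress when expanding nested process names; once these invariants are established, the normalization procedure is a straightforward, terminating rewriting.
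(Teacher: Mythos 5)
Your proposal is correct and follows essentially the same route as the paper's proof: structural induction over the term syntax, using ${\it Dep}_{0\text{--}7}$ (via the local deployment operator introduced by ${\it Dep}_0$) together with ${\it TRes}_{1\text{--}5}$ to eliminate deployment and sense, and the expansion laws for $\parallel,\lm,\mid,(\nu\ell),\tau_M,\partial_M$, whose termination is guaranteed by the {\it RCNT}$_f$ restriction. The only cosmetic difference is that the paper keeps a bare process name as a base case already in normal form rather than immediately folding it into a $\rec$ binder, and your sketch is in fact more explicit than the paper's about why the rewriting terminates.
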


We prove this by structural induction over the syntax of terms $t$ (possibly open). The base cases of induction for $t\equiv 0$ or $t\equiv \mathfrak{A}$ are trivial because they are in normal form already. The inductive cases of the induction are the following ones:
\begin{itemize}
\item if $t\equiv \oc 0\cc_\ell$, then by application of ${\it Dep}_{0,4}$ and ${\it Ch}_1$ we have $t=\rec\mathfrak{Q}\cdot\sum_{\mathfrak{m}'\not\in{\it Msg}}(\{\},\nrcv(\mathfrak{m}')).\mathfrak{Q} $, which is in normal form.
\item if $t\equiv \oc \alpha.t'\cc_\ell$ or $t\equiv \oc t'+t''\cc_\ell$ or $\oc{\it sense}(\ell',t',t'')\cc_\ell$ or $\oc \mathfrak{A}\cc_\ell$, then $t$ can be
turned into a normal form by application of
axioms ${\it Dep}_{0-5,6,7}$ and induction over $\oc t'\cc_\ell$ and
$\oc t''\cc_\ell$.
\item if $t\equiv (\mathcal{C},\eta).t'$ or $t\equiv t'+t''$, then $t$ can be
turned into normal form by induction over $t'$ and $t''$.
\item the other cases can be treated in the same way as in \cite{FatemehFI10}.
\end{itemize}

\section{Proofs of Section \ref{sec::precon}}\label{sec::preorder}
We first prove Theorem \ref{Th::preorder} which indicates that the refinement relation is a preorder relation and has the precongruence property, and then we discuss the proof of Proposition \ref{Pro::rules}.

\subsection{Proof of Theorem \ref{Th::preorder}}
We first show that the refinement relation is a preorder relation and then discuss its precongruence property. To prove that refinement is a preorder, we must show that it is reflexive and transitive. As it is trivial that Definition \ref{Def::refine} is reflexive, we focus on its transitivity property.

Regrading the well-formedness conditions imposed on {\it RCNT} terms, the transitivity property of our refinement relation, i.e., $t_1\sqsubseteq t_2$ and $t_2\sqsubseteq s$ implies that $t_1\sqsubseteq s$, can be only proved when $t_1$ and $t_2$ have no prefixed-actions with a multi-hop network constraint. For such terms, Definition \ref{Def::refine} enforces they mimic the behavior of each other by the first and third conditions. In other words, for reliable computed network terms with no prefixed-actions with multi-hop network constraints, refinement and strong bisimulation of~\cite{plotkin81} coincide.

\begin{lemma}[Transitive property]\label{Lem::trans}$t_1\sqsubseteq t_2$ and $t_2\sqsubseteq s$ implies that $t_1\sqsubseteq s$.
\end{lemma}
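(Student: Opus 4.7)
The plan is to establish transitivity by composing the two refinement witnesses. Pick parameterized families $\mathcal{R}^1_\bullet$ and $\mathcal{R}^2_\bullet$ with $t_1\,\mathcal{R}^1_{\{~\}}\,t_2$ and $t_2\,\mathcal{R}^2_{\{~\}}\,s$, and define
\[
\mathcal{R}_\mathcal{C} \;=\; \{(u_1,u_s) \mid \exists u_2\,(u_1\,\mathcal{R}^1_\mathcal{C}\,u_2 \,\wedge\, u_2\,\mathcal{R}^2_\mathcal{C}\,u_s)\}.
\]
Clearly $t_1\,\mathcal{R}_{\{~\}}\,s$, so it remains to show $\mathcal{R}$ is a refinement relation. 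I expect the argument to rest on an auxiliary \emph{monotonicity} property for the $\mathcal{R}^i$: if $u\,\mathcal{R}^i_\mathcal{C}\,u'$ and $\mathcal{C}\cup\mathcal{C}''\in\mathbb{C}^v({\it Loc})$, then $u\,\mathcal{R}^i_{\mathcal{C}\cup\mathcal{C}''}\,u'$; this is needed whenever one component stays put while the parameter on the other side grows.

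Next I would verify the three transfer conditions of Definition~\ref{Def::refine} by case analysis. Suppose $(u_1,u_s)\in\mathcal{R}_\mathcal{C}$ with intermediate $u_2$. If $u_1\overto{(\mathcal{C}',\eta)}u_1'$, applying the first clause to $\mathcal{R}^1$ yields three sub-cases, but the hypothesis that neither $t_2$ nor any of its descendants contains a multi-hop prefix rules out the $(\mathcal{M},\iota)$-matching branch. In the surviving two sub-cases, either $u_2$ stays put (and monotonicity promotes $u_2\,\mathcal{R}^2_{\mathcal{C}\cup\mathcal{C}'}\,u_s$, so the $\tau$-absorption option closes directly) or $u_2\overto{(\mathcal{C}',\eta)}u_2'$, and I relay the transition through the corresponding clause of $\mathcal{R}^2$, distributing the three outcomes through $\mathcal{R}$. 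The dual third transfer condition, when $u_s\overto{(\mathcal{C}',\eta)}u_s'$, is handled symmetrically: first use $\mathcal{R}^2$ to find a matching move $u_2\overto{(\mathcal{C}'',\eta)}u_2'$, then push it back through $\mathcal{R}^1$ to obtain a matching move of $u_1$, composing the resulting pairs.

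The main obstacle lies in the second clause, where $u_s\overto{(\mathcal{M},\iota)}u_s'$. Invoking $\mathcal{R}^2$'s second condition produces $u_2\xRightarrow{\mathcal{D}}u_2''\overto{(\mathcal{D}',\iota)}u_2'$ together with the desired $\mathcal{R}^2$-pairs. The difficulty is lifting this trace to $u_1$. My plan is to induct on the length of the weak transition $u_2\xRightarrow{\mathcal{D}}u_2''$: at each individual step $u_2^-\overto{(\mathcal{E},\tau)}u_2^+$, I apply the third clause of $\mathcal{R}^1$ to the paired state on the $u_1$ side to obtain $u_1^-\overto{(\mathcal{E},\tau)}u_1^+$, preserving the composite relation and accumulating the constraint $\mathcal{E}$; after the induction closes with $u_1''\,\mathcal{R}^1_{\mathcal{C}\cup\mathcal{D}}\,u_2''$, the final ``hard'' step $u_2''\overto{(\mathcal{D}',\iota)}u_2'$ is lifted by the same clause, delivering the required weak-then-strong transition $u_1\xRightarrow{\mathcal{D}}u_1''\overto{(\mathcal{D}',\iota)}u_1'$. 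The most delicate bookkeeping is keeping the constraints on the two sides aligned so that the $\mathcal{M}$-satisfaction promised by $\mathcal{R}^2$ survives in the composite, which is where the monotonicity lemma is invoked repeatedly to absorb slack between the $\mathcal{R}^1$- and $\mathcal{R}^2$-parameters.
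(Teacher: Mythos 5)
Your proposal follows essentially the same route as the paper's proof: both take the relational composition of the two witnessing families, $\mathcal{R}_{\mathcal{C}}=\mathcal{R}^1_{\mathcal{C}}\circ\mathcal{R}^2_{\mathcal{C}}$, verify the three transfer conditions of Definition~\ref{Def::refine} by case analysis, and handle the hard clause $s\overto{(\mathcal{M},\iota)}s'$ by lifting the weak transition $t_2'\xRightarrow{\mathcal{C}'}t_2'''\overto{(\mathcal{C}'',\iota)}t_2''$ back to the $t_1$ side (your explicit induction on the length of $\xRightarrow{\mathcal{C}'}$ is just a more careful rendering of the paper's one-line appeal to ``every transition of $t_2'$ is mimicked by $t_1'$''). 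The one point where you diverge is the auxiliary monotonicity lemma. The paper does not need it: it observes that, because $t_1$ and $t_2$ contain no prefixed actions with multi-hop constraints, Definition~\ref{Def::refine} forces $\mathcal{R}^1$ to behave as a strong bisimulation --- only the mimicking sub-cases of the first and third clauses can fire --- so the $\tau$-absorption branch you patch with monotonicity never arises for $\mathcal{R}^1$, and the parameters on the two sides advance in lockstep to $\mathcal{C}\cup\mathcal{C}'$ automatically. This matters because the monotonicity property as you state it is doubtful: $\mathcal{C}\models\mathcal{M}$ is an existential condition over $\Gamma(\mathcal{C})$ and hence anti-monotone in $\mathcal{C}$, so the monotone closure of a refinement family need not again satisfy the transfer conditions. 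Discarding that branch outright, as the paper does, removes the need for the lemma without disturbing the rest of your argument.
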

\begin{proof}
Assume sets of refinement relations $\mathcal{R}_\mathcal{C}^1$ and $\mathcal{R}_\mathcal{C}^2$ witnessing $t_1\sqsubseteq t_2$ and $t_2\sqsubseteq s$, respectively. We construct a set of refinement relations $\mathcal{R}_\mathcal{C}'=\{(t_1',s')\,\mid\, (t_2',s')\in \mathcal{R}_\mathcal{C}^2\,\wedge\, t_1'\,\mathcal{R}_\mathcal{C}^1\,t_2\}$ for any well-formed network constraint $\mathcal{C}$. We show that $t_1' \mathcal{R}_{\mathcal{C}}' s'$ satisfies the transfer conditions of Definition \ref{Def::refine}.

Assume $t_1'\overto{(\mathcal{C}',\eta)}t_1''$. By assumption $t_1' \mathcal{R}_{\mathcal{C}}^1 t_2'$ implies that $t_2'\overto{(\mathcal{C}',\eta)}t_2''$ such that $t_1'' \mathcal{R}_{\mathcal{C}\cup \mathcal{C}'}^1 t_2''$. By the assumption $t_2' \mathcal{R}_{\mathcal{C}}^2 s'$, there are three cases to consider:\begin{itemize}
\item $\eta=\tau$ and $t_2''~ \mathcal{R}_{\mathcal{C}\cup\mathcal{C}'}^2~ s' $ with $\mathcal{C}\cup\mathcal{C}'\models \mathcal{M}$. Thus by construction, $t_1''~ \mathcal{R}_{\mathcal{C}\cup\mathcal{C}'}'~ s'$.
\item There is an $s''$ such that $s'\overto{(\mathcal{C},\eta)}s''$, and $t_2''~ \mathcal{R}_{\mathcal{C}\cup\mathcal{C}'}^2~ s''$, and $\mathcal{C}\cup\mathcal{C}'\models \mathcal{M}$. Thus by construction, $t_1''~ \mathcal{R}_{\mathcal{C}\cup\mathcal{C}'}'~ s'$.
\item $\eta=\iota$ for some $\iota\in{\it IAct}\cup\{\tau\}$ and there is an $s''$ such that $s'\overto{(\mathcal{M},\iota)}s''$ with $t_2''~ \mathcal{R}_{\mathcal{C}\cup\mathcal{C}'}^2~ s''$. Thus by construction, $t_1''~ \mathcal{R}_{\mathcal{C}\cup\mathcal{C}'}'~ s''$.
\end{itemize}

Assume $s'\overto{(\mathcal{C}',\eta)}s''$. Hence $t_2'~ \mathcal{R}_{\mathcal{C}\cup\mathcal{C}'}^2~ s'$ implies that there is a $t_2''$ such that $t_2'\overto{(\mathcal{C}',\eta)}t_2''$ with $t_2''~ \mathcal{R}_{\mathcal{C}\cup\mathcal{C}'}^2~ s''$. By assumption $t_1' \mathcal{R}_{\mathcal{C}}^1 t_2'$ implies that $t_1'\overto{(\mathcal{C}',\eta)}t_1''$ such that $t_1'' \mathcal{R}_{\mathcal{C}\cup \mathcal{C}'}^1 t_2''$, and consequently $t_1''~ \mathcal{R}_{\mathcal{C}\cup\mathcal{C}'}'~ s''$.

Assume $s\overto{(\mathcal{M},\iota)}s'$. Therefore $t_2'~ \mathcal{R}_{\mathcal{C}\cup\mathcal{C}'}^2~ s'$ implies that there are $t_2'''$ and $t_2''$ such that $t_2'\xRightarrow{\mathcal{C}'}t_2'''\overto{(\mathcal{C}'',\iota)}t_2''$ with $t_2'''~ \mathcal{R}_{\mathcal{C}\cup\mathcal{C}'}^2~ s'$ and $t_2''~ \mathcal{R}_{\mathcal{C}\cup\mathcal{C}'\cup\mathcal{C}''}^2~ s''$. As every transition of $t_2'$ is mimicked by $t_1'$, there are $t_1'''$ and $t_1''$ such that $t_1'\xRightarrow{\mathcal{C}'}t_1'''\overto{(\mathcal{C}'',\iota)}t_1''$ with $t_1'''~ \mathcal{R}_{\mathcal{C}\cup\mathcal{C}'}^1~ t_2'''$ and $t_1''~ \mathcal{R}_{\mathcal{C}\cup\mathcal{C}'\cup\mathcal{C}''}^1~ t_2''$. Concluding, there are $t_1'''$ and $t_1''$ such that $t_1'\xRightarrow{\mathcal{C}'}t_1'''\overto{(\mathcal{C}'',\iota)}t_1''$ with $t_1'''~ \mathcal{R}_{\mathcal{C}\cup\mathcal{C}'}'~ s'$ and $t_1''~ \mathcal{R}_{\mathcal{C}\cup\mathcal{C}'\cup\mathcal{C}''}'~ s''$.
\end{proof}

\begin{theorem}\label{Th::axiprecong}Refinement is a precongruence for terms with respect to the {\it RCNT} operators.
\end{theorem}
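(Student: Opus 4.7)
The plan is to proceed by case analysis over the RCNT operators, in close analogy with the congruence proof of $\simeq_{rbr}$ given in Theorem \ref{Theo::congruence}. For each operator $\mathfrak{o}$ and each assumed witness family $\{\mathcal{R}_\mathcal{C}\}$ of refinement relations for the argument terms, I would construct a family $\{\mathcal{R}'_\mathcal{C}\}$ of refinement relations for the compound term by lifting the constructions used in Section \ref{sec::cong} through the network-constraint parameter, and then verify the three transfer conditions of Definition \ref{Def::refine}.

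The first observation that makes the argument tractable is that the well-formedness discipline for {\it RCNT} terms forbids applying parallel composition, left merge, communication merge, deployment, local deployment, hide, abstraction, encapsulation and topology restriction to terms containing $(\mathcal{M},\iota)$-prefixed summands. Consequently, for these operators both $t_1$ and $t_2$ may be assumed to be pure {\it RCNT} terms, the second bullet of Definition \ref{Def::refine} never fires, and refinement collapses to strong bisimulation. The witness relations constructed in the proof of Theorem \ref{Theo::congruence}, re-indexed pointwise by the accumulated network constraint $\mathcal{C}$, then serve directly and the verification reduces to the bisimulation transfer conditions already carried out there (including the delicate input-enabledness cases for $\oc \cdot \cc_\ell$ and $\ell:t:\cdot$ via rules ${\it Rcv}_2$, ${\it Inter}_3'$ and ${\it Sen}_{3,4}$).

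The remaining operators, namely prefix, choice and recursion, are those that may be applied on top of terms carrying $(\mathcal{M},\iota)$-prefixed summands, and must be handled directly. For prefix $(\mathcal{C}',\eta).(-)$, the singleton relation $\{((\mathcal{C}',\eta).t_1,(\mathcal{C}',\eta).t_2)\}$ joined with $\mathcal{R}_{\mathcal{C}\cup\mathcal{C}'}$ witnesses the precongruence, since the unique initial transition is matched identically and the argument refinement takes over afterwards. For choice, the witness $\{(t_1+t_1',\,t_2+t_2') \mid t_1\mathcal{R}_\mathcal{C} t_2,\, t_1'\mathcal{R}_\mathcal{C} t_2'\}\cup\mathcal{R}_\mathcal{C}$ dispatches each summand transition using the corresponding argument refinement; transitions of the form $s+s'\overto{(\mathcal{M},\iota)}\cdot$ are matched summand-wise by invoking the second and third bullets of Definition \ref{Def::refine} on the appropriate argument. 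For recursion $\rec\mathfrak{A}\cdot t$, I would appeal to rule $\it Rec$ and a standard guarded-fixed-point argument, exploiting the fact that the bound recursion variable occurs only guarded to ensure uniqueness of the solution (as in the completeness proof of Section \ref{sec::completeness}).

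The main obstacle will be coordinating the accumulated constraint parameter $\mathcal{C}$ of $\mathcal{R}_\mathcal{C}$ with the $\xRightarrow{\mathcal{C}'}$-closure appearing in the second transfer condition: when a sequence of $\tau$-steps is used to match a specification action $s\overto{(\mathcal{M},\iota)}s'$, the accumulated $\mathcal{C}'$ must still satisfy $\mathcal{M}$ and well-formedness must be preserved at each step. I expect the cleanest route is to prove, as an auxiliary lemma analogous to Lemma \ref{lem::property}, that if $t\,\mathcal{R}_\mathcal{C}\,s$ and $t\xRightarrow{\mathcal{C}'} t'$ with $\mathcal{C}\cup\mathcal{C}'$ well-formed, then $t'\,\mathcal{R}_{\mathcal{C}\cup\mathcal{C}'}\,s$ provided $\mathcal{C}\cup\mathcal{C}'\models\mathcal{M}$, and dually on the right. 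Once this lemma is in place, the cases for prefix, choice and recursion close immediately, and the reduction to bisimulation congruence handles the remaining operators, completing the precongruence proof.
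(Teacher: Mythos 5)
Your proposal matches the paper's proof in its essential structure: the same well-formedness observation (that $(\mathcal{M},\iota)$-prefixed summands can only occur under prefix, choice and recursion) reduces all remaining operators to the congruence property of strong bisimulation, and the choice operator is verified directly against the three transfer conditions of Definition \ref{Def::refine} with essentially the same witness family $\mathcal{R}_\mathcal{C}$ built from the argument relations together with the residual pairs $(t_i',s_1+s_2)$. If anything your sketch is slightly more thorough than the paper's argument, which handles prefix only under the assumption that the arguments carry no multi-hop constraints and omits recursion entirely, so your direct treatment of those two cases and the proposed $\xRightarrow{\mathcal{C}'}$-accumulation lemma are refinements of, not deviations from, the paper's route.
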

\begin{proof}
Assume that $t_1\sqsubseteq s_1$ and $t_2\sqsubseteq s_2$. We first show that $t_1+t_2\sqsubseteq s_1+s_2$.
There are sets of refinement relations $\mathcal{R}_\mathcal{C}^1$ and $\mathcal{R}_\mathcal{C}^2$ witnessing $t_1\sqsubseteq s_1$ and $t_2\sqsubseteq s_2$, respectively. We construct a set of refinement relations $\mathcal{R}_\mathcal{C}=\mathcal{R}_\mathcal{C}^1 \cup \mathcal{R}_\mathcal{C}^2\cup\{(t_1',s_1+s_2)\,\mid\,t_1'~\mathcal{R}_\mathcal{C}^1~s_1\}\cup\{(t_2',s_1+s_2)\,\mid\,t_2'~\mathcal{R}_\mathcal{C}^2~s_2\}$ for any well-formed network constraint $\mathcal{C}$. We show that $ \mathcal{R}_{\{~\}}=\{(t_1+t_2,s_1+s_2)\}\cup\mathcal{R}_{\{~\}}^1 \cup \mathcal{R}_{\{~\}}^2 $ satisfies the transfer conditions of Definition \ref{Def::refine}.

Assume $t_1+t_2\overto{(\mathcal{C}',\eta)}t_1'$ owing to  $t_1\overto{(\mathcal{C}',\eta)}t_1'$. By assumption $t_1~ \mathcal{R}_{\{~\}}^1~s_1$. Three cases can be considered:
\begin{itemize}
\item $\eta=\tau$ and $t_1'~ \mathcal{R}_{\mathcal{C}\cup\mathcal{C}'}~ s_1 $, with $\mathcal{C}\cup\mathcal{C}'\models \mathcal{M}$. Thus by construction $t_1'~ \mathcal{R}_{\mathcal{C}\cup\mathcal{C}'}~ s_1+s_2 $.
\item There is an $s_1'$ such that $s_1\overto{(\mathcal{C},\eta)}s_1'$, and $t_1'~ \mathcal{R}_{\mathcal{C}\cup\mathcal{C}'}^1~ s_1'$, and $\mathcal{C}\cup\mathcal{C}'\models \mathcal{M}$. Thus by sos rule ${\it Choice}$, there is an $s_1'$ such that $s_1+s_2\overto{(\mathcal{M},\iota)}s_1'$ and by construction $t_1'~ \mathcal{R}_{\mathcal{C}\cup\mathcal{C}'}~ s_1' $.
\item $\eta=\iota$ for some $\iota\in{\it IAct}\cup\{\tau\}$ and there is an $s_1'$ such that $s_1\overto{(\mathcal{M},\iota)}s_1'$ with $t_1'~ \mathcal{R}_{\mathcal{C}\cup\mathcal{C}'}^1~ s_1'$. Thus by the sos rule ${\it Choice}$, there is an $s_1'$ such that $s_1+s_2\overto{(\mathcal{M},\iota)}s_1'$ and by construction $t_1'~ \mathcal{R}_{\mathcal{C}\cup\mathcal{C}'}~ s_1'$.
\end{itemize}
The same discussion holds if $t_1+t_2\overto{(\mathcal{C}',\eta)}t_2'$ owing to $t_2\overto{(\mathcal{C}',\eta)}t_2'$.

Assume $s_1+s_2\overto{(\mathcal{M},\iota)}s_1'$ owing to $s_1\overto{(\mathcal{M},\iota)}s_1'$. By assumption $t_1~\mathcal{R}_\mathcal{C}^1~s_1$ implies there are $t_1''$ and $t_1'$ such that $t_1\xRightarrow{\mathcal{C}'}t_1''\overto{(\mathcal{C}'',\iota)}t_1'$ with $t_1''~ \mathcal{R}_{\mathcal{C}\cup\mathcal{C}'}^1~ s_1$ and $t_1'~ \mathcal{R}_{\mathcal{C}\cup\mathcal{C}'\cup\mathcal{C}''}^1~ s_1'$. Consequently $t_1+t_2\xRightarrow{\mathcal{C}'}t_1''\overto{(\mathcal{C}'',\iota)}t_1'$ with $t_1''~ \mathcal{R}_{\mathcal{C}\cup\mathcal{C}'}~ s_1+s_2$ and $t_1'~ \mathcal{R}_{\mathcal{C}\cup\mathcal{C}'\cup\mathcal{C}''}~ s_1'$. The same discussion holds  when $s_1+s_2\overto{(\mathcal{M},\iota)}s_2'$ owing to $s_2\overto{(\mathcal{M},\iota)}s_2'$.

Assume $s_1+s_2\overto{(\mathcal{C},\eta)}s_1'$ owing to $s_1\overto{(\mathcal{C},\eta)}s_1'$. By assumption $t_1~\mathcal{R}_\mathcal{C}^1~s_1$ implies there is a $t_1'$ such that $t_1\overto{(\mathcal{C}',\eta)}t_1'$ with  $t_1'~ \mathcal{R}_{\mathcal{C}\cup\mathcal{C}'}^1~ s_1'$. Hence, there is a $t_1'$ such that $t_1+t_2\overto{(\mathcal{C}',\eta)}t_1'$ with  $t_1'~ \mathcal{R}_{\mathcal{C}\cup\mathcal{C}'}~ s_1'$.

The above discussions together yield $t_1+t_2\sqsubseteq s_1+s_2$.

If $s_1$ and $s_2$ have no prefixed-action with a multi-hop network constraint, then we must show the following cases:
\begin{enumerate}
\item $(\mathcal{C},\eta).t_1\sqsubseteq (\mathcal{C},\eta).t_2$;
\item $(\nu\ell).t_1\sqsubseteq (\nu\ell).t_2$;
\item $t_1\parallel t_2 \sqsubseteq s_1\parallel s_2$;
\item $t_1\lm t_2 \sqsubseteq s_1\lm s_2$;
\item $t_1\mid t_2 \sqsubseteq s_1\mid s_2$;
\item $\partial_M(t_1)\sqsubseteq \partial_M(t_2)$;
\item $\tau(t_1)\sqsubseteq \tau(t_2)$;
\item $\mathcal{C}\rhd t_1\sqsubseteq \mathcal{C}\rhd t_2$;
\end{enumerate}
The above cases result from the congruence property of strong bisimilarity.

\end{proof}

The proof of Theorem \ref{Th::preorder} is an immediate result of Lemma \ref{Lem::trans} and Theorem \ref{Th::axiprecong}.
\subsection{Proof of Proposition \ref{Pro::rules}}
First we show that $(\mathcal{C},\tau).t\sqsubseteq (\mathcal{M},\iota).s \Rightarrow \mathcal{C}\rhd t \sqsubseteq (\mathcal{M},\iota).s \,\wedge\,\mathcal{C}\models\mathcal{M}$. The only transition $(\mathcal{C},\tau).t$ can make is $(\mathcal{C},\tau).t\overto{(\mathcal{C},\tau)}t$. As $\iota\neq\tau$, according to Definition \ref{Def::refine}, $t~\mathcal{R}_\mathcal{C}~(\mathcal{M},\iota).s$. We construct $\mathcal{R}_{\{~\}}'=\mathcal{R}_\mathcal{C}$ and show that it induces $ \mathcal{C}\rhd t \sqsubseteq (\mathcal{M},\iota).s$. This is trivial as any transition $ \mathcal{C}\rhd t \overto{(\mathcal{C}\cup\mathcal{C}',\eta)}t'$ is the result of $t \overto{(\mathcal{C}',\eta)}t'$. The reverse of the rule can be argued in a similar fashion.

Now, we show that $
(\mathcal{C},\iota).t\sqsubseteq (\mathcal{M},\iota).s \Rightarrow \mathcal{C}\rhd t \sqsubseteq s $. The only transition $(\mathcal{C},\iota).t$ can make is $(\mathcal{C},\iota).t\overto{(\mathcal{C},\tau)}t$. As $\iota\neq\tau$ and $\iota\in{\it IAct}$, according to Definition \ref{Def::refine}, $t~\mathcal{R}_\mathcal{C}~s$. We construct $\mathcal{R}_{\{~\}}'=\mathcal{R}_\mathcal{C}$ and show that it induces $ \mathcal{C}\rhd t \sqsubseteq s$. This is trivial as any transition $ \mathcal{C}\rhd t \overto{(\mathcal{C}\cup\mathcal{C}',\eta)}t'$ is the result of $t \overto{(\mathcal{C}',\eta)}t'$. The reverse of the rule can be argued in a similar fashion.

\end{document}